\newif\ifLongVersion\LongVersiontrue
\newif\ifArticle\Articletrue
\newtheorem{fact}{Fact}
\newtheorem{assumption}{Assumption}
\newcommand{\N}{\mathbb{N}}
\newcommand{\universe}{\mathbb{C}}
\newcommand{\vars}{\mathbb{V}}
\newcommand{\preds}{\mathbb{A}}
\newcommand{\isdef}{\stackrel{\scriptscriptstyle{\mathsf{def}}}{=}}
\newcommand{\interv}[2]{[{#1},{#2}]}
\newcommand{\tuple}[1]{\langle {#1} \rangle}
\newcommand{\set}[1]{\{ {#1} \}}
\newcommand{\Set}[1]{\left\{ {#1} \right\}}
\newcommand{\pow}[1]{\mathrm{pow}({#1})}
\newcommand{\dom}[1]{\mathrm{dom}({#1})}
\newcommand{\finsubseteq}{\subseteq_{\mathit{fin}}}
\newcommand{\bigO}{\mathcal{O}}
\newcommand{\twoexptime}{$2\mathsf{EXP}$}
\newcommand{\comps}{\mathcal{C}}
\newcommand{\acomp}{\mathsf{C}}
\newcommand{\interacs}{\mathcal{I}}
\newcommand{\ainterac}{\mathsf{I}}
\newcommand{\interac}[4]{\tuple{{#1}.{\mathit{#2}}, {#3}.{\mathit{#4}}}}
\newcommand{\beh}{\mathcal{B}}
\newcommand{\states}{Q}
\newcommand{\ports}{P}
\newcommand{\smallstate}{q}
\newcommand{\arrow}[2]{\xrightarrow{{\scriptstyle #1}}_{{\scriptstyle #2}}}
\newcommand{\Arrow}[2]{\xRightarrow{{\scriptstyle #1}}_{\raisebox{4pt}{\!$\scriptstyle{#2}$}}}
\newcommand*{\da@rightarrow}{\mathchar"0\hexnumber@\symAMSa 4B }
\newcommand*{\da@leftarrow}{\mathchar"0\hexnumber@\symAMSa 4C }
\newcommand*{\xdashrightarrow}[2][]{%
  \mathrel{%
    \mathpalette{\da@xarrow{#1}{#2}{}\da@rightarrow{\,}{}}{}%
  }%
}
\newcommand{\xdashleftarrow}[2][]{%
  \mathrel{%
    \mathpalette{\da@xarrow{#1}{#2}\da@leftarrow{}{}{\,}}{}%
  }%
}
\newcommand*{\da@xarrow}[7]{%
  \sbox0{$\ifx#7\scriptstyle\scriptscriptstyle\else\scriptstyle\fi#5#1#6\m@th$}%
  \sbox2{$\ifx#7\scriptstyle\scriptscriptstyle\else\scriptstyle\fi#5#2#6\m@th$}%
  \sbox4{$#7\dabar@\m@th$}%
  \dimen@=\wd0 %
  \ifdim\wd2 >\dimen@
    \dimen@=\wd2 %
  \fi
  \count@=2 %
  \def\da@bars{\dabar@\dabar@}%
  \@whiledim\count@\wd4<\dimen@\do{%
    \advance\count@\@ne
    \expandafter\def\expandafter\da@bars\expandafter{%
      \da@bars
      \dabar@ 
    }%
  }%
  \mathrel{#3}%
  \mathrel{%
    \mathop{\da@bars}\limits
    \ifx\\#1\\%
    \else
      _{\copy0}%
    \fi
    \ifx\\#2\\%
    \else
      ^{\copy2}%
    \fi
  }%
  \mathrel{#4}%
}
\newcommand{\Open}[2]{\xdashrightarrow{{\scriptstyle #1}}_{\raisebox{0pt}{\!$\scriptstyle{#2}$}}}
\newcommand{\store}{\nu}
\newcommand{\statemap}{\varrho}
\newcommand{\config}{(\comps,\interacs,\statemap,\store)}
\newcommand{\aconfig}{\gamma}
\newcommand{\configset}{\Gamma}
\newcommand{\errconfigs}{\top}
\newcommand{\ahavoc}{\mathfrak{h}}
\newcommand{\havocrule}{$\mathsf{Havoc}$}
\newcommand{\comp}{\bullet}
\newcommand{\bigcomp}{\scalebox{1.7}{$\comp$}}
\newcommand{\substreq}{\sqsubseteq}
\newcommand{\isubstreq}{\substreq}
\newcommand{\lift}[2]{{#1}\!\!\uparrow^{\scriptscriptstyle{#2}}}
\renewcommand{\smallsetminus}{\!\setminus\!}
\newcommand{\adl}{\textsf{CL}}
\newcommand{\predname}[1]{\mathsf{#1}}
\newcommand{\apred}{\predname{A}}
\newcommand{\bpred}{\predname{B}}
\newcommand{\emp}{\predname{emp}}
\let\Asterisk\undefined
\newcommand{\Asterisk}{\mathop{\scalebox{1.9}{\raisebox{-0.2ex}{$\ast$}}}\hspace*{1pt}}%
\renewcommand{\vec}[1]{\mathbf #1}
\newcommand{\fv}[1]{\mathrm{fv}({#1})}
\newcommand{\spaceform}{\xi}
\newcommand{\pureform}{\pi}
\newcommand{\compin}[2]{{#1}@{#2}}
\newcommand{\company}[1]{\compin{#1}{\_}}
\newcommand{\seplog}{\textsf{SL}}
\newcommand{\unfold}[1]{\Leftarrow_{#1}}
\newcommand{\unfoldrule}{\leftarrow}
\newcommand{\asid}{\Delta}
\newcommand{\chain}[2]{\predname{chain}_{{#1},{#2}}}
\newcommand{\symconfeq}[1]{\simeq_{#1}}
\newcommand{\symconfneq}[1]{\not\simeq_{#1}}
\newcommand{\stageno}{\mathfrak{n}}
\newcommand{\stagemset}{\mathfrak{m}}
\newcommand{\acomm}{\predname{R}}
\newcommand{\primcomms}{\mathfrak{P}}
\newcommand{\localcomms}{\mathfrak{L}}
\newcommand{\new}{\predname{new}}
\newcommand{\delete}{\predname{delete}}
\newcommand{\connect}{\predname{connect}}
\newcommand{\disconnect}{\predname{disconnect}}
\newcommand{\whencomm}[2]{\predname{when}\;#1\;\predname{do}\;#2}
\newcommand{\withcomm}[3]{\predname{with}\;{#1}:{#2}\;\predname{do}\;{#3}\;\predname{od}}
\newcommand{\predtrue}{\predname{true}}
\newcommand{\predfalse}{\predname{false}}
\newcommand{\skipcomm}{\predname{skip}}
\renewcommand{\mod}{~\mathrm{mod}~}
\newcommand{\succj}[3]{{#1}: {#2} \leadsto {#3}}
\newcommand{\abort}[2]{{#1}: {#2} ~\text{\rotatebox[origin=c]{90}{$\leadsto$}}}
\newcommand{\rbr}{{\bf ]\!]}}
\newcommand{\lbr}{{\bf [\![}}
\newcommand{\sem}[2]{{\lbr #1 \rbr}_{\scriptscriptstyle{#2}}}
\newcommand{\lng}{{\langle \! \langle}}
\newcommand{\rng}{{\rangle \! \rangle}}
\newcommand{\llng}{{\langle \!\! \langle}}
\newcommand{\rrng}{{\rangle \!\! \rangle}}
\newcommand{\semcomm}[1]{\lng #1 \rng}
\newcommand{\hoare}[3]{\{ {#1} \} ~{#2}~ \{ {#3} \}}
\newcommand{\modif}[1]{\predname{modif}({#1})}
\newcommand{\hinv}{(\sharp)}
\newcommand{\open}{\mathfrak{o}}
\newcommand{\looserule}{$\mathsf{Loose}$}
\newcommand{\tightrule}{$\mathsf{Tight}$}
\newcommand{\interof}[1]{\Sigma[{#1}]}
\newcommand{\are}{\predname{L}}
\newcommand{\parcomp}{\bowtie}
\newcommand{\lpar}{\{\!\!\{}
\newcommand{\rpar}{\}\!\!\}}
\newcommand{\aenv}{\eta}
\newcommand{\havoctriple}[4]{{#1}\triangleright\lpar{#2}\rpar~{#3}~\lpar{#4}\rpar}
\newcommand{\semlang}[2]{\llng {#1} \rrng({#2})}
\newcommand{\requiv}[1]{\cong_{\scriptstyle{#1}}}
\newcommand{\wopen}[1]{\open{[{#1}]}}
\newcommand{\proj}[2]{{#1}\!\downarrow_{\scriptscriptstyle{#2}}}
\newcommand{\supp}[1]{\mathrm{supp}({#1})}
\newcommand{\iatoms}[1]{\mathrm{inter}({#1})}
\newcommand{\patoms}[1]{\mathrm{preds}({#1})}
\newcommand{\atoms}[1]{\mathrm{atoms}({#1})}
\newcommand{\disabled}[2]{{#1} \dagger {#2}}
\newcommand{\excluded}[2]{{#1} \ddagger {#2}}
\newcommand{\struceq}{\simeq}
\newcommand{\strucgeq}{\succeq}
\newcommand{\inter}{$\Sigma$}
\newcommand{\ui}{$\cup$}
\newcommand{\ue}{$\subset$}
\newcommand{\congr}{$\requiv{}$}
\newcommand{\lu}{$\mathsf{LU}$}
\newcommand{\conseq}{$\mathsf{C}$}
\newcommand{\disr}{$\dagger$}
\newcommand{\supr}{\ii}
\newcommand{\ii}{$\ainterac-$}
\newcommand{\iidisr}{$\ainterac$\disr}
\newcommand{\ie}{$\ainterac+$}
\newcommand{\parr}{$\parcomp$}
\newcommand{\botr}{$\bot$}
\newcommand{\epsilonr}{$\epsilon$}
\newcommand{\front}[2]{\mathcal{F}({#1},{#2})}
\newcommand{\havocline}[4]{
  \UnaryInfC{$\havoctriple{#1}{#2}{#3}{#4}$} }
\newcommand{\binhavocline}[4]{
  \BinaryInfC{$\havoctriple{#1}{#2}{#3}{#4}$}
}
\newcommand{\trihavocline}[4]{
  \TrinaryInfC{$\havoctriple{#1}{#2}{#3}{#4}$}
}
\newcommand{\havoctwolines}[4]{
  \UnaryInfC{${#1}\triangleright~\lpar{#2}\rpar$}
  \noLine
  \UnaryInfC{${#3}~\lpar{#4}\rpar$}
}
\newcommand{\binhavoctwolines}[4]{
  \BinaryInfC{${#1}\triangleright~\lpar{#2}\rpar$}
  \noLine
  \UnaryInfC{${#3}~\lpar{#4}\rpar$}
}
\newcommand{\trihavoctwolines}[4]{
  \TrinaryInfC{${#1}\triangleright~\lpar{#2}\rpar$}
  \noLine
  \UnaryInfC{${#3}~\lpar{#4}\rpar$}
}
\newcommand{\quahavoctwolines}[4]{
  \QuaternaryInfC{${#1}\triangleright~\lpar{#2}\rpar$}
  \noLine
  \UnaryInfC{${#3}~\lpar{#4}\rpar$}
}
\newcommand{\havocfourlines}[4]{
  \UnaryInfC{${#1}$}
  \noLine
  \UnaryInfC{$\triangleright~\lpar{#2}\rpar$}
  \noLine
  \UnaryInfC{${#3}$}
  \noLine
  \UnaryInfC{$\lpar{#4}\rpar$}
}
\newcommand{\interofyxch}{{\Sigma_{y,x}^1}}
\newcommand{\interofzy}{{\Sigma_{z,y}}}
\newcommand{\interoftree}[1]{\interof{ \tree({#1}) }}
\newcommand{\toktoken}{\mathsf{T}}
\newcommand{\toknotok}{\mathsf{H}}
\newcommand{\tokin}{\textit{in}}
\newcommand{\tokout}{\textit{out}}
\newcommand{\stateidle}{\mathit{idle}}
\newcommand{\stateleft}{\mathit{left}}
\newcommand{\stateright}{\mathit{right}}
\newcommand{\stateleafidle}{\mathit{leaf\!\_idle}}
\newcommand{\stateleafbusy}{\mathit{leaf\!\_busy}}
\newcommand{\lrecv}{\mathit{r_\ell}}
\newcommand{\rrecv}{\mathit{r_r}}
\newcommand{\send}{\mathit{s}}
\newcommand{\treeseg}{\predname{tseg}}
\newcommand{\tree}{\predname{tree}}
\newcommand{\treeidle}{\tree_{\stateidle}}
\newcommand{\treenotidle}{\tree_{\neg\stateidle}}
\newcommand{\treestar}{\tree_{\star}}
\newcommand{\codesize}{\footnotesize}
\lstdefinelanguage{JavaScript}{
  keywords={typeof, true, false, catch, function, return, null, catch, switch, var, if, in, while, do, od, else, case, break, when, with, assume},
  ndkeywords={class, export, boolean, throw, implements, import, this},
  sensitive=false,
  comment=[l]{//},
  morecomment=[s]{/*}{*/},
  morecomment=[s]{$}{$},
  morestring=[b]',
  morestring=[b]"
}
\begin{document}

\newtheorem{remark}{Remark}
\newtheorem{example}{Example}
\newtheorem{definition}{Definition}
\newtheorem{theorem}{Theorem}
\newtheorem{lemma}{Lemma}
\newtheorem{proposition}{Proposition}
\newcommand{\proof}[1]{\emph{Proof}. {#1}}
\newcommand{\qed}{$\Box$}

\title{Reasoning about Reconfigurations of Distributed Systems}         

\author{Emma Ahrens\footnote{Emma.Ahrens@rwth-aachen.de, RWTH Aachen
  University, D-52056, Germany} \and Marius
  Bozga\footnote{Marius.Bozga@univ-grenoble-alpes.fr, Univ. Grenoble
  Alpes, CNRS, Grenoble INP, VERIMAG, 38000, France} \and Radu
  Iosif\footnote{Radu.Iosif@univ-grenoble-alpes.fr, Univ. Grenoble
  Alpes, CNRS, Grenoble INP, VERIMAG, 38000, France} \and Joost-Pieter
  Katoen\footnote{katoen@cs.rwth-aachen.de, RWTH Aachen University,
  D-52056, Germany}}

\maketitle

\begin{abstract}
This paper presents a Hoare-style calculus for formal reasoning about
reconfiguration programs of distributed systems. Such programs create
and delete components and/or interactions (connectors) while the
system components change state according to their internal behaviour.
Our proof calculus uses a resource logic, in the spirit of Separation
Logic \cite{Reynolds}, to give local specifications of reconfiguration
actions. Moreover, distributed systems with an unbounded number of
components are described using inductively defined predicates. The
correctness of reconfiguration programs relies on havoc invariants,
that are assertions about the ongoing interactions in a part of the
system that is not affected by the structural change caused by the
reconfiguration. We present a proof system for such invariants in an
assume/rely-guarantee style. We illustrate the feasibility of our
approach by proving the correctness of real-life distributed systems
with reconfigurable (self-adjustable) tree architectures.
\end{abstract}

\section{Introduction}

\paragraph{The relevance of dynamic reconfiguration.}
Dynamic reconfigurable distributed systems are used increasingly as
critical parts of the infrastructure of our digital society,
e.g.\ datacenters, e-banking and social networking. In order to
address maintenance (e.g., replacement of faulty and obsolete network
nodes by new ones) and data traffic issues (e.g., managing the traffic
inside a datacenter \cite{DBLP:journals/comsur/Noormohammadpour18}),
the distributed systems community has recently put massive effort in
designing algorithms for \emph{reconfigurable systems}, whose network
topologies change at runtime \cite{DBLP:journals/sigact/FoersterS19}.
This development provides new impulses to distributed algorithm
design~\cite{DBLP:conf/podc/MichailSS20} and has given rise to
self-adjustable network architectures whose topology reconfigurations
are akin to amendments of dynamic data structures such as splay
trees~\cite{DBLP:conf/infocom/PeresSGA019}. However, reconfiguration
is an important source of bugs, that may result in denial of services
or even data corruption\footnote{E.g., Google reports a cloud failure
  caused by reconfiguration:
  \url{https://status.cloud.google.com/incident/appengine/19007}}. \emph{This
  paper introduces a logical framework for reasoning about the safety
  properties of such systems, in order to prove e.g., absence of
  deadlocks or data races.}

\vspace*{-.5\baselineskip}
\paragraph{Modeling distributed systems.}
In this paper we model distributed systems at the level of abstraction
commonly used in component-based design of large heterogenous systems
\cite{magee1996dynamic}. We rely on a clean separation of (a
finite-state abstraction of) the \emph{behaviour} from the
\emph{coordination} of behaviors, described by complex graphs of
components (nodes) and interactions (edges). Mastering the complexity
of a distributed system requires a deep understanding of the
coordination mechanisms. We distinguish between \emph{endogenous}
coordination, that explicitly uses synchronization primitives in the
code describing the behavior of the components (e.g.\ semaphores,
monitors, compare-and-swap, etc.) and \emph{exogenous} coordination,
that defines global rules describing how the components
interact. These two orthogonal paradigms play different roles in the
design of a system: exogenous coordination is used during high-level
model building, whereas endogenous coordination is considered at a
later stage of development, to implement the model using low-level
synchronization primitives.

Here we focus on \emph{exogenous coordination} of distributed systems,
consisting of an unbounded number of interconnected components, with a
flexible topology, i.e.\ not fixed \emph{\`a priori}. We abstract from
low-level coordination mechanisms between processes such as
semaphores, compare-and-swap operations and the like. Components
behave according to a small set of finite-state abstractions of
sequential programs, whose transitions are labeled with events. They
communicate via interactions (handshaking) modeled as sets of events
that occur simultaneously in multiple components. Despite their
apparent simplicity, these models capture key aspects of distributed
computing, such as message delays and transient faults due to packet
loss. Moreover, the explicit graph representation of the network is
essential for the modeling of dynamic reconfiguration actions.

\vspace*{-.5\baselineskip}
\paragraph{Programming reconfiguration}
The study of dynamic reconfiguration has led to the development of a
big variety of formalisms and approaches to specify the changes to the
structure of a system using e.g., graph-based, logical or
process-algebraic formalisms (see \cite{bradbury2004survey} and
\cite{rumpe2017classification} for surveys). With respect to existing
work, we consider a simple yet general imperative reconfiguration
language, encompassing four primitive reconfiguration actions
(creation and deletion of components and interactions) as well as
non-deterministic reconfiguration triggers (constraints) evaluated on
small parts of the structure and the state of the system.  These
features exist, in very similar forms, in the vast majority of
existing graph-based reconfiguration formalisms e.g., using explicit
reconfiguration scripts as in \textsc{CommUnity} \cite{CommUnity},
reconfiguration controllers expressed as production rules in
graph-grammars \cite{LeMetayer}, guarded reconfiguration actions in
\textsc{Dr-Bip} \cite{DR-BIP-STTT} and graph rewriting rules in
\textsc{Reo} \cite{Reo}, to cite only a few. In our model, the
primitive reconfiguration actions are executed sequentially, but
interleave with the firing of interactions i.e., the normal execution
of the system. Sequential reconfiguration is not a major restriction,
as the majority of reconfiguration languages rely on a centralized
management \cite{bradbury2004survey}. Nevertheless, for the sake of
simplicity, most existing reconfiguration languages avoid the
fine-grain interleaving of reconfiguration and execution steps i.e.,
they freeze the system's execution during reconfiguration. Our choice
of allowing this type of interleaving is more realistic and closer to
real-life implementation. Finally, our language supports open
reconfigurations, in which the number of possible configurations is
unbounded \cite{rumpe2017classification}, via non-deterministic choice
and iteration.

\vspace*{-.5\baselineskip}
\paragraph{An illustrative example.}
We illustrate the setting by a token ring example, consisting of a
finite but unbounded number of components, indexed from $1$ to $n$,
connected via an unidirectional ring (Fig. \ref{fig:token-ring}).  A
token may be passed from a component $i$ in state $\toktoken$ (it has
a token) to its neighbour, with index $(i \mod n) + 1$, which must be
in state $\toknotok$ (it has a hole instead of a token). As result of
this interaction, the $i$-th component moves to state $\toknotok$
while the $(i\mod n) + 1$ component moves to state $\toktoken$. Note
that token passing interactions are possible as long as at least two
components are in different states; if all the components are in the
same state at the same time, the ring is in a \emph{deadlock}
configuration.

\begin{figure}[t!]
  \vspace*{-\baselineskip}
  \caption[Token Ring]{Reconfiguration of a Parametric Token Ring System}
  \label{fig:token-ring}
  \begin{subfigure}[b]{.5\linewidth}
    \centering
    \ifArticle
    \scalebox{0.64}{		\begin{tikzpicture}[>=stealth',shorten >=1pt,auto,node distance=2cm]
			\begin{scope}[local bounding box=a, shift={(-3,2)}]
				\node[state] (h)      {$\toknotok$};
				\node[state, fill=lightgray] (t) [right of=h]     {$\toktoken$};

				\path[->] (h) edge [bend left] node (i1) {$\tokin$} (t);
				\path[->] (t) edge [bend left] node (i2) {$\tokout$} (h);

				\node (Ca) [draw=black, fit= (h) (t) (i1) (i2)] {};
				\node [xshift=-10pt,yshift=-22pt] at (Ca.east) {$c_1$};
			\end{scope}

			\begin{scope}[local bounding box=b, shift={(3,2)}]
				\node[state, fill=lightgray] (h)      {$\toknotok$};
				\node[state] (t) [right of=h]     {$\toktoken$};

				\path[->] (h) edge [bend left] node (i1) {$\tokin$} (t);
				\path[->] (t) edge [bend left] node (i2) {$\tokout$} (h);

				\node (Cb) [draw=black, fit= (h) (t) (i1) (i2)] {};
				\node [xshift=-10pt,yshift=-22pt] at (Cb.east) {$c_2$};
			\end{scope}

			\begin{scope}[local bounding box=c, shift={(3,-2)}]
				\node[state, fill=lightgray] (h)      {$\toknotok$};
				\node[state] (t) [right of=h]     {$\toktoken$};

				\path[->] (h) edge [bend left] node (i1) {$\tokin$} (t);
				\path[->] (t) edge [bend left] node (i2) {$\tokout$} (h);

				\node (Cc) [draw=black, fit= (h) (t) (i1) (i2)] {};
				\node [xshift=-10pt,yshift=-22pt] at (Cc.east) {$c_3$};
			\end{scope}

			\begin{scope}[local bounding box=d, shift={(-3,-2)}]
				\node[state, fill=lightgray] (h)      {$\toknotok$};
				\node[state] (t) [right of=h]     {$\toktoken$};

				\path[->] (h) edge [bend left] node (i1) {$\tokin$} (t);
				\path[->] (t) edge [bend left] node (i2) {$\tokout$} (h);

				\node (Cd) [draw=black, fit= (h) (t) (i1) (i2)] {};
				\node [xshift=-10pt,yshift=-22pt] at (Cd.east) {$c_n$};
			\end{scope}

			\begin{scope}[local bounding box=e, shift={(1,-2)}]
				\node[state, draw=white] (Ce) {\ldots};
			\end{scope}

			\path [draw=black,fill=black] (Ca.south) circle (0.5mm);
			\path [draw=black,fill=black] (Ca.east) circle (0.5mm);
			\path [draw=black,fill=black] (Cb.west) circle (0.5mm);
			\path [draw=black,fill=black] (Cb.south) circle (0.5mm);
			\path [draw=black,fill=black] (Cc.north) circle (0.5mm);
			\path [draw=black,fill=black] (Cc.west) circle (0.5mm);
			\path [draw=black,fill=black] (Cd.north) circle (0.5mm);
			\path [draw=black,fill=black] (Cd.east) circle (0.5mm);

			\path [draw=black,fill=black] (Ce.west) circle (0.5mm);
			\path [draw=black,fill=black] (Ce.east) circle (0.5mm);

			\draw[-] (Ca) edge node[pos=0.1] () {$\tokout$} (Cb);
			\draw[-] (Ca) edge node[pos=0.9] () {$\tokin$} (Cb);
			\draw[-] (Ca) edge node[pos=0.5, below] () {$(c_1,\mathit{out},c_2,\mathit{in})$} (Cb);
			\draw[-] (Cb) edge node[pos=0.1] () {$\tokout$} (Cc);
			\draw[-] (Cb) edge node[pos=0.85] () {$\tokin$} (Cc);
			\draw[-] (Cc) edge node[pos=0.5] () {$(c_2,\mathit{out},c_3,\mathit{in})$} (Cb);
			\draw[-] (Cc) edge node[pos=0.35] () {$\tokout$} (Ce);
			\draw[-] (Ce) edge node[pos=0.8] () {$\tokin$} (Cd);
			\draw[-] (Cd) edge node[pos=0.15] () {$\tokout$} (Ca);
			\draw[-] (Cd) edge node[pos=0.9] () {$\tokin$} (Ca);
			\draw[-] (Ca) edge node[pos=0.5] () {$(c_n,\mathit{out},c_1,\mathit{in})$} (Cd);
		\end{tikzpicture}}
    \else
    \scalebox{0.7}{		\begin{tikzpicture}[>=stealth',shorten >=1pt,auto,node distance=2cm]
			\begin{scope}[local bounding box=a, shift={(-3,2)}]
				\node[state] (h)      {$\toknotok$};
				\node[state, fill=lightgray] (t) [right of=h]     {$\toktoken$};

				\path[->] (h) edge [bend left] node (i1) {$\tokin$} (t);
				\path[->] (t) edge [bend left] node (i2) {$\tokout$} (h);

				\node (Ca) [draw=black, fit= (h) (t) (i1) (i2)] {};
				\node [xshift=-10pt,yshift=-22pt] at (Ca.east) {$c_1$};
			\end{scope}

			\begin{scope}[local bounding box=b, shift={(3,2)}]
				\node[state, fill=lightgray] (h)      {$\toknotok$};
				\node[state] (t) [right of=h]     {$\toktoken$};

				\path[->] (h) edge [bend left] node (i1) {$\tokin$} (t);
				\path[->] (t) edge [bend left] node (i2) {$\tokout$} (h);

				\node (Cb) [draw=black, fit= (h) (t) (i1) (i2)] {};
				\node [xshift=-10pt,yshift=-22pt] at (Cb.east) {$c_2$};
			\end{scope}

			\begin{scope}[local bounding box=c, shift={(3,-2)}]
				\node[state, fill=lightgray] (h)      {$\toknotok$};
				\node[state] (t) [right of=h]     {$\toktoken$};

				\path[->] (h) edge [bend left] node (i1) {$\tokin$} (t);
				\path[->] (t) edge [bend left] node (i2) {$\tokout$} (h);

				\node (Cc) [draw=black, fit= (h) (t) (i1) (i2)] {};
				\node [xshift=-10pt,yshift=-22pt] at (Cc.east) {$c_3$};
			\end{scope}

			\begin{scope}[local bounding box=d, shift={(-3,-2)}]
				\node[state, fill=lightgray] (h)      {$\toknotok$};
				\node[state] (t) [right of=h]     {$\toktoken$};

				\path[->] (h) edge [bend left] node (i1) {$\tokin$} (t);
				\path[->] (t) edge [bend left] node (i2) {$\tokout$} (h);

				\node (Cd) [draw=black, fit= (h) (t) (i1) (i2)] {};
				\node [xshift=-10pt,yshift=-22pt] at (Cd.east) {$c_n$};
			\end{scope}

			\begin{scope}[local bounding box=e, shift={(1,-2)}]
				\node[state, draw=white] (Ce) {\ldots};
			\end{scope}

			\path [draw=black,fill=black] (Ca.south) circle (0.5mm);
			\path [draw=black,fill=black] (Ca.east) circle (0.5mm);
			\path [draw=black,fill=black] (Cb.west) circle (0.5mm);
			\path [draw=black,fill=black] (Cb.south) circle (0.5mm);
			\path [draw=black,fill=black] (Cc.north) circle (0.5mm);
			\path [draw=black,fill=black] (Cc.west) circle (0.5mm);
			\path [draw=black,fill=black] (Cd.north) circle (0.5mm);
			\path [draw=black,fill=black] (Cd.east) circle (0.5mm);

			\path [draw=black,fill=black] (Ce.west) circle (0.5mm);
			\path [draw=black,fill=black] (Ce.east) circle (0.5mm);

			\draw[-] (Ca) edge node[pos=0.1] () {$\tokout$} (Cb);
			\draw[-] (Ca) edge node[pos=0.9] () {$\tokin$} (Cb);
			\draw[-] (Ca) edge node[pos=0.5, below] () {$(c_1,\mathit{out},c_2,\mathit{in})$} (Cb);
			\draw[-] (Cb) edge node[pos=0.1] () {$\tokout$} (Cc);
			\draw[-] (Cb) edge node[pos=0.85] () {$\tokin$} (Cc);
			\draw[-] (Cc) edge node[pos=0.5] () {$(c_2,\mathit{out},c_3,\mathit{in})$} (Cb);
			\draw[-] (Cc) edge node[pos=0.35] () {$\tokout$} (Ce);
			\draw[-] (Ce) edge node[pos=0.8] () {$\tokin$} (Cd);
			\draw[-] (Cd) edge node[pos=0.15] () {$\tokout$} (Ca);
			\draw[-] (Cd) edge node[pos=0.9] () {$\tokin$} (Ca);
			\draw[-] (Ca) edge node[pos=0.5] () {$(c_n,\mathit{out},c_1,\mathit{in})$} (Cd);
		\end{tikzpicture}}
    \fi
    \vspace*{2\baselineskip}
  \end{subfigure}%
  \begin{subfigure}[b]{.5\linewidth}
    \centering

{\footnotesize\begin{code}[caption={Delete Component (wrong version)}
        \label{lst:delete-component-wrong}]
with $x,y,z : \interac{x}{out}{y}{in} * \compin{y}{\toknotok} * \interac{y}{out}{z}{in}$ do
  disconnect(y.$\mathit{out}$,z.$\mathit{in}$);
  disconnect(x.$\mathit{out}$,y.$\mathit{in}$);
  delete(y);
  connect(x.$\mathit{out}$,z.$\mathit{in}$); od
\end{code}}    

\vspace*{-.5\baselineskip}

{\footnotesize\begin{code}[caption={Delete Component (correct version)}
    \label{lst:delete-component-notok}]
with $x,y,z : \interac{x}{out}{y}{in} * \compin{y}{\toknotok} * \interac{y}{out}{z}{in}$ do
  disconnect(x.$\mathit{out}$,y.$\mathit{in}$);
  disconnect(y.$\mathit{out}$,z.$\mathit{in}$);
  delete(y);
  connect(x.$\mathit{out}$,z.$\mathit{in}$); od
\end{code}}    
\end{subfigure}
\vspace*{-2\baselineskip}
\end{figure}

During operation, components can be added to, or removed from the
ring.  On removing the component with index $i$, its incoming (from
$i-1$, if $i>1$, or $n$, if $i=1$) and outgoing (to $(i\mod n)+1$)
connectors are deleted before the component is deleted, and its left
and right neighbours are reconnected in order to re-establish the
ring-shaped topology. Consider the program in Listing
\ref{lst:delete-component-wrong}, where the variables $x$, $y$ and $z$
are assigned indices $i$, $(i \mod n) + 1$ and $(i \mod n) + 2$,
respectively (assuming $n>2$). The program removes first the right
connector between $y$ and $z$ (line 2), then removes the left
connector between $x$ and $y$ (line 3), before removing the component
indexed by $y$ (line 4) and reconnecting the $x$ and $z$ components
(line 5). Note that the order of the disconnect commands is crucial: assume that
component $x$ is the only one in state $\toktoken$ in the entire
system. Then the token may move from $x$ to $y$ and is deleted
together with the component (line 4). In this case, the resulting ring
has no token and the system is in a deadlock configuration. The
reconfiguration program in Listing \ref{lst:delete-component-notok} is
obtained by swapping lines 2 and 3 from Listing
\ref{lst:delete-component-wrong}. In this case, the deleted component
is in state $\toknotok$ before the reconfiguration and its left
connector is removed before its right one, thus ensuring that the
token does not move to the $y$ component (deleted at line 4).

The framework developed in this paper allows to prove that e.g., when
applied to a token ring with at least two components in state
$\toknotok$ and at least one component in state $\toktoken$, the
program in Listing \ref{lst:delete-component-notok} yields a system
with at least two components in different states, \emph{for any $n >
  2$}. Using, e.g.\ invariant synthesis methods similar to those
described in
\cite{AbdullaHendaDelzannoRezine07,ChenHongLinRummer17,DBLP:conf/tacas/BozgaEISW20,DBLP:conf/facs2/BozgaI21},
an initially correct parametric systems can be automatically proved to
be deadlock-free, \emph{after the application of a sequence of
  reconfiguration actions}.

\vspace*{-.5\baselineskip}
\paragraph{The contributions of this paper.}
Whereas various formalisms for modeling distributed systems support
dynamic reconfiguration, the formal verification of system properties
under reconfigurations has received scant attention. We provide a
\emph{configuration logic} that specifies the safe configurations of a
distributed system. This logic is used to build Hoare-style proofs of
correctness, by annotating reconfiguration programs (i.e.\ programs
that delete and create interactions or components) with assertions
that describe both the topology of the system (i.e.\ the components
and connectors that form its coordinating architecture) and the local
states of the components. The annotations of the reconfiguration
program are proved to be valid under so-called \emph{havoc
invariants}, expressing global properties about the states of the
components, that remain, moreover, unchanged under the ongoing
interactions in the system. In order to prove these havoc invariants
for networks of any size, we develop an induction-based proof system,
that uses a parallel composition rule in the style of
assume/rely-guarantee reasoning. In contrast with existing formal
verification techniques, we do not consider the network topology to be
fixed in advance, and allow it to change dynamically, as described by
the reconfiguration program. This paper provides the details of our
proof systems and the semantics of reconfiguration programs. We
illustrate the usability of our approach by proving the correctness of
self-adjustable tree architectures \cite{Schmid16} and conclude with a
list of technical problems relevant for the automation of our
method.

\vspace*{-.5\baselineskip}
\paragraph{Main challenges.}
Formal reasoning about reconfigurable distributed systems faces two
technical challenges.
The first issue is the huge complexity of nowadays distributed
systems, that requires highly scalable proof techniques, which can
only be achieved by \emph{local reasoning}, a key ingredient of other
successful proof techniques, based on Separation Logic
\cite{OHearnReynoldsYang01}. To this end, atomic reconfiguration
commands in our proof system are specified by axioms that only refer
to the components directly involved in the action, while framing out
the rest of the distributed system. This principle sounds appealing,
but is technically challenging, as components from the local
specification interfere with components from the
frame\footnote{Essentially the equivalent of the environment in a
compositional proof system for parallel programs.}. To tackle this
issue, we assume that frames are invariant under the exchange of
messages between components (interactions) and discharge these
invariance conditions using cyclic proofs. The inference rules used to
write such proof rely on a compositional proof rule, in the spirit of
\emph{rely/assume-guarantee} reasoning
\cite{Owicki1978,DBLP:phd/ethos/Jones81}, whose assumptions about the
environment behavior are automatically synthesized from the
formul{\ae} describing the system and the environment. 

The second issue is dealing with the non-trivial interplay between
reconfigurations and interactions. Reconfigurations change the system
by adding/removing components/interactions \emph{while the system is
  running}, i.e.\ while state changes occur within components by
firing interactions. Although changes to the structure of the
distributed system seem, at first sight, orthogonal to the state
changes within components, the impact of a reconfiguration can be
immense. For instance, deleting a component holding the token in a
token-ring network yields a deadlocked system, while adding a
component with a token could lead to a data race, in which two
components access a shared resource simultaneously. Technically, this
means that a frame rule cannot be directly applied to sequentially
composed reconfigurations, as e.g.\ an arbitrary number of
interactions may fire between two atomic reconfiguration
actions. Instead, we must prove havoc invariance of the intermediate
assertions in a sequential composition of reconfiguration actions. As
an optimization of the proof technique, such costly checks do not have
to be applied along sequential compositions of reconfiguration actions
that only decrease/increase the size of the architecture; in such
monotonic reconfiguration sequences, invariance of a set of
configurations under interaction firing needs only to be checked in
the beginning (for decreasing sequences) or in the end (for increasing
sequences).

\section{A Model of Distributed Systems}

For a function $f : A \rightarrow B$, we denote by $\dom{f}$ its
domain and by $f[a \leftarrow b]$ the function that maps $a$ into $b$
and behaves like $f$ for all other elements from the domain of $f$.
By $\pow{A}$ we denote the powerset of a set $A$. For a relation $R
\subseteq A \times A$, we denote by $R^*$ its reflexive and transitive
closure. Given sets $A$ and $B$, we write $A \finsubseteq B$ if $A$ is
a finite subset of $B$ and define $A \uplus B \isdef A \cup B$ if $A
\cap B = \emptyset$ and $A \uplus B$ is undefined, if $A \cap B \neq
\emptyset$.

We model a distributed system by a finite set $\comps \finsubseteq
\universe$, where $\universe$ is a countably infinite universe of
\emph{components}. The components in $\comps$ are said to be
\emph{present} (in the system) and those from
$\universe\setminus\comps$ are \emph{absent} (from the system). 

The present components can be thought of as the nodes of a network,
each executing a copy of the same program, called \emph{behavior} in
the following. The behavior is described by a finite-state machine
$\beh = (\ports,\states,\arrow{}{})$, where $\ports$ is a finite set
of \emph{ports} i.e., the event alphabet of the machine, $\states$ is
a finite set of \emph{states}, and $\arrow{}{} \subseteq \states
\times \ports \times \states$ is a transition relation. We denote
transitions as $q \arrow{p}{} q'$ instead of $(q,p,q')$, the states
$q$ and $q'$ being referred to as the pre- and post-state of the
transition.

The network of the distributed system is described by a finite set
$\interacs \finsubseteq \universe \times \ports \times \universe
\times \ports$ of \emph{interactions}. Intuitively, an interaction
$(c_1, p_1, c_2, p_2)$ connects the port $p_1$ of component $c_1$ with
the port $p_2$ of component $c_2$, provided that $c_1$ and $c_2$ are
distinct components. Intuitively, an interaction $(c_1, p_1, c_2,
p_2)$ can be thought of as a joint execution of transitions labeled
with the ports $p_1$ and $p_2$ from the components $c_1$ and $c_2$,
respectively.


\begin{definition}\label{def:configuration}
A \emph{configuration} is a quadruple $\aconfig =
(\comps,\interacs,\statemap,\store)$, where $\comps$ and $\interacs$
describe the present components and the interactions of the system,
$\statemap: \comps \rightarrow \states$ is a \emph{state map}
associating each present component a state of the common behavior
$\beh = (\ports,\states,\arrow{}{})$ and $\store : \vars \rightarrow
\universe$ is a \emph{store} that maps variables, taken from a
countably infinite set $\vars$, to components (not necessarily
present). We denote by $\configset$ the set of configurations.
\end{definition}

\begin{example}\label{ex:configuration}
  For instance, the configuration $\config$ of the token ring system,
  depicted in Fig. \ref{fig:token-ring} (left) has present components
  $\comps = \set{c_1, \ldots, c_n}$, interactions $\interacs =
  \set{(c_i,\mathit{out},c_{(i \mod n) + 1},\mathit{in}) \mid i \in
    \interv{1}{n}}$ and state map given by $\statemap(c_1) =
  \toktoken$ and $\statemap(c_i) = \toknotok$, for $i \in
  \interv{2}{n}$. The store $\store$ is
  arbitrary. \hfill$\blacksquare$
\end{example}

Given a configuration $\config$, an interaction $(c_1, p_1, c_2, p_2)
\in \interacs$ is \emph{loose} if and only if $c_i \not\in \comps$,
for some $i = 1,2$. A configuration is \emph{loose} if and only if it
contains a loose interaction. Interactions (resp. configurations) that
are not loose are said to be \emph{tight}. In particular, loose
configurations are useful for the definition of a composition
operation, as the union of disjoint sets of components and
interactions, respectively:

\begin{definition}\label{def:composition}
  The composition of two configurations $\aconfig_i = (\comps_i,
  \interacs_i, \statemap_i, \store)$, for $i = 1,2$, is defined as
  $\aconfig_1 \comp \aconfig_2 \isdef (\comps_1 \uplus \comps_2,
  \interacs_1 \uplus \interacs_2, \statemap_1 \cup \statemap_2,
  \store)$. The composition $\aconfig_1 \comp \aconfig_2$ is undefined
  if either $\comps_1 \uplus \comps_2$ or $\interacs_1 \uplus
  \interacs_2$ is undefined\footnote{Since $\dom{\statemap_i}
    \subseteq \comps_i$, for $i=1,2$ and $\comps_1 \cap \comps_2 =
    \emptyset$, the disjointness condition is not necessary for state
    maps.}. A composition $\aconfig_1 \comp \aconfig_2$ is
  \emph{trivial} if $\comps_i = \interacs_i = \statemap_i =
  \emptyset$, for some $i = 1,2$. A configuration $\aconfig_2$ is a
  \emph{subconfiguration} of $\aconfig_1$, denoted $\aconfig_1
  \substreq \aconfig_2$, if and only if there exists a configuration
  $\aconfig_3 \in \configset$, such that $\aconfig_1 = \aconfig_2
  \comp \aconfig_3$.
\end{definition}
Note that a tight configuration may be the result of composing two
loose configurations, whereas the composition of tight configurations
is always tight. The example below shows that, in most cases, a
non-trivial decomposition of a tight configuration necessarily
involves loose configurations.

\begin{example}\label{ex:composition}
  Let $\aconfig_i = (\comps_i, \interacs_i, \statemap_i, \store)$,
  where $\comps_i = \set{c_i}$, $\interacs_i = \set{(c_i,
    \mathit{out}, c_{(i \mod 3) + 1}, \mathit{in})}$, for all $i \in
  \interv{1}{3}$, $\statemap_1(c_1)=\statemap_2(c_2)=\toknotok$ and
  $\statemap_3(c_3)=\toktoken$. Then $\aconfig \isdef \aconfig_1 \comp
  \aconfig_2 \comp \aconfig_3$ is the configuration from the top-left
  corner of Fig. \ref{fig:reconfiguration}, where the store $\store$
  is arbitrary. Note that $\aconfig_1, \aconfig_2$, and $\aconfig_3$
  are loose, respectively, but $\aconfig$ is tight. Moreover, the only
  way of decomposing $\aconfig$ into two tight subconfigurations
  $\aconfig'_1$ and $\aconfig'_2$ is taking $\aconfig'_1 \isdef
  \aconfig$ and $\aconfig'_2 \isdef (\emptyset, \emptyset, \emptyset,
  \store)$, or viceversa. \hfill$\blacksquare$
\end{example}

\begin{figure}
  \vspace*{-\baselineskip}
  \caption{Havoc and Reconfigurations of a Token Ring}
  \label{fig:reconfiguration}
  \begin{center}
    \scalebox{.6}[.5]{\input{tikz_reconfiguration.tex}}
  \end{center}
  \vspace*{-\baselineskip}
\end{figure}

A configuration is changed by two types of \emph{actions}: (a)
\emph{havoc} actions change the local states of the components by
executing interactions (that trigger simultaneous transitions in
different components), without changing the structure or the store,
and (b) \emph{reconfiguration} actions that change the structure,
store and possibly the state map of a configuration. We refer to
Fig. \ref{fig:reconfiguration} for a depiction of havoc and
reconfiguration actions. Each havoc action is the result of executing
a sequence of interactions (horizontally depicted using straight
double arrows), whereas each reconfiguration action (vertically
depicted using snake-shaped arrows) corresponds to a statement in a
reconfiguration program. The two types of actions may interleave,
yielding a transition graph with a finite but unbounded (parametric)
or even infinite (obtained by iteratively adding new components) set
of vertices (configurations).

Formally, an \emph{action} is a function $f : \configset \rightarrow
\pow{\configset}^\errconfigs$, where $\pow{\configset}^\errconfigs
\isdef \pow{\configset} \cup \set{\errconfigs}$.  The complete lattice
$(\pow{\configset},\subseteq,\cup,\cap)$ is extended with a greatest
element $\errconfigs$, with the conventions $S \cup \errconfigs \isdef
\errconfigs$ and $S \cap \errconfigs \isdef S$, for each $S \in
\pow{\configset}$. We consider that an action $f$ is \emph{disabled}
in a configuration $\aconfig$ if and only if $f(\aconfig) = \emptyset$
and that it \emph{faults} in $\aconfig$ if and only if $f(\aconfig) =
\errconfigs$. Actions are naturally lifted to sets of configurations
as $f(S) \isdef \bigcup_{\aconfig\in S} f(\aconfig)$, for each $S
\subseteq \configset$.

\begin{definition}\label{def:havoc}
  The \emph{havoc} action $\ahavoc : \configset \rightarrow
  \pow{\configset}$ is defined as $\ahavoc(\aconfig) \isdef
  \set{\aconfig' \mid \aconfig \Arrow{}{}^* \aconfig'}$, where
  $\Arrow{}{}^*$ is the reflexive and transitive closure of the
  relation $\Arrow{}{} ~\subseteq \configset \times \configset$,
  defined by the following rule:
  \begin{prooftree}
  \AxiomC{$\begin{array}{ccccc}
      (c_1, p_1, c_2, p_2) \in \interacs & 
      c_1, c_2 \in \comps & \statemap(c_i) = q_i & q_i \arrow{p_i}{} q'_i, & \text{for all } i = 1,2 
    \end{array}$}
  \LeftLabel{(\havocrule)}
  \UnaryInfC{$(\comps,\interacs,\statemap,\store) \Arrow{}{}
    (\comps,\interacs,\statemap[c_1 \leftarrow q'_1][c_2 \leftarrow q'_2],\store)$}
  \end{prooftree}
\end{definition}
\noindent
Note that the havoc action is the result of executing any sequence of
tight interactions, whereas loose interactions are simply ignored.
The above definition can be generalized to multi-party interactions
$(c_1, p_1, \ldots, c_n, p_n)$ with $n\geq1$ pairwise distinct
participant components $c_1, \ldots, c_n$, that fire simultaneously
transitions of the behavior labeled with the ports $p_1, \ldots, p_n$,
respectively. In particular, the interactions of arity $n=1$
correspond to the local (silent) actions performed independently by a
single component. To keep the presentation simple, we refrain from
considering such generalizations, for the time being.

\begin{example}\label{ex:havoc}
  Let $\aconfig_i = (\set{c_1, c_2, c_3}, \set{(c_i, \mathit{out},
    c_{i \mod 3 + 1}, \mathit{in}) \mid i \in \interv{1}{3}},
  \statemap_i, \store)$, for $i \in \interv{1}{3}$ be the top-most
  configurations from Fig. \ref{fig:reconfiguration}, where
  $\statemap_1(c_1) = \statemap_1(c_2) = \toknotok$, $\statemap_1(c_3)
  = \toktoken$, $\statemap_2(c_1) = \toktoken$, $\statemap_2(c_2) =
  \statemap_2(c_3) = \toknotok$, $\statemap_3(c_1) = \statemap_3(c_3)
  = \toknotok$, $\statemap_3(c_2) = \toktoken$ and $\store(x) = c_1$,
  $\store(y) = c_2$, $\store(z) = c_3$. Then $\ahavoc(\aconfig_i) =
  \set{\aconfig_1, \aconfig_2, \aconfig_3}$, for all $i \in
  \interv{1}{3}$. \hfill$\blacksquare$
\end{example}

\subsection{The Expressiveness of the Model}

Before moving on with the definition of a logic describing sets of
configurations (\S\ref{sec:adl}), a reconfiguration language and a
proof system for reconfiguration programs (\S\ref{sec:rpl}), we
discuss the expressive power of the
\emph{components-behavior-interactions} model of distributed systems
introduced so far, namely \emph{what kinds of distributed algorithms
  can be described in our model}?

On one hand, this model can describe message-passing algorithms on
networks with unrestricted topologies (pipelines, rings, stars, trees,
grids, cliques, etc.), such as flooding/notification of a crowd,
token-based mutual exclusion, deadlock problems (dining
philosophers/cryptographers), etc. Moreover, the model captures
asynchronous communication, via bounded message channels modeled using
additional components\footnote{The number of messages in transit
  depends on the number of states in the behavior; unbounded message
  queues would require an extension of the model to infinite-state
  behaviors.}. Furthermore, \emph{transient faults} (process delays,
message losses, etc.) can be modeled as well, by nondeterministic
transitions e.g., a channel component might chose to
nondeterministically lose a message. In particular, having a single
finite-state machine that describes the behavior of all components is
not a limitation, because finitely many behaviors $\beh_1, \ldots,
\beh_m$ can be represented by state machines with disjoint transition
graphs, the state map distinguishing between different behavior types
-- if $\statemap(c) = q$ and $q$ is a state of $\beh_i$, the value of
$\statemap(c)$ can never change to a state of a different behavior
$\beh_j$, as the result of a havoc action.

On the other hand, the current model cannot describe complex
distributed algorithms, such as leader election
\cite{10.1145/359104.359108,DOLEV1982245}, spanning tree
\cite{6773228,10.2307/2033241}, topological linearization
\cite{145da081119e454586debadaade17bf0}, Byzantine consensus
\cite{10.1145/357172.357176} or Paxos parliament
\cite{10.1145/279227.279229}, due to the following
limitations: \begin{itemize}
\item Finite-state behavior is oblivious of the identity of the
  components (processes) in distributed systems of arbitrary
  sizes. For instance, there is no distributed algorithm over rings
  that can elect a leader under the assumption of anonymous processes
  \cite{10.1145/359104.359108,DOLEV1982245}.
\item Interactions between a bounded number of participants cannot
  describe broadcast between arbitrarily many components, as in most
  common consensus algorithms
  \cite{10.1145/357172.357176,10.1145/279227.279229}.
\end{itemize}
We proceed in the rest of the paper under these simplifying
assumptions (i.e., finite-state behavior and bounded-arity
interactions), as our focus is modeling the reconfiguration aspect of
a distributed system, and consider the following extensions for future
work: \begin{itemize}
\item \emph{Identifiers in registers}: the behavior is described by a
  finite-state machine equipped with finitely many registers $r$
  holding component identifiers, that can be used to send ($p!r$) and
  receive ($p?r$) identifiers ($p$ stands for a port name), perform
  equality ($r=r'$) and strict inequality ($r < r'$) checks, with no
  other relation or function on the domain of identifiers. For
  instance, identifier-aware behaviors are considered in
  \cite{DBLP:conf/concur/AiswaryaBG15} in the context of bounded model
  checking i.e., verification of temporal properties (safety and
  liveness) under the assumption that the system has a ring topology
  and proceeds in a bounded number of rounds (a round is completed
  when every component has executed exactly one
  transition). Algorithms running on networks of arbitrary topologies
  (described by graphs) are modeled using \emph{distributed register
  automata} \cite{10.1007/978-3-030-17127-8_7}, that offer a promising
  lead for verifying properties of distributed systems with mutable
  networks.
\item \emph{Broadcast interactions}: interactions involving an
  unbounded number of component-port pairs e.g., the $p_0$ ports of
  all components except for a bounded set $c_1, \ldots, c_k$, that
  interact with ports $p_1, \ldots, p_k$, for a given integer constant
  $k\geq0$. Broadcast interactions are described using universal
  quantifiers in \cite{DBLP:conf/tacas/BozgaEISW20}, where network
  topologies are specified using first-order logic. To accomodate
  broadcast interactions in our model, one has to redefine
  composition, by considering e.g., glueing of interactions, in
  addition to the disjoint union of configurations
  (Def. \ref{def:composition}). Changing this definition would have a
  non-trivial impact on the configuration logic used to write
  assertions in reconfiguration proofs (\S\ref{sec:adl}).
\end{itemize}
Considering a richer model of behavior (e.g., register automata, timed
automata, or even Markov decision processes) would impact mainly the
part of the framework that deals with checking the properties (i.e.,
safety, liveness or havoc invariance) of a set of configurations
described by a formula of the configuration logic (defined in
\S\ref{sec:adl}) but should not, in principle, impact the
configuration logic itself, the programming language or the proof
system for reconfiguration programs (defined in
\S\ref{sec:rpl}). However, accomodating broadcast communication
requires changes at the level of the logic and, consequently, the
reconfiguration programs proof system.

\section{A Logic of Configurations}
\label{sec:adl}

We define a \emph{Configuration Logic} (\adl) that is, an assertion
language describing sets of configurations. Let $\preds$ be a
countably infinite set of predicate symbols, where $\#(\apred) \geq 1$
denotes the arity of a predicate symbol $\apred \in
\preds$. The \adl\ formul{\ae} are inductively described by the following
syntax:
\[\begin{array}{rcl}
\phi & ::= & \predtrue \mid \emp \mid x = y \mid \compin{x}{q} \mid
\interac{x_1}{p_1}{x_2}{p_2} \mid \apred(x_1, \ldots, x_{\#(\apred)}) \mid
\phi * \phi \mid \phi \wedge \phi \mid \neg \phi \mid \exists x ~.~ \phi
\end{array}\]
where $q \in \states$, $\apred\in\preds$ are predicate symbols and
$x,y,x_1,x_2, \ldots\in\vars$ are variables. The atomic formul{\ae}
$\compin{x}{q}$, $\interac{x_1}{p_1}{x_2}{p_2}$, and $\apred(x_1,
\ldots, x_{\#(\apred)})$ are called \emph{component},
\emph{interaction} and \emph{predicate atoms}, respectively. A formula
is said to be \emph{predicate-free} if it has no occurrences of
predicate atoms.  By $\fv{\phi}$ we denote the set of free variables
in $\phi$, that do not occur within the scope of an existential
quantifier. A formula is \emph{quantifier-free} if it has no
occurrence of existential quantifiers. A \emph{substitution} is a
partial mapping $\sigma : \vars \rightarrow \vars$ and the formula
$\phi\sigma$ is the result of replacing each free variable $x \in
\fv{\phi} \cap \dom{\sigma}$ by $\sigma(x)$ in $\phi$. We denote by
$[x_1/y_1, \ldots, x_k/y_k]$ the substitution that replaces $x_i$ with
$y_i$, for all $i \in \interv{1}{k}$. We use the shorthands
$\predfalse \isdef \neg\predtrue$, $x \neq y \isdef \neg x = y$,
$\phi_1 \vee \phi_2 \isdef \neg(\neg\phi_1 \wedge \neg\phi_2)$,
$\forall x ~.~ \phi_1 \isdef \neg(\exists x ~.~ \neg\phi_1)$ and
$\company{x} \isdef \bigvee_{q \in \states} \compin{x}{q}$.

We distinguish the boolean ($\wedge$) from the separating ($*$)
conjunction: $\phi_1 \wedge \phi_2$ means that $\phi_1$ and $\phi_2$
hold for the same configuration, whereas $\phi_1 * \phi_2$ means that
$\phi_1$ and $\phi_2$ hold separately, on two disjoint parts of the
same configuration. Intuitively, a formula $\emp$ describes empty
configurations, with no components and interactions, $\compin{x}{q}$
describes a configuration with a single component, given by the store
value of $x$, in state $q$, and $\interac{x_1}{p_1}{x_2}{p_2}$
describes a single interaction between ports $p_1$ and $p_2$ of the
components given by the store values of $x_1$ and $x_2$,
respectively. The formula $\compin{x_1}{q_1} * \ldots *
\compin{x_n}{q_n} * \interac{x_1}{p_1}{x_2}{p_2} * \ldots *
\interac{x_{n-1}}{p_{n-1}}{x_n}{p_n}$ describes a structure consisting
of $n$ \emph{pairwise distinct} components, in states $q_1, \ldots,
q_n$, respectively, joined by interactions between ports $p_i$ and
$p_{i+1}$, respectively, for all $i \in \interv{1}{n-1}$.

The \adl\ logic is used to describe configurations of distributed
systems of unbounded size, by means of predicate symbols, defined
inductively by a given set of rules. For reasons related to the
existence of (least) fixed points, the definitions of predicates are
given in a restricted fragment of the logic. The \emph{symbolic
configurations} are formul{\ae} of the form $\spaceform \wedge
\pureform$, where $\spaceform$ and $\pureform$ are defined by the
following syntax:
\[\begin{array}{lcl}
\spaceform ::= \emp \mid \compin{x}{q} \mid \interac{x_1}{p_1}{x_2}{p_2} 
\mid \apred(x_1, \ldots, x_{\#(\apred)}) \mid \spaceform * \spaceform && 
\pureform ::= x = y \mid x \neq y \mid \pureform \wedge \pureform
\end{array}\]

The interpretation of \adl\ formul{\ae} is given by a semantic
relation $\models_\asid$, parameterized by a finite \emph{set of
inductive definitions} (SID) $\asid$, consisting of rules $\apred(x_1,
\ldots, x_{\#(\apred)}) \unfoldrule \exists y_1 \ldots y_k ~.~ \phi$,
where $\phi$ is a symbolic configuration, such that $\fv{\phi}
\subseteq \set{x_1, \ldots, x_{\#(\apred)}} \cup \set{y_1, \ldots,
  y_k}$. The relation $\models_\asid$ is defined inductively on the
structure of formul{\ae}, as follows:
\[\begin{array}{rclcl}
\config & \models_\asid & \predtrue & \iff & \text{true}
\\
\config & \models_\asid & \emp & \iff & \comps = \emptyset \text{ and } \interacs = \emptyset 
\\
\config & \models_\asid & x = y & \iff & \store(x) = \store(y)
\\
\config & \models_\asid & \compin{x}{q} & \iff & \comps = \set{\store(x)},~ \interacs = \emptyset \text{ and } \statemap(\store(x)) = q
\\
\config & \models_\asid & \interac{x_1}{p_1}{x_2}{p_2} & \iff & \comps = \emptyset,~ \interacs = \set{(\store(x_1), p_1, \store(x_2), p_2)} 
\\
\config & \models_\asid & \apred(y_1, \ldots, y_{\#(\apred)}) & \iff &
\config \models_\asid \varphi[x_1/y_1, \ldots, x_{\#(\apred)}/y_{\#(\apred)}] \text{, for some} \\
&&&& \text{rule } \apred(x_1, \ldots, x_{\#(\apred)}) \unfoldrule \varphi \text{ from $\asid$}
\\
\config & \models_\asid & \phi_1 * \phi_2 & \iff & \text{there exist configurations } 
\aconfig_1 \text{ and } \aconfig_2, \text{ such that } \\
&&&& \config = \aconfig_1 \comp \aconfig_2 \text{ and } \aconfig_i \models_\asid \phi_i \text{, for both } i = 1,2
\\
\config & \models_\asid & \phi_1 \wedge \phi_2 & \iff & \config \models_\asid \phi_i \text{, for both } i = 1,2
\\
\config & \models_\asid & \neg\phi_1 & \iff & \text{not } \config \models_\asid \phi_1
\\
\config & \models_\asid & \exists x ~.~ \phi_1 & \iff & (\comps,\interacs,\statemap,\store[x\leftarrow c],\statemap) \models_\asid \phi_1
\text{, for some $c \in \universe$}
\end{array}\]
From now on, we consider the SID $\asid$ to be clear from the context
and write $\aconfig \models \phi$ instead of $\aconfig \models_\asid
\phi$. If $\aconfig \models \phi$, we say that $\aconfig$ is a
\emph{model} of $\phi$ and define the set of models of $\phi$ as
$\sem{\phi}{} \isdef \set{\aconfig \mid \aconfig \models \phi}$. A
formula $\phi$ is \emph{satisfiable} if and only if $\sem{\phi}{} \neq
\emptyset$. Given formul{\ae} $\phi$ and $\psi$, we say that
\emph{$\phi$ entails $\psi$} if and only if $\sem{\phi}{} \subseteq
\sem{\psi}{}$, written $\phi \models \psi$.

\begin{example}\label{ex:chain}
  The SID below defines chains of components and interactions, with at
  least $h, t\in\N$ components in state $\toknotok$ and $\toktoken$,
  respectively:
  \[\begin{array}{ll}
  \chain{0}{1}(x,x) \unfoldrule \compin{x}{\toktoken} &
  \chain{h}{t}(x,y) \unfoldrule \exists z.~\compin{x}{\toktoken}
  * \interac{x}{out}{z}{in} * \chain{h}{t\dot{-}1}(z,y) \\  
  \chain{1}{0}(x,x) \unfoldrule \compin{x}{\toknotok} & 
  \chain{h}{t}(x,y) \unfoldrule \exists z.~\compin{x}{\toknotok} * \interac{x}{out}{z}{in}
  * \chain{h\dot{-}1}{t}(z,y) \\
  \chain{0}{0}(x,x) \unfoldrule \company{x} 
  \end{array}\]
  where $k\dot{-}1 \isdef \max(k-1,0)$, for all $k\in\N$. The
  configurations $(\{c_1, \ldots, c_n\}, \{(c_i,\mathit{out},c_{(i
    \mod n) + 1},\mathit{in}) \mid i \in \interv{1}{n}\}, \statemap,
  \store)$ from Example~\ref{ex:configuration} are models of the
  formula $\exists x \exists y ~.~ \chain{0}{0}(x,y) *
  \interac{y}{out}{x}{in}$, for all $n \in \N$. This is because any
  such configuration can be decomposed into a model of
  $\interac{y}{out}{x}{in}$ and a model of $\chain{0}{0}(x,y)$.  The
  latter is either a model of $\company{x}$, matching the body of the
  rule $\chain{0}{0}(x,x) \unfoldrule \company{x}$ if $x=y$, or a
  model of $\exists z.~\company{x} * \interac{x}{out}{z}{in} *
  \chain{0}{0}(z,y)$, matching the body of the rule $\chain{0}{0}
  \unfoldrule \exists z.~\company{x} * \interac{x}{out}{z}{in} *
  \chain{0}{0}(z,y)$ etc. \hfill$\blacksquare$
\end{example}

\subsection{The Expressiveness of \adl}

The \adl\ logic is quite expressive, due to the interplay between
first-order quantifiers and inductively defined predicates. For
instance, the class of \emph{cliques}, in which there is an
interaction between the $\mathit{out}$ and $\mathit{in}$ ports of any
two present components are defined by the formula: \[\forall x \forall
y ~.~ \company{x} * \company{y} * \predtrue \rightarrow
\interac{x}{out}{y}{in} * \predtrue\] Describing clique-structured
networks is important for modeling consensus protocols, such as
Byzantine \cite{10.1145/357172.357176} or Paxos
\cite{10.1145/279227.279229}.

\emph{Connected} networks, used in e.g., linearization algorithms
\cite{tocs13}, are such that there exists a path of interactions
between each two present components in the system: \[\forall x \forall
y ~.~ \company{x} * \company{y} * \predtrue \rightarrow
\mathit{reach}(x,y)\] where the predicate $\mathit{reach}(x,y)$ is
defined by the following rules:
\[\begin{array}{ll}
\mathit{reach}(x,y) \leftarrow x = y, & 
\mathit{reach}(x,y) \leftarrow \exists z ~.~ \company{x} *
\interac{x}{out}{z}{in} * \mathit{reach}(z,y) * \predtrue
\end{array}\]

A \emph{grid} is a connected network that, moreover, satisfies the
following formula:
\[\forall x \forall y \forall z ~.~ \interac{x}{out}{y}{in} * \interac{x}{out}{z}{in} * \predtrue
\rightarrow \exists u ~.~ \interac{y}{out}{u}{in} *
\interac{z}{out}{u}{in} * \mathit{present}(x,y,z,u) \] where
$\mathit{present}(x_1, \ldots, x_n) \isdef \Asterisk_{1 \leq i \leq n}
~\company{x_i}$ states that the components $x_1, \ldots, x_n$ are
present and pairwise distinct. Grids are important in modeling
distributed scientific computing \cite{Foster02}.

As suggested by work on Separation Logic
\cite{DBLP:conf/fossacs/DemriLM18}, the price to pay for this
expressivity is the inherent impossibility of having decision
procedures for a fragment of \adl, that combines first-order
quantifiers with inductively defined predicates. A non-trivial
fragment of \adl\ that has decision procedures for satisfiability and
entailment is the class of symbolic configurations
\cite{BozgaBueriIosif2022Arxiv}. However, we do not expect to describe
systems with clique or grid network topologies using symbolic
configurations. Furthermore, we conjecture that these classes are
beyond the expressiveness of the symbolic configuration fragment. We
discuss these issues in more detail in \S\ref{sec:open-problems}.

\section{A Language for Programming Reconfigurations}
\label{sec:rpl}

This section defines \emph{reconfiguration} actions that change the
structure of a configuration. We distinguish between reconfigurations
and havoc actions (Def. \ref{def:havoc}), that change configurations
in orthogonal ways (see Fig. \ref{fig:reconfiguration} for an
illustration of the interplay between the two types of actions). The
reconfiguration actions are the result of executing a given
reconfiguration program on the distributed system at hand. This
section presents the syntax and operational semantics of the
reconfiguration language. Later on, we introduce a Hoare-style proof
system to reason about the correctness of reconfiguration programs.

\subsection{Syntax and Operational Semantics}

Reconfiguration programs, ranged over by $\acomm$, are inductively
defined by the following syntax:
\[\begin{array}{rcl}
\acomm & ::= & \new(q,x) \mid \delete(x) \mid \connect(x_1.p_1,x_2.p_2) \mid
\disconnect(x_1.p_1,x_2.p_2)
\\
&& \mid \withcomm{x_1, \ldots, x_k}{\theta}{\acomm_1} \mid
\acomm_1; \acomm_2 \mid \acomm_1+\acomm_2 \mid \acomm_1^* 
\end{array}\]
where $q\in\states$ is a state, $x,x_1,x_2, \ldots \in \vars$ are
program variables and $\theta$ is a \emph{predicate-free}
\emph{quantifier-free} formula of the \adl\ logic, called a
\emph{trigger}.

The \emph{primitive commands} are $\new(q,x)$ and $\delete(x)$, that
create and delete a component (the newly created component is set to
execute from state $q$) given by the store value of $x$,
$\connect(x_1.p_1, x_2.p_2)$ and $\disconnect(x_1.p_1, x_2.p_2)$, that
create and delete an interaction, between the ports $p_1$ and $p_2$ of
the components given by the store values of $x_1$ and $x_2$,
respectively. We denote by $\primcomms$ the set of primitive commands.

A \emph{conditional} is a program of the form $(\withcomm{x_1, \ldots,
  x_k}{\theta}{\acomm})$ that performs the following steps, \emph{with
no havoc action (Def. \ref{def:havoc}) in between the first and second
steps} below: \begin{enumerate}
\item maps the variables $x_1, \ldots, x_k$ to some components $c_1,
  \ldots, c_k \in \universe$ such that the configuration after the
  assignment contains a model of the trigger $\theta$; the conditional
  is disabled if the current configuration is not a model of $\exists
  x_1 \ldots \exists x_k ~.~ \theta * \predtrue$,
\item launches the first command of the program $\acomm$ on this
  configuration, and
\item continues with the remainder of $\acomm$, in interleaving with
  havoc actions;
\item upon completion of $\acomm$, the values of $x_1,
  \ldots, x_k$ are forgotten.
\end{enumerate}
To avoid technical complications, we assume that nested conditionals
use pairwise disjoint tuples of variables; every program can be
statically changed to meet this condition, by renaming variables. Note
that the trigger $\theta$ of a conditional $(\withcomm{x_1, \ldots,
  x_k}{\theta}{\acomm})$ has no quantifiers nor predicate atoms, which
means that the overall number of components and interactions in a
model of $\theta$ is polynomially bounded by the size of (number of
symbols needed to represent) $\theta$. Intuitively, this means that
the part of the system (matched by $\theta$) to which the
reconfiguration is applied is relatively small, thus the procedure
that evaluates the trigger can be easily implemented in a distributed
environment, as e.g., consensus between a small number of neighbouring
components.

The sequential composition $\acomm_1; \acomm_2$ executes $\acomm_1$
followed by $\acomm_2$, with zero or more interactions firing in
between. This is because, even though being sequential, a
reconfiguration program runs in parallel with the state changes that
occur as a result of firing the interactions. Last, $\acomm_1 +
\acomm_2$ executes either $\acomm_1$ or $\acomm_2$, and $\acomm^*$
executes $\acomm$ zero or more times in sequence,
nondeterministically.


It is worth pointing out that the reconfiguration language does not
have explicit assignments between variables. As a matter of fact, the
conditionals are the only constructs that nondeterministically bind
variables to indices that satisfy a given logical constraint. This
design choice sustains the view of a distributed system as a cloud of
components and interactions in which reconfigurations can occur
anywhere a local condition is met. In other words, we do not need
variable assignments to traverse the architecture --- the program
works rather by identifying a part of the system that matches a small
pattern, and applying the reconfiguration locally to that
subsystem. For instance, a typical pattern for writing reconfiguration
programs is $(\withcomm{\vec{x}_1}{\theta_1}{\acomm_1} + \ldots +
\withcomm{\vec{x}_k}{\theta_k}{\acomm_k})^*$, where $\acomm_1, \ldots,
\acomm_k$ are loop-free sequential compositions of primitive
commands. This program continuously choses a reconfiguration sequence
$\acomm_i$ nondeterministically and either applies it on a small part
of the configuration that satisfies $\theta_i$, or does nothing, if no
such subconfiguration exists within the current configuration.

\begin{figure}[t!]
\vspace*{-\baselineskip}
\caption{Operational Semantics of the Reconfiguration Language}
\label{fig:os}
{\footnotesize\begin{center}
  \begin{prooftree}
    \AxiomC{$c\in\universe\smallsetminus\comps$}    
    \UnaryInfC{$\succj{\new(q,x)}{\config}{(\comps \cup \set{c}, \interacs, \statemap[c \leftarrow q], \store[x \leftarrow c])}$}
  \end{prooftree}

  \begin{minipage}{.55\textwidth}
    \begin{prooftree}
      \AxiomC{$\nu(x) \in \comps$} 
      \UnaryInfC{$\succj{\delete(x)}{\config}{
          (\comps\smallsetminus\set{\nu(x)}, \interacs, \statemap, \store)}$}
    \end{prooftree}
  \end{minipage}
  \begin{minipage}{.4\textwidth}
    \begin{prooftree}
      \AxiomC{$\nu(x) \not\in \comps$}
      \UnaryInfC{$\abort{\delete(x)}{\config}$}
    \end{prooftree}
  \end{minipage}

  \begin{prooftree}
    \AxiomC{}
    \UnaryInfC{$\succj{\connect(x_1.p_1, x_2.p_2)}{\config}{(\comps,\interacs\cup\set{(\nu(x_1), p_1, \nu(x_2), p_2)},\statemap,\store)}$}
  \end{prooftree}

  \begin{prooftree}
    \AxiomC{$(\nu(x_1), p_1, \nu(x_2), p_2) \in \interacs$} 
    \UnaryInfC{$\succj{\disconnect(x_1.p_1, x_2.p_2)}{\config}{
        (\comps,\interacs \smallsetminus \set{(\nu(x_1), p_1, \nu(x_2), p_2)},\statemap,\store)}$}
  \end{prooftree}

  \begin{prooftree}
    \AxiomC{$(\nu(x_1), p_1, \nu(x_2), p_2) \not\in \interacs$}
    \UnaryInfC{$\abort{\disconnect(x_1.p_1, x_2.p_2)}{\config}$}
  \end{prooftree}

  \begin{prooftree}
    \AxiomC{$\begin{array}{l}
        c_1,c'_1,\ldots,c_k,c'_k\in\universe \hspace*{4mm}
        (\comps,\interacs,\statemap,\store[x_1\leftarrow c_1,\ldots,x_k\leftarrow c_k]) \models \varphi * \predtrue \\
        \succj{\acomm}{(\comps,\interacs,\statemap,\store[x_1\leftarrow c_1,\ldots,x_k\leftarrow c_k])}{(\comps',\interacs',\statemap',\store')}
      \end{array}$}
    \UnaryInfC{$\succj{\withcomm{x_1, \ldots, x_k}{\varphi}{\acomm}}{\config}{(\comps',\interacs',\statemap',\store'[x_1\leftarrow c'_1,\ldots,x_k\leftarrow c'_k])}$}
  \end{prooftree}

  \begin{minipage}{.43\textwidth}
  \begin{prooftree}
    \AxiomC{$\begin{array}{c}\succj{\acomm_1}{\aconfig}{\aconfig_0}$ \quad $\aconfig_1 \in \ahavoc(\aconfig_0) \end{array}$
    \quad $\begin{array}{c}\succj{\acomm_2}{\aconfig_1}{\aconfig'} \end{array}$}
    \UnaryInfC{$\succj{\acomm_1;\acomm_2}{\aconfig}{\aconfig'}$}
  \end{prooftree}
  \end{minipage}
  \begin{minipage}{.2\textwidth}
  \begin{prooftree}
    \AxiomC{$\begin{array}{c}\succj{\acomm_1}{\aconfig}{\aconfig'} \end{array}$}
    \UnaryInfC{$\succj{\acomm_1+\acomm_2}{\aconfig}{\aconfig'}$}
  \end{prooftree}
  \end{minipage}
  \begin{minipage}{.36\textwidth} 
    \begin{prooftree}
      \AxiomC{$\succj{\acomm^n}{\aconfig}{\aconfig'}$}
      \RightLabel{, $\acomm^n = \left\{\begin{array}{ll}
        \acomm^{n-1}; \acomm & \text{if } n \geq 1 \\
        \skipcomm & \text{if } n = 0
        \end{array}\right.$}
      \UnaryInfC{$\succj{\acomm^*}{\aconfig}{\aconfig'}$}
    \end{prooftree}
  \end{minipage}
\end{center}}
\vspace*{-\baselineskip}
\end{figure}

The operational semantics of reconfiguration programs is given by the
structural rules in Fig. \ref{fig:os}, that define the judgements
$\succj{\acomm}{\aconfig}{\aconfig'}$ and $\abort{\acomm}{\aconfig}$,
where $\aconfig$ and $\aconfig'$ are configurations and $\acomm$ is a
program. Intuitively, $\succj{\acomm}{\aconfig}{\aconfig'}$ means that
$\aconfig'$ is a successor of $\aconfig$ following the execution of
$\acomm$ and $\abort{\acomm}{\aconfig}$ means that $\acomm$ faults in
$\aconfig$. The semantics of a program $\acomm$ is the action
$\semcomm{\acomm} : \configset \rightarrow
\pow{\configset}^\errconfigs$, defined as \ifLongVersion:
\[\semcomm{\acomm}(\aconfig) \isdef \left\{\begin{array}{ll}
\top & \text{ if } \abort{\acomm}{\aconfig} \\
\set{\aconfig' \mid \succj{\acomm}{\aconfig}{\aconfig'}} & \text{ otherwise}
\end{array}\right.\]
\else $\semcomm{\acomm}(\aconfig) \isdef \top$, if
$\abort{\acomm}{\aconfig}$ and $\semcomm{\acomm}(\aconfig) \isdef
\set{\aconfig' \mid \succj{\acomm}{\aconfig}{\aconfig'}}$, otherwise.
\fi The only primitive commands that may fault are $\delete(x)$ and
$\disconnect(x_1.p_1, x_2.p_2)$; for both, the premisses of the faulty
rules are disjoint from the ones for normal termination, thus the
action $\semcomm{\acomm}$ is properly defined for all programs
$\acomm$. Notice that the rule for sequential composition uses the
havoc action $\ahavoc$ in the premiss, thus capturing the interleaving
of havoc state changes and reconfiguration actions.


\subsection{Reconfiguration Proof System}

To reason about the correctness properties of reconfiguration
programs, we introduce a Hoare-style proof system consisting of a set
of axioms that formalize the primitive commands
(Fig. \ref{fig:hoare}a), a set of inference rules for the composite
programs (Fig. \ref{fig:hoare}b) and a set of structural rules
(Fig. \ref{fig:hoare}c). The judgements are Hoare triples
$\hoare{\phi}{\acomm}{\psi}$, where $\phi$ and $\psi$ (called pre- and
postcondition, respectively) are \adl\ formul{\ae}. The triple
$\hoare{\phi}{\acomm}{\psi}$ is \emph{valid}, written $\models
\hoare{\phi}{\acomm}{\psi}$, if and only if
$\semcomm{\acomm}(\sem{\phi}{}) \subseteq \sem{\psi}{}$. Note that a
triple is valid only if the program does not fault on any model of the
precondition. In other words, an invalid Hoare triple
$\hoare{\phi}{\acomm}{\psi}$ cannot distinguish between
$\semcomm{\acomm}(\sem{\phi}{}) \not\subseteq \sem{\psi}{}$
(non-faulting incorrectness) and $\semcomm{\acomm}(\sem{\phi}{}) =
\errconfigs$ (faulting).

The axioms (Fig. \ref{fig:hoare}a) give the \emph{local
specifications} of the primitive commands in the language by Hoare
triples whose preconditions describe only those resources (components
and interactions) necessary to avoid faulting. In particular,
$\delete(x)$ and $\disconnect(x_1.p_1, x_2.p_2)$ require a single
component $\company{x}$ and an interaction
$\interac{x_1}{p_1}{x_2}{p_2}$ to avoid faulting, respectively. The
rules for sequential composition and iteration (Fig \ref{fig:hoare}b)
use the following semantic side condition, based on the havoc action
(Def. \ref{def:havoc}):
\begin{definition}\label{def:havoc-invariant}
  A formula $\phi$ is \emph{havoc invariant} if and only if
  $\ahavoc(\sem{\phi}{}) \subseteq \sem{\phi}{}$.
\end{definition}
Note that the dual inclusion $\sem{\phi}{} \subseteq
\ahavoc(\sem{\phi}{})$ always holds, because $\ahavoc$ is the
reflexive and transitive closure of the $~\Arrow{}{}~$ relation
(Def. \ref{def:havoc}). Since havoc invariance is required to prove
the validity of Hoare triples involving sequential composition, it is
important to have a way of checking havoc invariance. We describe a
proof system for such havoc queries in \S\ref{sec:havoc}.  Moreover,
the side condition of the \emph{consequence rule}
(Fig. \ref{fig:hoare}c left) consists of two entailments, that are
discharged by an external decision procedure (discussed in
\S\ref{sec:open-problems}).

\begin{figure}[t!]
  \vspace*{-\baselineskip}
  \caption{Proof System for the Reconfiguration Language}
  \label{fig:hoare}
          {\footnotesize\begin{center}

            \begin{minipage}{.45\textwidth}
            \begin{prooftree}
              \AxiomC{}
              \UnaryInfC{$\hoare{\emp}{\new(q,x)}{\compin{x}{q}}$}
            \end{prooftree}
            \end{minipage}
            \begin{minipage}{.45\textwidth}
              \begin{prooftree}
                \AxiomC{}
                \UnaryInfC{$\hoare{\company{x}}{\delete(x)}{\emp}$}
              \end{prooftree}
            \end{minipage}
            \vspace*{.5\baselineskip}
            
            \begin{minipage}{.44\textwidth}
              \begin{prooftree}
                \AxiomC{}
                \UnaryInfC{$\hoare{\emp}{\connect(x_1.p_1, x_2.p_2)}{\interac{x_1}{p_1}{x_2}{p_2}}$}
              \end{prooftree}
            \end{minipage}
            \begin{minipage}{.55\textwidth}
              \begin{prooftree}
                \AxiomC{}
                \UnaryInfC{$\hoare{\interac{x_1}{p_1}{x_2}{p_2}}{\disconnect(x_1.p_1, x_2.p_2)}{\emp}$}
              \end{prooftree}
            \end{minipage} \\
            \vspace*{.5\baselineskip}
            \centerline{a. Axioms for Primitive Commands} 
            \vspace*{.5\baselineskip}
            
            \begin{minipage}{.5\textwidth}
            \begin{prooftree}
              \AxiomC{$\hoare{\phi \wedge (\theta * \predtrue)}{\acomm}{\psi}$}
              \RightLabel{$\fv{\phi} \cap \set{x_1, \ldots, x_k} = \emptyset$}
              \UnaryInfC{$\hoare{\phi}{
                  \withcomm{x_1, \ldots, x_k}{\theta}{\acomm}
                }{\exists x_1 \ldots \exists x_k ~.~ \psi}$}
            \end{prooftree} 
            \end{minipage}
            
            \begin{minipage}{.49\textwidth}
            \begin{prooftree}
              \AxiomC{$\hoare{\phi}{\acomm_1}{\varphi}$}
              \AxiomC{$\hoare{\varphi}{\acomm_2}{\psi}$}
              \RightLabel{$\ahavoc(\sem{\varphi}{}) \subseteq \sem{\varphi}{}$}
              \BinaryInfC{$\hoare{\phi}{\acomm_1;\acomm_2}{\psi}$}
            \end{prooftree}
            \end{minipage}
            
            \begin{minipage}{.4\textwidth}
              \begin{prooftree}
                \AxiomC{$\hoare{\phi}{\acomm_1}{\psi}$}
                \AxiomC{$\hoare{\phi}{\acomm_2}{\psi}$}
                \BinaryInfC{$\hoare{\phi}{\acomm_1 + \acomm_2}{\psi}$}
              \end{prooftree}
            \end{minipage}
            \begin{minipage}{.4\textwidth}
              \begin{prooftree}
                \AxiomC{$\hoare{\phi}{\acomm}{\phi}$}
                \RightLabel{$\ahavoc(\sem{\phi}{}) \subseteq \sem{\phi}{}$}
                \UnaryInfC{$\hoare{\phi}{\acomm^*}{\phi}$}
              \end{prooftree}
            \end{minipage} \\            
            \vspace*{.5\baselineskip}
            
            \centerline{b. Inference Rules for Programs} 
            \vspace*{.5\baselineskip}
            \begin{minipage}{.4\textwidth}
              \begin{prooftree}
                \AxiomC{$\hoare{\phi_i}{\acomm}{\psi_i} \mid i \in \interv{1}{k}$}
                \UnaryInfC{$\hoare{\bigvee_{i=1}^k \phi_i}{\acomm}{\bigvee_{i=1}^k \psi_i}$}
              \end{prooftree}
            \end{minipage}
            \begin{minipage}{.4\textwidth}
              \begin{prooftree}
                \AxiomC{$\hoare{\phi_i}{\acomm}{\psi_i} \mid i \in \interv{1}{k}$}
                \UnaryInfC{$\hoare{\bigwedge_{i=1}^k \phi_i}{\acomm}{\bigwedge_{i=1}^k \psi_i}$}
              \end{prooftree}
            \end{minipage} \\
            \vspace*{.5\baselineskip} 
            \begin{minipage}{.4\textwidth}
              \begin{prooftree}
                \AxiomC{$\hoare{\phi'}{\acomm}{\psi'}$}
                \RightLabel{$\begin{array}{l} \phi \models \phi' \\
                    \psi' \models \psi \end{array}$}
                \UnaryInfC{$\hoare{\phi}{\acomm}{\psi}$}
              \end{prooftree}
            \end{minipage}
            \begin{minipage}{.4\textwidth}
              \begin{prooftree}
                \AxiomC{$\hoare{\phi}{\acomm}{\psi}$}
                \RightLabel{$\begin{array}{l}
                    \acomm \in \localcomms \\
                    \modif{\acomm}\cap\fv{F}=\emptyset
                    \end{array}$}
                \UnaryInfC{$\hoare{\phi * F}{\acomm}{\psi * F}$}
              \end{prooftree}
            \end{minipage} \\            
            
            \vspace*{.5\baselineskip} 
            \centerline{c. Structural Inference Rules}
          \end{center}}
\vspace*{-\baselineskip}
\end{figure}

The \emph{frame rule} (Fig. \ref{fig:hoare}c bottom-right) allows to
apply the specification of a \emph{local program}, defined below, to a
set of configurations that may contain more resources (components and
interactions) than the ones asserted by the precondition. Intuitively,
a local program requires a bounded amount of components and
interactions to avoid faulting and, moreover, it only changes the
configuration of the local subsystem, not affecting the entire
system's configuration. Formally, the set $\localcomms$ of local
programs is the least set that contains the primitive commands
$\primcomms$ and is closed under the application of the following
rules:
\[
\acomm \in \localcomms \Rightarrow \withcomm{\vec{x}}{\pureform}{\acomm} \in
\localcomms \text{, if $\pureform$ is a conjunction of (dis-)equalities}
\hspace*{4mm} \acomm_1, \acomm_2 \in \localcomms \Rightarrow \acomm_1
+ \acomm_2 \in \localcomms
\]
The extra resources, not required to execute a local program, are
specified by a frame $F$, whose free variables are not modified by the
local program $\acomm$. Formally, the set of variables modified by a
local program $\acomm \in \localcomms$ is defined inductively on its
structure:
\[\begin{array}{l}
\hspace*{-2mm}\modif{\new(q,x)} \isdef \set{x} \hspace*{2.5cm}
\modif{\acomm} \isdef \emptyset \text{, for all }
\acomm\in\primcomms\smallsetminus\set{\new(q,x)\mid q\in \states, x \in \vars} \\
\hspace*{-2mm}\modif{\withcomm{\vec{x}}{\theta}{\acomm}} \isdef \vec{x} \cup \modif{\acomm} 
\hspace*{1.8cm}
\modif{\acomm_1 + \acomm_2} \isdef \modif{\acomm_1} \cup \modif{\acomm_2}
\end{array}\]
The frame rule is sound only for programs whose semantics are
\emph{local actions}, defined below:

\begin{definition}[Locality]\label{def:locality}
  Given a set of variables $X \subseteq \vars$, an action $f :
  \configset \rightarrow \pow{\configset}^\errconfigs$ is \emph{local
    for $X$} if and only if $f(\aconfig_1 \comp \aconfig_2) \subseteq
  f(\aconfig_1) \comp \lift{\set{\aconfig_2}}{X}$ for all $\aconfig_1,
  \aconfig_2 \in \configset$, where, for any set $S$ of
  configurations:
  \[\lift{S}{X} \isdef
  \Set{(\comps, \interacs, \statemap', \store') \mid \config \in S,
    \forall x \in \vars \smallsetminus X ~.~ \store'(x)=\store(x),~
    \forall c \in \universe \smallsetminus \store(X) ~.~ \statemap'(c)
    = \statemap(c) }.\] An action $f$ is \emph{local} if and only if
  it is local for the empty set of variables.
\end{definition}
An action that is local for a set of variables $X$ allows for the
change of the store values of the variables in $X$ and the states of
the components indexed by those values, only. Essentially, $\new(q,x)$
is local for $\set{x}$, because the fresh index associated to $x$ is
nondeterministically chosen and the state is $q$, whereas the other
primitive commands are local, in general. The semantics of every local
program $\acomm\in\localcomms$ is a local action, as shown below:

\begin{restatable}{lemma}{LemmaLocality}\label{lemma:locality}
  For every program $\acomm \in \localcomms$, the action
  $\semcomm{\acomm}$ is local for $\modif{\acomm}$.
\end{restatable}

Moreover, $\localcomms$ is precisely the set of programs with local
semantics, as conditionals and sequential compositions (hence also
iterations) are not local, in general:

 \begin{example}\label{ex:nonlocal}
   To understand why $\localcomms$ is precisely the set of local
   commands, consider the programs: \begin{itemize}
   \item $(\withcomm{x}{\compin{x}{q}}{\delete(x)})$ is
     not local because, letting $\aconfig_1$ be a configuration with
     zero components and $\aconfig_2$ be a configuration with one
     component in state $q$, we have:
     \[\begin{array}{l}
     \semcomm{\withcomm{x}{\compin{x}{q}}{\delete(x)}}(\aconfig_1 \comp \aconfig_2) = 
     \semcomm{\withcomm{x}{\compin{x}{q}}{\delete(x)}}(\aconfig_2) = \set{\aconfig_1} \\
     \text{whereas } \semcomm{\withcomm{x}{\compin{x}{q}}{\delete(x)}}(\aconfig_1) \comp
     \set{\aconfig_2} = \emptyset \comp \set{\aconfig_2} = \emptyset.
     \end{array}\]
   \item $(\skipcomm; \skipcomm)$ is not local because, considering
     the system from Fig. \ref{fig:token-ring}, if we take $\aconfig_1$ and
     $\aconfig_2$, such that $\aconfig_1 \models
     \compin{x}{\toktoken}$ and $\aconfig_2 \models
     \interac{x}{out}{y}{in} * \compin{y}{\toknotok}$, we have:
     \[\begin{array}{l}
     \semcomm{\skipcomm; \skipcomm}(\aconfig_1 \comp \aconfig_2) = 
     \sem{\compin{x}{\toktoken} * \interac{x}{out}{y}{in} * \compin{y}{\toknotok}}{} ~\cup~
     \sem{\compin{x}{\toknotok} * \interac{x}{out}{y}{in} * \compin{y}{\toktoken}}{} \\
     \text{whereas } \semcomm{\skipcomm; \skipcomm}(\aconfig_1) \comp \set{\aconfig_2} =
     \sem{\compin{x}{\toktoken} * \interac{x}{out}{y}{in} * \compin{y}{\toknotok}}{} \hfill\blacksquare
     \end{array}\]
   \end{itemize}
\end{example}

We write $\vdash \hoare{\phi}{\acomm}{\psi}$ if and only if
$\hoare{\phi}{\acomm}{\psi}$ can be derived from the axioms using the
inference rules from Fig. \ref{fig:hoare} and show the soundness of
the proof system in the following. The next lemma gives sufficient
conditions for the soundness of the axioms (Fig. \ref{fig:hoare}a):

\begin{restatable}{lemma}{LemmaHoareAxioms}\label{lemma:hoare-axioms}
  For each axiom $\hoare{\phi}{\acomm}{\psi}$, where $\acomm \in
  \primcomms$ is primitive, we have $\semcomm{\acomm}(\sem{\phi}{}) =
  \sem{\psi}{}$.
\end{restatable}
The soundness of the proof system in Fig. \ref{fig:hoare} follows from
the soundness of each inference rule:

\begin{restatable}{theorem}{ThmSoundness}\label{thm:soundness}
  For any Hoare triple $\hoare{\phi}{\acomm}{\psi}$, if\; $\vdash
  \hoare{\phi}{\acomm}{\psi}$ then $\models
  \hoare{\phi}{\acomm}{\psi}$.
\end{restatable}

As an optimization, reconfiguration proofs can often be simplified, by
safely skipping the check of one or more havoc invariance side
conditions of sequential compositions, as explained below.

\begin{definition}\label{def:single-reversal}
  A program of the form \(\disconnect(x_1.p_1, x'_1.p'_1); ~\ldots~
  \disconnect(x_k.p_k, x'_k.p'_k);\) \(\connect(x_{k+1}.p_{k+1},
  x'_{k+1}.p'_{k+1}); ~\ldots~ \connect(x_\ell.p_\ell,
  x'_\ell.p'_\ell)\) is said to be a \emph{single reversal program}.
\end{definition}
Single reversal programs first disconnect components and then
reconnect them in a different way. For such programs, only the first
and last application of the sequential composition rule require
checking havoc invariance:

\begin{restatable}{proposition}{PropSingleReversal}\label{prop:single-reversal}
  Let \(\acomm = \disconnect(x_1.p_1, x'_1.p'_1); ~\ldots~
  \disconnect(x_k.p_k, x'_k.p'_k);\) \\
  \(\connect(x_{k+1}.p_{k+1}, x'_{k+1}.p'_{k+1}); ~\ldots~
  \connect(x_\ell.p_\ell,x'_\ell.p'_\ell)\) be a single reversal
  program.  If $\phi_0, \ldots, \phi_\ell$ are \adl\ formul{\ae}, such
  that: \begin{itemize}
  \item $\models \hoare{\phi_{i-1}}{\disconnect(x_i.p_i, x'_i.p'_i)}{\phi_i}$, for all $i \in \interv{1}{k}$,
  \item $\models \hoare{\phi_{j-1}}{\connect(x_j.p_j, x'_j.p'_j)}{\phi_j}$, for
    all $j \in \interv{k+1}{\ell}$, and
  \item $\phi_1$ and $\phi_{\ell-1}$ are havoc invariant,
  \end{itemize}
  then we have \(\models \hoare{\phi_0}{\acomm}{\phi_\ell}\).
\end{restatable}

\subsection{Examples of Reconfiguration Proofs}
\label{sec:reconfiguration-proofs}

We prove that the outcome of the reconfiguration program from
Fig. \ref{fig:token-ring} (Listing~\ref{lst:delete-component-notok}),
started in a token ring configuration with at least two components in
state $\toknotok$ and at least one in state $\toktoken$, is a token
ring with at least one component in each state. The pre- and
postcondition are $\exists x \exists y ~.~ \chain{2}{1}(x,y) *
\interac{y}{out}{x}{in}$ and $\exists x \exists y ~.~
\chain{1}{1}(x,y) * \interac{y}{out}{x}{in}$, respectively, with the
definitions of $\chain{h}{t}(x,y)$ given in Example \ref{ex:chain},
for all constants $h,t\in\N$.

\begin{center}
    \begin{minipage}{.6\textwidth}
    {\small\begin{lstlisting}
$\textcolor{violet}{\set{\exists x \exists y ~.~ \chain{2}{1}(x,y) * \interac{y}{out}{x}{in}}}$
with $x,y,z : \interac{x}{out}{y}{in} \ast \compin{y}{\toknotok} \ast \interac{y}{out}{z}{in}$ do
$\textcolor{violet}{\set{(\exists x \exists y ~.~ \chain{2}{1}(x,y) * \interac{y}{out}{x}{in})
 \wedge \big(\interac{x}{out}{y}{in} * \compin{y}{\toknotok} * \interac{y}{out}{z}{in} * \predtrue\big)}}$ $(\star)$
$\textcolor{violet}{\set{\interac{x}{out}{y}{in} * \boxaround{\compin{y}{\toknotok} * \interac{y}{out}{z}{in} * \chain{1}{1}(z,x)}}}$ 
disconnect(x.$\mathit{out}$, y.$\mathit{in}$);
$\textcolor{violet}{\set{\boxaround{\compin{y}{\toknotok}} * \interac{y}{out}{z}{in} * \boxaround{\chain{1}{1}(z,x)}}}$ $\hinv$
disconnect(y.$\mathit{out}$, z.$\mathit{in}$); 
$\textcolor{violet}{\set{\compin{y}{\toknotok} * \boxaround{\chain{1}{1}(z,x)}}}$ $\hinv$
  \end{lstlisting}}
  \end{minipage}
  \begin{minipage}{.39\textwidth}
    {\small\begin{lstlisting}



      
delete(y); 
$\textcolor{violet}{\set{\boxaround{\chain{1}{1}(z,x)}}}$ $\hinv$
connect(x.$\mathit{out}$, z.$\mathit{in}$)
$\textcolor{violet}{\set{\chain{1}{1}(z,x) * \interac{x}{out}{z}{in}}}$
od
$\textcolor{violet}{\set{\exists x \exists y ~.~\chain{1}{1}(x,y) * \interac{y}{out}{x}{in}}}$
    \end{lstlisting}}
    \end{minipage}
\end{center}
  
The inference rule for conditional programs sets up the precondition
$(\star)$ for the body of the conditional. This formula is equivalent
to $\interac{x}{out}{y}{in} * \compin{y}{\toknotok} *
\interac{y}{out}{z}{in} * \chain{1}{1}(z,x)$. To understand this
point, we derive from $(\star)$ the equivalences:
\[\begin{array}{l}
(\exists x \exists y ~.~ \chain{2}{1}(x,y) * \interac{y}{out}{x}{in}) \wedge
(\interac{x}{out}{y}{in} * \compin{y}{\toknotok} * \interac{y}{out}{z}{in} * \predtrue) \equiv \\
\exists \overline{x} \exists \overline{y} \exists \overline{z} ~.~ \interac{\overline{x}}{out}{\overline{y}}{in} *
\compin{\overline{y}}{\toknotok} * \interac{\overline{y}}{out}{\overline{z}}{in} *
\chain{1}{1}(\overline{z},\overline{x}) \wedge (\interac{x}{out}{y}{in} * \compin{y}{\toknotok} *
\interac{y}{out}{z}{in} * \predtrue) \\
\equiv \interac{x}{out}{y}{in} * \compin{y}{\toknotok} * \interac{y}{out}{z}{in} * \chain{1}{1}(z,x)
\end{array}\]
where the first step can be proven by entailment checking (discussed
in \S\ref{sec:open-problems}). The following four annotations above
are obtained by applications of the axioms and the frame rule (the
frame formul{\ae} are displayed within boxes). The sequential
composition rule is applied by proving first that the annotations
marked as $\hinv$ are havoc invariant (the proof is given later in
\S\ref{sec:havoc-proof}).

We have considered the reconfiguration program from
Fig. \ref{fig:token-ring} (Listing~\ref{lst:delete-component-notok})
which deletes a component from a token ring. The dual operation is the
addition of a new component. Here the precondition states that the
system is a valid token ring, with at least one component in state
$\toknotok$ and at least another one in state $\toktoken$. We prove
that the execution of the dual program yields a token ring with at least
two components in state $\toknotok$, as the new component is added
without a token.

\begin{center}
  \begin{minipage}{.64\textwidth}
  {\small\begin{lstlisting}
 $\textcolor{violet}{\set{\exists x \exists y ~.~ \chain{1}{1}(x,y) * \interac{y}{out}{x}{in}}}$
with $x,z : \interac{x}{out}{z}{in}$ do
$\textcolor{violet}{\set{(\exists x \exists y ~.~ \chain{1}{1}(x,y) * \interac{y}{out}{x}{in})
  \wedge \big(\interac{x}{out}{z}{in} * \predtrue\big)}}$ $(\star)$
$\textcolor{violet}{\set{\interac{x}{out}{z}{in} * \boxaround{\chain{1}{1}(z,x)}}}$
disconnect(x.$\mathit{out}$, z.$\mathit{in}$);
$\textcolor{violet}{\set{\chain{1}{1}(z,x)}}$ $\hinv$
new($\toknotok$,y);
$\textcolor{violet}{\set{ \compin{y}{\toknotok} * \boxaround{\chain{1}{1}(z,x)}}}$ $\hinv$
  \end{lstlisting}}
  \end{minipage}
  \begin{minipage}{.35\textwidth}
    {\small\begin{lstlisting}
      
connect(y.$\mathit{out}$,z.$\mathit{in}$);
$\textcolor{violet}{\set{ \boxaround{\scriptstyle \compin{y}{\toknotok}} *
  \interac{y}{out}{z}{in} * \boxaround{\chain{1}{1}(z,x)}}}$
$\textcolor{violet}{\set{\chain{2}{1}(y,x)}}$ $\hinv$
connect(x.$\mathit{out}$,y.$\mathit{in}$)
$\textcolor{violet}{\set{\chain{2}{1}(y,x) * \interac{x}{out}{y}{in}}}$
od
$\textcolor{violet}{\set{\exists x \exists y ~.~\chain{2}{1}(x,y) * \interac{y}{out}{x}{in}}}$
    \end{lstlisting}}
    \end{minipage}
\end{center}

The annotation $(\star)$ is given by the inference rule for
conditional programs. Then we can derive the equivalence \((\exists x
\exists y ~.~ \chain{1}{1}(x,y) * \interac{y}{out}{x}{in}) \wedge
\big(\interac{x}{out}{z}{in} * \predtrue\big) \equiv
\interac{x}{out}{z}{in} * \chain{1}{1}(z,x)\). In the subsequent
lines, some axioms and the frame rule are applied (the frame is
displayed in the postconditions of the commands within the boxes). The
annotations marked as $\hinv$ must be shown to be havoc invariant and
then the sequential composition rule is applied to complete the proof.

\section{The Havoc Proof System}
\label{sec:havoc}

This section describes a set of axioms and inference rules for proving
the validity of havoc invariance queries of the form
$\ahavoc(\sem{\phi}{}) \subseteq \sem{\phi}{}$, where $\phi$ is a
\adl\ formula interpreted over a given SID and $\ahavoc$ is the havoc
action (Def. \ref{def:havoc}). Such a query is valid if and only if
the result of applying any sequence of interactions on a model of
$\phi$ is again a model of $\phi$ (Def. \ref{def:havoc-invariant}).
Havoc invariance queries occur as side conditions in the rules for
sequential composition and iteration (Fig. \ref{fig:hoare}b) of
reconfiguration programs. Thus, having a proof system for havoc
invariance is crucial for the applicability of the rules in
Fig. \ref{fig:hoare} to obtain proofs of reconfiguration programs.

The havoc proof system uses a compositional rule, able to split a
query of the form $\ahavoc(\sem{\phi_1 * \phi_2}{}) \subseteq
\sem{\psi_1 * \psi_2}{}$ into two queries $\ahavoc(\sem{\phi_i *
  \mathcal{F}_i}{}) \subseteq \sem{\psi_i * \mathcal{F}_i}{}$, where
each \emph{frontier formula} $\mathcal{F}_i$ defines a set of
interactions that over-approximate the effect of executing the system
described by $\phi_{3-i}$ (resp. $\psi_{3-i}$) over the one described
by $\phi_i$ (resp. $\psi_i$), for $i = 1,2$. In principle, the
frontier formul{\ae} ($\mathcal{F}_1$ and $\mathcal{F}_2$) can be
understood as describing the interference between parallel actions in
an assume/rely guarantee-style parallel composition rule
\cite{Owicki1978,DBLP:phd/ethos/Jones81}. In particular, since the
frontier formul{\ae} only describe interactions and carry no state
information whatsoever, such assumptions about events triggered by the
environment are reminiscent of compositional reasoning about
input/output automata \cite{DBLP:journals/scp/ChiltonJK14}.

Compositional reasoning about havoc actions requires the following
relaxation of the definition of havoc state changes
(Def. \ref{def:havoc}), by allowing the firing of loose, in addition
to tight interactions:

\begin{definition}\label{def:open}
  The following rules define a relation $ \Open{(c_1, p_1, c_2,
    p_2)}{} ~\subseteq \configset \times \configset$, parameterized by
  a given interaction $(c_1, p_1, c_2, p_2)$:
  \begin{prooftree}
  \AxiomC{$\begin{array}{ccccc}
      (c_1, p_1, c_2, p_2) \in \interacs & c_i \in \comps, c_{3-i} \not\in \comps & \statemap(c_i) = q_i & q_i \arrow{p_i}{} q'_i
    \end{array}$}
  \RightLabel{$i = 1,2$}
  \LeftLabel{(\looserule)}
  \UnaryInfC{$(\comps,\interacs,\statemap,\store) \Open{(c_1, p_1, c_2, p_2)}{} (\comps,\interacs,\statemap[c_i \leftarrow q'_i],\store)$}
  \end{prooftree}
  \begin{prooftree}
    \AxiomC{$\begin{array}{ccccc}
        (c_1, p_1, c_2, p_2) \in \interacs & c_1 \neq c_2 \in \comps & \statemap(c_i) = q_i & q_i \arrow{p_i}{} q'_i,~ i=1,2
      \end{array}$}
    \LeftLabel{(\tightrule)}
    \UnaryInfC{$(\comps,\interacs,\statemap,\store) \Open{(c_1, p_1, c_2, p_2)}{} (\comps,\interacs,\statemap[c_1 \leftarrow q'_1][c_2 \leftarrow q'_2],\store)$}
  \end{prooftree}
  For a sequence $w = i_1 \ldots i_n$ of interactions, we define
  $\Open{w}{}$ to be the composition of $\Open{i_1}{}, \ldots,
  \Open{i_n}{}$, assumed to be the identity relation, if $w$ is empty.
\end{definition}
The difference with Def. \ref{def:havoc} is that only the states of
the components from the configuration are changed according to the
transitions in the behavior. This more relaxed definition matches the
intuition of partial systems in which certain interactions may be
controlled by an external environment; those interactions are
conservatively assumed to fire anytime they are enabled by the
components of the current structure, independently of the environment.

\begin{example}(contd. from Example \ref{ex:havoc})\label{ex:open}
   Let $\widetilde{\aconfig}_i = (\set{c_2, c_3}, \set{(c_i, \mathit{out}, c_{i
       \mod 3 + 1}, \mathit{in}) \mid i \in \interv{1}{3}},
   \statemap_i, \store)$, for $i \in \interv{1}{3}$ be the top-most
   configurations from Fig. \ref{fig:reconfiguration} without the
   $c_1$ component, where $\statemap_1(c_2) = \toknotok$,
   $\statemap_1(c_3) = \toktoken$, $\statemap_2(c_2) =
   \statemap_2(c_3) = \toknotok$, $\statemap_3(c_2) = \toktoken$,
   $\statemap_3(c_3) = \toknotok$. Then, by executing the loose
   interactions $(c_3,\mathit{out},c_1,\mathit{in})$ and
   $(c_1,\mathit{out},c_2,\mathit{in})$ from $\aconfig_1$, we obtain:
   \[\widetilde{\aconfig}_1 \Open{(c_3,\mathit{out},c_1,\mathit{in})}{} \widetilde{\aconfig}_2
   \Open{(c_1,\mathit{out},c_2,\mathit{in})}{} \widetilde{\aconfig}_3\]
   Executing the tight interaction $(c_2, \mathit{out}, c_3,
   \mathit{in})$ from $\widetilde{\aconfig}_3$ leads back to $\widetilde{\aconfig}_1$ i.e.,
   \(\widetilde{\aconfig}_3 \Open{(c_2,\mathit{out},c_3,\mathit{in})}{}
   \widetilde{\aconfig}_1\).\hfill$\blacksquare$
\end{example}

\subsection{Regular Expressions}

Proving the validity of a havoc query $\ahavoc(\sem{\phi}{}) \subseteq
\sem{\psi}{}$ involves reasoning about the sequences of interactions
that define the outcome of the havoc action. We specify languages of
such sequences using extended regular expressions, defined inductively
by the following syntax:
\[\are ::= \epsilon \mid \interof{\alpha} \mid \are \cdot \are
\mid \are \cup \are \mid \are^* \mid \are
\parcomp_{\aenv,\aenv} \are\] where $\epsilon$ denotes the empty
string, $\interof{\alpha}$ is an \emph{alphabet symbol} associated
with either an interaction atom or a predicate atom $\alpha$ and
$\cdot$, $\cup$ and ${~}^*$ are the usual concatenation, union and
Kleene star. By $\are_1 \parcomp_{\aenv_1,\aenv_2} \are_2$ we denote
the interleaving (zip) product of the languages described by $\are_1$
and $\are_2$ with respect to the sets $\aenv_1$ and $\aenv_2$ of
alphabet symbols of the form $\interof{\alpha}$, respectively.

The \emph{language} of a regular expression $\are$ in a configuration
$\aconfig = \config$ is defined below:
\[\begin{array}{ll}
\semlang{\epsilon}{\aconfig} \isdef \set{\epsilon} 
&
\semlang{\interof{\alpha}}{\aconfig} \isdef \bigcup\set{\interacs \mid (\comps,\interacs,\statemap,\store) \substreq \aconfig,~
  (\comps,\interacs,\statemap,\store) \models \alpha}
\\
\semlang{\are_1 \cdot \are_2}{\aconfig} \isdef \set{w_1w_2 \mid w_i \in \semlang{\are_i}{\aconfig},~ i = 1,2}
&
\semlang{\are_1 \cup \are_2}{\aconfig} \isdef \semlang{\are_1}{\aconfig} \cup \semlang{\are_2}{\aconfig} \\
\semlang{\are^*}{\aconfig} \isdef \bigcup_{i\geq0} \semlang{\are^i}{\aconfig} 
&
\semlang{\are_1 \parcomp_{\aenv_1,\aenv_2} \are_2}{\aconfig} \isdef \set{w \mid \proj{w}{\semlang{\aenv_i}{\aconfig}} \in \semlang{\are_i}{\aconfig}, ~ i = 1,2}
\end{array}\]
where $\semlang{\aenv}{\aconfig} \isdef \bigcup_{\interof{\alpha} \in
  \aenv}\semlang{\interof{\alpha}}{\aconfig}$ and
$\proj{w}{\semlang{\aenv}{\aconfig}}$ is the word obtained from $w$ by
deleting each symbol not in $\semlang{\aenv}{\aconfig}$ from it. The
$i$-th composition of $\are$ with itself is defined, as usual, by
$\are^0 \isdef \epsilon$ and $\are^{i+1} = \are^i \cdot \are$, for $i
\geq 0$. We denote by $\supp{\are}$ the \emph{support} of $\are$ i.e.,
set of alphabet symbols $\interof{\alpha}$ from the regular expression
$\are$.

\begin{example}\label{ex:interaction-language}
  Let $\aconfig = (\set{c_1, c_2, c_3, c_4}, \set{(c_1, \mathit{out},
    c_2, \mathit{in}), (c_2, \mathit{out}, c_3, \mathit{in}), (c_3,
    \mathit{out}, c_4, \mathit{in})}, \statemap, \store)$ be a
  configuration, such that $\store(x) = c_1$, $\store(y) = c_2$ and
  $\store(z) = c_3$. Then, we have
  $\semlang{\interof{\interac{x}{out}{y}{in}}}{\aconfig} = \set{(c_1,
    \mathit{out}, c_2, \mathit{in})}$,
  $\semlang{\interof{\interac{y}{out}{z}{in}}}{\aconfig} = \set{(c_2,
    \mathit{out}, c_3, \mathit{in})}$ and
  $\semlang{\interof{\chain{0}{0}(x,z)}}{\aconfig} = \{(c_1,
  \mathit{out}, c_2, \mathit{in})$, $(c_2, \mathit{out}, c_3,
  \mathit{in})\}$. \hfill$\blacksquare$
\end{example}

Given a configuration $\aconfig$ and a predicate atom $\alpha$, there
can be, in principle, more than one subconfiguration $\aconfig'
\substreq \aconfig$, such that $\aconfig' \models \alpha$. This is
problematic, because then $\semlang{\interof{\alpha}}{\aconfig}$ may
contain interactions from different subconfigurations of $\aconfig$,
that are models of $\alpha$, thus cluttering the definition of the
language $\semlang{\interof{\alpha}}{\aconfig}$. We fix this issue by
adapting the notion of \emph{precision}, originally introduced for
\seplog\ \cite{CalcagnoOHearnYan07,OHearnYangReynolds09}, to our
configuration logic:

\begin{definition}[Precision]\label{def:precise}
  A formula $\phi$ is \emph{precise on a set $S$ of configurations} if
  and only if, for every configuration $\aconfig \in S$, there exists
  at most one configuration $\aconfig'$, such that $\aconfig'
  \substreq \aconfig$ and $\aconfig' \models \phi$. A set of
  formul{\ae} $\Phi$ is \emph{precisely closed} if $\psi$ is precise
  on $\sem{\phi}{}$, for any two formul{\ae} $\phi,\psi \in \Phi$.
\end{definition}
Symbolic configurations using predicate atoms are not precise for
$\configset$, in general\footnote{Unlike the predicates that define
  acyclic data structures (lists, trees) in \seplog, which are
  typically precise.}. To understand this point, consider a
configuration consisting of two overlapping models of
$\chain{h}{t}(x,y)$, starting and ending in $x$ and $y$, respectively,
with a component that branches on two interactions after $x$ and
another component that joins the two branches before $y$. Then
$\chain{h}{t}(x,y)$ is not precise on such configurations (that are
not models of $\chain{h}{t}(x,y)$ whatsoever). On the positive side,
we can state the following:

\begin{restatable}{proposition}{PropChainPrecise}\label{prop:chain-precise}
  The set of symbolic configurations built using predicate atoms
  $\chain{h}{t}(x,y)$, for $h,t\geq0$ (Example \ref{ex:chain}) is
  precisely closed.
\end{restatable}

Two regular expressions are \emph{congruent} if they denote the same
language, whenever interpreted in the same configuration. Lifted to
models of a symbolic configuration, we define:
\begin{definition}\label{def:congruence}
  Given a symbolic configuration $\phi$, the regular expressions
  $\are_1$ and $\are_2$ are \emph{congruent for $\phi$}, denoted
  $\are_1 \requiv{\phi} \are_2$, if and only if
  $\semlang{\are_1}{\aconfig} = \semlang{\are_2}{\aconfig}$, for all
  configurations $\aconfig \in \sem{\phi}{}$.
\end{definition}
Despite the universal condition that ranges over a possibly infinite
set of configurations, congruence of regular expressions with alphabet
symbols of the form $\interof{\alpha}$, where $\alpha$ is an
interaction or a predicate atom, is effectively decidable by an
argument similar to the one used to prove equivalence of symbolic
automata \cite{DAntoniV21}.

\subsection{Inference Rules for Havoc Triples}

We use judgements of the form $\havoctriple{\aenv}{\phi}{\are}{\psi}$,
called \emph{havoc triples}, where $\phi$ and $\psi$ are
\adl\ formul{\ae}, $\are$ is a regular expression, and $\aenv$ is an
\emph{environment} (a set of alphabet symbols), whose role will be
made clear below (Def. \ref{def:distinctive} and Lemma
\ref{lemma:distinctive}). A havoc triple states that each finite
sequence of (possibly loose) interactions described by a word in
$\are$, when executed in a model of the precondition $\phi$, yields a
model of the postcondition $\psi$.

\begin{definition}\label{def:havoc-valid}
  A havoc triple $\havoctriple{\aenv}{\phi}{\are}{\psi}$ is
  \emph{valid}, written $\models
  \havoctriple{\aenv}{\phi}{\are}{\psi}$ if and only if, for each
  configuration $\aconfig \in \sem{\phi}{}$, each sequence of
  interactions $w\in\semlang{\are}{\aconfig}$ and each configuration
  $\aconfig'$, such that $\aconfig \Open{w}{} \aconfig'$, we have
  $\aconfig' \in \sem{\psi}{}$.
\end{definition}

For a symbolic configuration $\phi$, we denote by $\iatoms{\phi}$ and
$\patoms{\phi}$ the sets of interaction and predicate atoms from
$\phi$, respectively and define the set of atoms $\atoms{\phi} \isdef
\iatoms{\phi} \cup \patoms{\phi}$ and the regular expression
$\interof{\phi} \isdef \bigcup_{\alpha\in\atoms{\phi}}
\interof{\alpha}$. We show that the validity of a havoc triple is a
sufficient argument for the validity of a havoc query; because havoc
triples are evaluated via open state changes
(Def. \ref{def:havoc-valid}), the dual implication is not true, in
general.

\begin{restatable}{proposition}{PropHavocValid}\label{prop:havoc-valid}
  If $\models
  \havoctriple{\aenv}{\phi}{\interof{\phi}^*}{\psi}$ then
  $\ahavoc(\sem{\phi}{}) \subseteq \sem{\psi}{}$.
\end{restatable}

We describe next a set of axioms and inference rules used to prove the
validity of havoc triples. For a symbolic configuration $\phi$, we
write $x \symconfeq{\phi} y$ ($x \symconfneq{\phi} y$) if and only if
the equality (disequality) between $x$ and $y$ is asserted by the
symbolic configuration $\phi$, e.g. $x \symconfeq{\emp * x = z * z =
  y} y$ and $x \symconfneq{\company{x} * \company{y}} y$; note that $x
\symconfneq{\phi} y$ is not necessarily the negation of $x
\symconfeq{\phi} y$.

\begin{definition}\label{def:disabled-excluded}
  For a symbolic configuration $\phi$ and an interaction atom
  $\interac{x_1}{p_1}{x_2}{p_2}$, we write: \begin{itemize}
  \item $\disabled{\phi}{\interac{x_1}{p_1}{x_2}{p_2}}$ if and only if
    $\phi$ contains a subformula $\compin{y}{q}$, such that $y
    \symconfeq{\phi} x_i$ and $q$ is not the pre-state of some
    behavior transition with label $p_i$, for some $i = 1,2$;
    intuitively, any interaction defined by the formula
    $\interac{x_1}{p_1}{x_2}{p_2}$ is disabled in any model of $\phi$,
  \item $\excluded{\phi}{\interac{x_1}{p_1}{x_2}{p_2}}$ if and only
    if, for each interaction atom $\interac{y_1}{p'_1}{y_2}{p'_2} \in
    \iatoms{\phi}$, there exists $i \in \interv{1}{2}$, such that $x_i
    \symconfneq{\phi} y_i$; intuitively, the interaction defined by
    the formula $\interac{x_1}{p_1}{x_2}{p_2}$ is not already present
    in a model of $\phi$ i.e., $\interac{x_1}{p_1}{x_2}{p_2} * \phi$
    is satisfiable.
  \end{itemize}
\end{definition}

The axioms (Fig. \ref{fig:havoc-rules}a) discharge valid havoc triples
for the empty sequence (\epsilonr), that changes nothing and the
sequence consisting of a single interaction atom, that can be either
disabled in every model (\disr), or enabled in some model (\inter) of
the precondition, respectively; in particular, the (\inter) axiom
describes the open state change produced by an interaction
(Def. \ref{def:open}), firing on a (possibly empty) set of components,
whose states match the pre-states of transitions for the associated
behaviors. The (\botr) axiom discharges trivially valid triples with
unsatisfiable (false) preconditions.

The redundancy rule (\ii) in Fig. \ref{fig:havoc-rules}b removes an
interaction atom from the precondition of a havoc triple, provided
that the atom is never interpreted as an interaction from the language
denoted by the regular expression from the triple. Conversely, the
rule (\ie) adds an interaction to the precondition, provided that the
precondition (with that interaction atom) is consistent. Note that,
without the $\excluded{\phi}{\alpha}$ side condition, we would obtain
a trivial proof for any triple, by adding an interaction atom twice to
the precondition, i.e. using the rule (\ie), followed by (\botr).

\begin{figure}[t!]
  \vspace*{-\baselineskip}
  \caption{Proof System for Havoc Triples}
  \label{fig:havoc-rules}
        {\small\begin{center}
          \begin{minipage}{.29\textwidth}
            \begin{prooftree}
              \AxiomC{}
              \LeftLabel{(\epsilonr)}
              \UnaryInfC{$\havoctriple{\aenv}{\phi}{\epsilon}{\phi}$}
            \end{prooftree}
          \end{minipage}
          \begin{minipage}{.4\textwidth}
            \begin{prooftree}
              \AxiomC{}
              \LeftLabel{(\disr)}
              \RightLabel{$\begin{array}{ll}
                  {\alpha = \interac{x_1}{p_1}{x_2}{p_2}} \\[-.5mm]
                  {\disabled{\phi}{\alpha}}
                \end{array}$}
              \UnaryInfC{$\havoctriple{\aenv}{\phi}{\interof{\alpha}}{\predfalse}$}
            \end{prooftree}
          \end{minipage}
          \ifArticle
          \begin{minipage}{.4\textwidth}
          \else
          \begin{minipage}{.29\textwidth}
            \fi
            \begin{prooftree}
              \AxiomC{}
              \LeftLabel{(\botr)}
              \UnaryInfC{$\havoctriple{\aenv}{\predfalse}{\are}{\psi}$}
            \end{prooftree}
          \end{minipage}

          \vspace*{.5\baselineskip}
          \begin{prooftree}
            \AxiomC{}
            \LeftLabel{(\inter)}
            \RightLabel{$\begin{array}{ll}
                {\alpha = \interac{x_1}{p_1}{x_2}{p_2}} \\[-.5mm]
                {J \subseteq \interv{1}{2}} \\[-.5mm]
              \end{array}$}
            \UnaryInfC{$\havoctriple{\aenv}{
                \alpha * \Asterisk_{\!\!\!\!j\in J} ~\compin{x_j}{q_j}
              }{
                \interof{\alpha}
              }{
                \alpha * 
                \Asterisk_{\!\!\!\!j \in J}\bigvee_{q_j \arrow{p_j}{} q'_j}
                \compin{x_j}{q'_j}
              }$}
          \end{prooftree}

          \centerline{a. Axioms} 
          \vspace*{.5\baselineskip}

          \ifArticle
          \begin{minipage}{.6\textwidth}
          \else
          \begin{minipage}{.5\textwidth}
          \fi
            \begin{prooftree}
              \AxiomC{$\havoctriple{\aenv \setminus \set{\interof{\alpha}}}{\phi}{\are}{\psi}$}
              \LeftLabel{(\ii)}
              \RightLabel{$\begin{array}{l}
                  {\alpha = \interac{x_1}{p_1}{x_2}{p_2}} \\[-.5mm]
                  {\interof{\alpha} \in \aenv \setminus \supp{\are}}
                \end{array}$}
              \UnaryInfC{$\havoctriple{\aenv}{\phi * \alpha}{\are}{\psi * \alpha}$}
            \end{prooftree}
          \end{minipage}
          \begin{minipage}{.49\textwidth}
            \begin{prooftree}
              \AxiomC{$\havoctriple{\aenv \cup \set{\interof{\alpha}}}{\phi * \alpha}{\are}{\psi * \alpha}$}
              \LeftLabel{(\ie)}
              \RightLabel{$\begin{array}{l}
                  {\alpha = \interac{x_1}{p_1}{x_2}{p_2}} \\[-.5mm]
                  {\excluded{\phi}{\alpha}}
                \end{array}$}
              \UnaryInfC{$\havoctriple{\aenv}{\phi}{\are}{\psi}$}
            \end{prooftree}
          \end{minipage}


          \vspace*{.5\baselineskip}
          \centerline{b. Redundancy Rules}
          \vspace*{.5\baselineskip}
          
          \begin{prooftree}
            \AxiomC{$\havoctriple{\aenv_i}{\phi_i * \front{\phi_i}{\phi_{3-i}}}{\are_i}{\psi_i * \front{\phi_i}{\phi_{3-i}}} \mid i=1,2$}
            \LeftLabel{(\parr)}
            \RightLabel{$\begin{array}{l}
                {\aenv_i = \interof{\phi_i * \front{\phi_i}{\phi_{3-i}}}} \\[-.5mm]
                {i=1,2}
              \end{array}$}
            \UnaryInfC{$\havoctriple{\aenv_1 \cup \aenv_2}{\phi_1 * \phi_2}{\are_1 \parcomp_{\aenv_1,\aenv_2} \are_2}{\psi_1 * \psi_2}$}
          \end{prooftree}
          
          \vspace*{.5\baselineskip}
          \centerline{c. Composition Rule}
          \vspace*{.5\baselineskip}

          \begin{minipage}{.75\textwidth}
            \begin{prooftree}
              \AxiomC{$\havoctriple{\aenv}{\phi}{\are_1}{\varphi}$}
              \AxiomC{$\havoctriple{\aenv}{\varphi}{\are_2}{\psi}$}
              \LeftLabel{($\cdot$)}
              \RightLabel{$\phi \strucgeq \varphi$}
              \BinaryInfC{$\havoctriple{\aenv}{\phi}{\are_1 \cdot \are_2}{\psi}$}
            \end{prooftree}
          \end{minipage}
          \begin{minipage}{.24\textwidth}
            \begin{prooftree}
              \AxiomC{$\havoctriple{\aenv}{\phi}{\are}{\phi}$}
              \LeftLabel{($*$)}
              \UnaryInfC{$\havoctriple{\aenv}{\phi}{\are^*}{\phi}$}
            \end{prooftree}
          \end{minipage}
          \vspace*{.5\baselineskip}

          \ifArticle
          \begin{minipage}{.5\textwidth}
          \else          
          \begin{minipage}{.4\textwidth}
          \fi
          \begin{prooftree}
              \AxiomC{$\havoctriple{\aenv}{\phi}{\are_1}{\psi}$}
              \AxiomC{$\havoctriple{\aenv}{\phi}{\are_2}{\psi}$}
              \LeftLabel{(\ui)}
              \BinaryInfC{$\havoctriple{\aenv}{\phi}{\are_1 \cup \are_2}{\psi}$}
            \end{prooftree}
          \end{minipage}
          \ifArticle
          \begin{minipage}{.4\textwidth}
          \else
          \begin{minipage}{.28\textwidth}
            \fi
            \begin{prooftree}
              \AxiomC{$\havoctriple{\aenv}{\phi}{\are_1 \cup \are_2}{\psi}$}
              \LeftLabel{(\ue)}
              \UnaryInfC{$\havoctriple{\aenv}{\phi}{\are_1}{\psi}$}
            \end{prooftree}
          \end{minipage}
          \ifArticle
          \begin{minipage}{.4\textwidth}
          \else
          \begin{minipage}{.26\textwidth}
            \fi
            \begin{prooftree}
              \AxiomC{$\havoctriple{\aenv}{\phi}{\are_1}{\psi}$}
              \RightLabel{$\are_1 \requiv{\phi} \are_2$}
              \LeftLabel{(\congr)}
              \UnaryInfC{$\havoctriple{\aenv}{\phi}{\are_2}{\psi}$}
            \end{prooftree}
          \end{minipage}

          \vspace*{.5\baselineskip}
          \centerline{d. Regular Expression Rules}          
          \begin{minipage}{.37\textwidth}
          \begin{prooftree}
            \AxiomC{$\begin{array}{c} \\\\\\
                \havoctriple{\aenv}{\phi}{\are}{\psi'}
              \end{array}$}
            \LeftLabel{(\conseq)}
            \RightLabel{${\psi' \models \psi}$}
            \UnaryInfC{$\havoctriple{\aenv}{\phi}{\are}{\psi}$}
          \end{prooftree}
          \end{minipage}
          \begin{minipage}{.6\textwidth}
          \begin{prooftree}
            \AxiomC{$\havoctriple{\aenv'}{\phi * \varphi'}{\are'}{\psi}
              ~\left| \begin{array}{l} 
                {\apred(x_1,\ldots,x_{\#(\apred)}) \unfoldrule \exists \vec{z} ~.~ \varphi ~\in~ \asid} \\
                {(\exists \vec{z} ~.~ \varphi)[x_1/y_1,\ldots,x_{\#(\apred)}/y_{\#(\apred)}] = \exists \vec{z} ~.~ \varphi'} \\
                {\aenv' = \big(\aenv \setminus \set{\interof{\apred(y_1,\ldots,y_{\#(\apred)})}}\big) \cup \interof{\varphi'}} \\
                {\are' = \are\big[\interof{\apred(x_1,\ldots,x_{\#(\apred)})}~/~\interof{\varphi'}\big]}
              \end{array}\right.$}
            \LeftLabel{(\lu)}              
            \UnaryInfC{$\havoctriple{\aenv}{\phi * \apred(y_1,\ldots,y_{\#(\apred)})}{\are}{\psi}$}
          \end{prooftree}
          \end{minipage}
          \vspace*{.5\baselineskip}
          \begin{minipage}{.54\textwidth}
            \begin{prooftree}
              \AxiomC{$\havoctriple{\aenv}{\phi_i}{\are}{\psi_i} \mid i \in \interv{1}{k}$}
              \LeftLabel{($\vee$)}
              \RightLabel{$\begin{array}{l}
                  {\phi_i \struceq \phi_j} \\[-.5mm]
                  {i\neq j \in \interv{1}{k}} 
                \end{array}$
              }
              \UnaryInfC{$\havoctriple{\aenv}{\bigvee_{i=1}^k \phi}{\are}{\bigvee_{i=1}^k\psi_i}$}
            \end{prooftree}
          \end{minipage}
          \begin{minipage}{.45\textwidth}
            \begin{prooftree}
              \AxiomC{$\havoctriple{\aenv}{\phi_i}{\are}{\psi_i} \mid i \in \interv{1}{k}$}
              \LeftLabel{($\wedge$)}
              \RightLabel{$\begin{array}{l}
                  {\phi_i \struceq \phi_j} \\[-.5mm]
                  {i\neq j \in \interv{1}{k}}
                \end{array}$
              }
              \UnaryInfC{$\havoctriple{\aenv}{\bigwedge_{i=1}^k \phi_i}{\are}{\bigwedge_{i=1}^k\psi_i}$}
            \end{prooftree}
          \end{minipage}

          \vspace*{.5\baselineskip}
          \centerline{e. Structural Rules} 
        \end{center}}
        \vspace*{-.5\baselineskip}
\end{figure}


The composition rule (\parr) splits a proof obligation into two
simpler havoc triples (Fig. \ref{fig:havoc-rules}c). The pre- and
postconditions of the premisses are subformul{\ae} of the pre- and
postcondition of the conclusion, joined by separating conjunction and
extended by so-called \emph{frontier} formul{\ae}, describing those
sets of interaction atoms that may cross the boundary between the two
separated conjuncts. The frontier formul{\ae} play the role of
environment assumptions in a rely/assume-guarantee style of reasoning
\cite{Owicki1978,DBLP:phd/ethos/Jones81}. They are required for
soundness, under the semantics of open state changes
(Def. \ref{def:open}), which considers that the interactions can fire
anytime, unless they are explicitly disabled by some component from
$\phi_i$, for $i=1,2$.

Nevertheless, defining the frontier syntactically faces the following
problem: interactions introduced by a predicate atom in $\phi_i$, can
impact the state of a component defined by $\phi_{3-i}$. We tackle
this problem by forbidding predicate atoms that describe
configurations with loose ports, that belong to components lying
outside of the current configuration. We recall that a configuration
$\config$ is tight if and only if, for each interaction $(c_1, p_1,
c_2, p_2) \in \interacs$, we have $c_1, c_2 \in \comps$. Moreover, we
say that a formula $\varphi$ is \emph{tight} if and only if every
model of $\varphi$ is tight. For instance, a predicate atom
$\chain{h}{t}(x,y)$, for given $h,t \geq 0$ (Example \ref{ex:chain})
is tight, because, in each model, the interactions involve only the
$\mathit{out}$ and $\mathit{in}$ ports of adjacent components from the
configuration.

\begin{definition}[Frontier]\label{def:frontier}
  Given symbolic configurations $\phi_1$ and $\phi_2$, the
  \emph{frontier of $\phi_i$ and $\phi_{3-i}$} is the formula
  \(\front{\phi_i}{\phi_{3-i}} \isdef \Asterisk_{\!\!\alpha \in
    \iatoms{\phi_{3-i}} \setminus (\iatoms{\overline{\phi}_{3-i}} \cup
    \iatoms{\phi_i})} ~\alpha\), where $\overline{\phi}_i$ is the
  largest tight subformula of $\phi_i$, for $i = 1,2$.
\end{definition}

\begin{example}\label{ex:frontier}
Let $\phi_1 = \chain{h}{t}(x,y) * \interac{y}{out}{z}{in}$ and $\phi_2
= \chain{h}{t}(y,z) * \interac{x}{out}{y}{in}$. We have
$\front{\phi_1}{\phi_2} = \interac{x}{out}{y}{in}$ and
$\front{\phi_2}{\phi_1} = \interac{y}{out}{z}{in}$, because the
tightness of $\chain{h}{t}(x,y)$ and $\chain{h}{t}(y,z)$ means that
the only interactions crossing the boundary of $\phi_1$ and $\phi_2$
are the ones described by $\interac{y}{out}{z}{in}$ and
$\interac{x}{out}{y}{in}$. \hfill$\blacksquare$
\end{example}

Finally, the regular expression of the conclusion of
the (\parr) rule is the interleaving of the regular expressions from
the premisses, taken with respect to the sets of alphabet symbols
$\aenv_i = \interof{\phi_i * \front{\phi_i}{\phi_{3-i}}}$, for
$i=1,2$. 

The rules in Fig. \ref{fig:havoc-rules}d introduce regular expressions
built using concatenation, Kleene star and union. In particular, for
reasons related to the soundness of the proof system, the
concatenation rule ($\cdot$) applies to havoc triples whose
preconditions are finite disjunctions of symbolic configurations,
sharing the same structure of component, interaction and predicate
atoms, whereas the cut formul{\ae} (postcondition of the left and
precondition of the right premisse) share the same structure as the
precondition. We formalize below the fact that two formul{\ae} share
the same structure:

\begin{definition}\label{def:same-struc}
  Two formul{\ae} $\phi$ and $\psi$ \emph{share the same structure},
  denoted $\phi \struceq \psi$ if and only if they become equivalent
  when every component atom $\compin{x}{q}$ is replaced by the formula
  $\company{x}$, in both $\phi$ and $\psi$. We write $\phi \strucgeq
  \psi$ if and only if $\phi$ is satisfiable and $\psi$ is not, or
  else $\phi \struceq \psi$.
\end{definition}

The (\ue) rule is the dual of (\ui), that restricts the language from
the conclusion to a subset of the one from the premisse. As a remark,
by applying the (\ie) and (\ue) rules in any order, one can derive the
havoc invariance of the intermediate assertions in a single-reversal
reconfiguration sequence (see Def. \ref{def:single-reversal} and
Prop. \ref{prop:single-reversal}). The rule (\congr) substitutes a
regular expression with a congruent one, with respect to the
precondition.

Last, the rules in Fig. \ref{fig:havoc-rules}e modify the structure of
the pre- and postconditions. In particular, the left unfolding rule
(\lu) has a premisse for each step of unfolding of a predicate atom
from the conclusion's precondition, with respect to a rule from the
SID. The environment and the regular expression in each premisse are
obtained by replacing the alphabet symbol of the unfolded predicate
symbol by the set of alphabet symbols from the unfolding step, where
$\are[\interof{\alpha}/\are']$ denotes the regular expression obtained
by replacing each occurrence of the alphabet symbol $\interof{\alpha}$
in $\are$ with the regular expression $\are'$.

\subsection{Havoc Proofs}

A \emph{proof tree} is a finite tree $T$ whose nodes are labeled by
havoc triples and, for each node $n$ not on the frontier of $T$, the
children of $n$ are the premisses of the application of a rule from
Fig. \ref{fig:havoc-rules}, whose conclusion is the label of $n$. For
the purposes of this paper, we consider only proof trees that meet the
following condition:

\begin{assumption}\label{ass:root}
The root of the proof tree is labeled by a havoc triple
$\havoctriple{\aenv}{\phi}{\are}{\psi}$, such that $\phi$ is a
symbolic configuration and $\aenv = \set{\interof{\alpha} \mid \alpha
  \in \atoms{\phi}}$.
\end{assumption}
It is easy to check that the above condition on the shape of the
precondition and the relation between the precondition and the
environment holds recursively, for the labels of all nodes in a proof
tree that meets assumption \ref{ass:root}. Before tackling the
soundness of the havoc proof system (Fig. \ref{fig:havoc-rules}), we
state an invariance property of the environments of havoc triples that
occur in a proof tree:

\begin{definition}\label{def:distinctive}
  A havoc triple $\havoctriple{\aenv}{\phi}{\are}{\psi}$ is
  \emph{distinctive} if and only if
  $\semlang{\interof{\alpha_1}}{\aconfig} \cap
  \semlang{\interof{\alpha_2}}{\aconfig} = \emptyset$, for all
  $\interof{\alpha_1}, \interof{\alpha_2} \in \aenv$ and all $\aconfig
  \in \sem{\phi}{}$.
\end{definition}

The next lemma is proved inductively on the structure of the proof
tree, using Assumption \ref{ass:root}.

\begin{restatable}{lemma}{LemmaDistinctive}\label{lemma:distinctive}
  Given a proof tree $T$, each node in $T$ is labeled with a
  distinctive havoc triple.
\end{restatable}

In order to deal with inductively defined predicates that occur within
the pre- and postconditions of the havoc triples, we use cyclic proofs
\cite{BrotherstonS11}. A \emph{cyclic proof tree} $T$ is a proof tree
such that every node on the frontier is either the conclusion of an
axiom in Fig. \ref{fig:havoc-rules}a, or there is another node $m$
whose label matches the label of $n$ via a substitution of variables;
we say that $n$ is a \emph{bud} and $m$ is its \emph{companion}. A
cyclic proof tree is a \emph{cyclic proof} if and only if every
infinite path through the proof tree extended with bud-companion
edges, goes through the conclusion of a (\lu) rule infinitely
often\footnote{This condition can be effectively decided by checking
  the emptiness of a B\"uchi automaton \cite{BrotherstonS11}.}. We
denote by $\Vdash \havoctriple{\aenv}{\phi}{\are}{\psi}$ the fact that
$\havoctriple{\aenv}{\phi}{\are}{\psi}$ labels the root of a cyclic
proof and state the following soundness theorem:

\begin{restatable}{theorem}{ThmHavocSoundness}\label{thm:havoc-soundness}
  If $\Vdash \havoctriple{\aenv}{\phi}{\are}{\psi}$ then
  $\models \havoctriple{\aenv}{\phi}{\are}{\psi}$.
\end{restatable}
The proof is by induction on the structure of the proof tree, using
Lemma \ref{lemma:distinctive}.

\subsection{A Havoc Proof Example}
\label{sec:havoc-proof}

We demonstrate the use of the proof system in
Fig. \ref{fig:havoc-rules} on the havoc invariance side conditions
required by the reconfiguration proofs from
\S\ref{sec:reconfiguration-proofs}. In fact, we prove a more general
statement, namely that $\chain{h}{t}(x,y)$ is havoc invariant, for all
$h,t\geq0$. An immediate consequence is that $\chain{1}{1}(z,x)$ is
havoc invariant. In particular, the havoc invariance proof for
$\compin{y}{\toknotok} * \interac{y}{out}{z}{in} * \chain{1}{1}(z,x)$
is an instance of the subgoal (\textbf{A}) below, whereas the proof
for $\compin{y}{\toknotok} * \chain{1}{1}(z,x)$ can be obtained by
applying rules (\ie) and (\ue) to (\textbf{A}), for $h=t=1$.
  
For space reasons, we introduce backlinks from buds to companions
whose labels differ by a renaming of free variables and of the $h$ and
$t$ indices in $\chain{h}{t}$, such that each pair $(h',t')$ in the
label of a companion is lexicographically smaller or equal to a pair
$(h,t)$ in the bud. This is a compact (folded) representation of a
cyclic proof tree, obtained by repeatedly appending the subtree rooted
at the companion to the bud, until all buds are labeled with triples
that differ from their companion's only by a renaming of free
variables\footnote{This is bound to happen, because a pair $(h,t)$ of
positive integers cannot be decreased indefinitely. }. Note that such
folding is only possible because the definitions of
$\chain{h}{t}(x,y)$ and $\chain{h'}{t'}(x,y)$, for $h,t,h',t' \geq 1$
are the same, up to the indices of the predicate symbols (Example
\ref{ex:chain}).

{\footnotesize
  \begin{prooftree}
    \AxiomC{}
    \LeftLabel{(\epsilonr)}
    \havocline{\emptyset}{\company{x}}{\epsilon}{\company{x}}
    \AxiomC{}
    \LeftLabel{(\epsilonr)}
    \havocline{\emptyset}{\compin{x}{\toknotok}}{\epsilon}{\compin{x}{\toknotok}}
    \AxiomC{}
    \LeftLabel{(\epsilonr)}
    \havocline{\emptyset}{\compin{x}{\toktoken}}{\epsilon}{\compin{x}{\toktoken}}
    \AxiomC{\textbf{(A)\qquad (B)}}
    \LeftLabel{(\lu)}
    \QuaternaryInfC{$
      \havoctriple{\set{\interof{\chain{h}{t}(z,x)}}}
      {\chain{h}{t}(z,x)}
      {\interof{\chain{h}{t}(z,x)}^*}
      {\chain{h}{t}(z,x)}$ \textbf{ (1)}}
  \end{prooftree}
}

\noindent In the proof of the subgoal \textbf{(A)} below, alphabet
symbols are abbreviated as $\interofzy \isdef
\interof{\interac{z}{out}{y}{in}}$ and $\interofyxch \isdef
\interof{\chain{h\dot{-}1}{t}(y,x)}$. We use the following congruence (Def. \ref{def:congruence}): 
\[(\interofzy \cup
\interofyxch)^* \requiv{\compin{z}{\toknotok} *
  \interac{z}{out}{y}{in} * \chain{h\dot{-}1}{t}(y,x)} {\interofyxch^*
  \cup [\interofyxch^* \cdot \interofzy \cdot (\interofzy \cup
    \interofyxch)^*]}\] The rule (\conseq) strenghtens the
postcondition $\chain{h}{t}(z,x)$ to an unfolding $\chain{h}{t}(z,x)
\unfold{} \exists y ~.~ \compin{z}{\toknotok} *
\interac{z}{out}{y}{in} * \chain{h\dot{-}1}{t}(y,x)$, whose
existentially quantified variable is, moreover, bound to the free
variable $y$ from the precondition. The frontier formul{\ae} in the
application of rule (\parr) are
$\front{\compin{z}{\toknotok}}{\chain{h\dot{-}1}{t}(y,x)} =
\front{\chain{h\dot{-}1}{t}(y,x)}{\compin{z}{\toknotok}} = \emp$.

{\footnotesize
  \begin{prooftree}
    \AxiomC{}
    \LeftLabel{(\epsilonr)}
    \havocline{\emptyset}{\compin{z}{\toknotok}}{\epsilon}{\compin{z}{\toknotok}}

    \AxiomC{backlink to \textbf{(1)}}
    \havocline{\set{\interofyxch}}
      {\chain{h\dot{-}1}{t}(y,x)}
      {\interofyxch^*}
      {\chain{h\dot{-}1}{t}(y,x)}

    \LeftLabel{(\parr)}
    \binhavocline{\set{\interofyxch}}
      {\compin{z}{\toknotok} \ast \chain{h\dot{-}1}{t}(y,x)}
      {\interofyxch^*}
      {\compin{z}{\toknotok} \ast \chain{h\dot{-}1}{t}(y,x)}
    \LeftLabel{(\ii)}
    \UnaryInfC{\textbf{(A1) }
      $\havoctriple{\set{\interofzy, \interofyxch}}
      {\compin{z}{\toknotok} \ast \interac{z}{out}{y}{in} \ast \chain{h\dot{-}1}{t}(y,x)}
      {\interofyxch^*}
      {\compin{z}{\toknotok} \ast \interac{z}{out}{y}{in} \ast \chain{h\dot{-}1}{t}(y,x)}$
    }
    \LeftLabel{(\conseq)}
    \havocline{\set{\interofzy, \interofyxch}}
      {\compin{z}{\toknotok} \ast \interac{z}{out}{y}{in} \ast \chain{h\dot{-}1}{t}(y,x)}
      {\interofyxch^*}
      {\chain{h}{t}(z,x)}

    \AxiomC{\textbf{(A2)}}
    \LeftLabel{(\ui)}
    \binhavocline{\set{\interofzy, \interofyxch}}
      {\compin{z}{\toknotok} \ast \interac{z}{out}{y}{in} \ast \chain{h\dot{-}1}{t}(y,x)}
      {\interofyxch^* \cup [\interofyxch^*
    \cdot \interofzy \cdot (\interofzy \cup \interofyxch)^*]}
      {\chain{h}{t}(z,x)}
    \LeftLabel{(\congr)}
    \havocline{\textbf{(A)}~ \set{\interofzy, \interofyxch}}
      {\compin{z}{\toknotok} \ast \interac{z}{out}{y}{in} \ast \chain{h\dot{-}1}{t}(y,x)}
      {(\interofzy \cup \interofyxch)^*}
      {\chain{h}{t}(z,x)}
  \end{prooftree}

  \vspace*{.5\baselineskip}
  \begin{prooftree}
    \AxiomC{backlink to \textbf{(A1)}}
    \def\defaultHypSeparation{\hskip .1in}
    \def\ScoreOverhang{2pt}
    \havoctwolines{\set{\interofzy, \interofyxch}}
      {\compin{z}{\toknotok} \ast \interac{z}{out}{y}{in} \ast \chain{h\dot{-}1}{t}(y,x)}
      {\interofyxch^*}
      {\compin{z}{\toknotok} \ast \interac{z}{out}{y}{in} \ast \chain{h\dot{-}1}{t}(y,x)}

    \AxiomC{\textbf{(A3)}}

    \AxiomC{\textbf{(A4)}}
    \LeftLabel{($\cdot$)}
    \trihavocline{\set{\interofzy, \interofyxch}}
      {\compin{z}{\toknotok} \ast \interac{z}{out}{y}{in} \ast \chain{h\dot{-}1}{t}(y,x)}
      {\interofyxch^* \cdot \interofzy
        \cdot (\interofzy \cup \interofyxch)^*}
      {\predfalse}
    \LeftLabel{(\conseq)}  
    \havocline{\textbf{(A2)}~ \set{\interofzy, \interofyxch}}
      {\compin{z}{\toknotok} \ast \interac{z}{out}{y}{in} \ast \chain{h\dot{-}1}{t}(y,x)}
      {\interofyxch^* \cdot \interofzy
        \cdot (\interofzy \cup \interofyxch)^*}
      {\chain{h}{t}(z,x)}
  \end{prooftree}

  \vspace*{.5\baselineskip}
  \begin{prooftree}
          \AxiomC{}
          \LeftLabel{(\disr)}
          \havocline{\textbf{(A3)}~ \set{\interofzy, \interofyxch}}
                        {\compin{z}{\toknotok} \ast \interac{z}{out}{y}{in} \ast \chain{h\dot{-}1}{t}(y,x)}
                        {\interofzy}
                        {\predfalse}
  \end{prooftree}

  \vspace*{.5\baselineskip}
  \begin{prooftree}
    \AxiomC{}
    \LeftLabel{(\botr)}
    \havocline{\textbf{(A4)}~ \set{\interofzy, \interofyxch}}
              {\predfalse}
              {(\interofzy \cup \interofyxch)^*}
              {\predfalse}
              \LeftLabel{($\cdot$)}
  \end{prooftree}
}
\noindent For space reasons, the proof of the subgoal \textbf{(B)} is
provided as supplementary material.

\section{A Worked-out Example: Reconfigurable Tree Architectures}
\label{sec:trees}

In addition to token rings (Fig. \ref{fig:token-ring}), we apply our
method to reconfiguration scenarios of distributed systems with
tree-shaped architectures. Such (virtual) architectures are e.g. used
in flooding and leader election algorithms. They are applicable, for
instance, when every component in the system must notify a designated
controller, placed in the root of the tree, about an event that
involves each component from the frontier of the tree. Conversely, the
root component may need to notify the rest of the components. The tree
architecture guarantees that the notification phase takes time
$\bigO(\log n)$ in the number $n$ of components in the tree, when the
tree is balanced, i.e.\ the lengths of the longest and shortest paths
between the root and the frontier differ by at most a constant
factor. A reconfiguration of a tree places a designated component
(whose priority has increased dynamically) closer to the frontier
(dually, closer to the root) in order to receive the notification
faster. In balanced trees, reconfigurations involve
structure-preserving rotations. For instance, \emph{self-adjustable
  splay-tree networks} \cite{Schmid16} use the \emph{zig} (left
rotation), \emph{zig-zig} (left-left rotation) and \emph{zig-zag}
(left-right rotation) operations \cite{SleatorTarjan85} to move nodes
in the tree, while keeping the balance between the shortest and
longest paths.

Fig. \ref{fig:trees} shows a model of reconfigurable tree
architectures, in which each leaf component starts in state
$\stateleafbusy$ and sends a notification to its parent before
entering the $\stateleafidle$ state. An inner component starts in
state $\stateidle$ and waits for notifications from both its left
($\lrecv$) and right ($\rrecv$) children before sending a notification
to its parent ($\send$), unless this component is the root
(Fig. \ref{fig:trees}a). We model notifications by interactions of the form
 $\interac{\_}{\send}{\_}{\lrecv}$ and $\interac{\_}{\send}{\_}{\rrecv}$.
The notification phase is completed when the
root is in state $\stateright$, every inner component is in the
$\stateidle$ state and every leaf is in the $\stateleafidle$ state.

\begin{figure}[t!]
  \vspace*{-\baselineskip}
  \caption[Tree]{Reconfiguration of a Tree Architecture}
  \label{fig:trees}
  \begin{center}    
    \begin{minipage}{.2\textwidth}
      \centering
      \scalebox{0.6}{		\begin{tikzpicture}[>=stealth',shorten >=1pt,auto,node distance=2.5cm,
			every state/.style={minimum size=1.5cm}]

				\node[state] (i)      {$\stateidle$};
				\node[state, below of=i] (l)      {$\stateleft$};
				\node[state, right of=i] (r)      {$\stateright$};

				\node[state, above of=i, yshift=-15pt] (li)	{$\stateleafidle$};
				\node[state, right of=li] (lb) {$\stateleafbusy$};

				\path[->] (i) edge node[above, pos=0.5, rotate=-90] (i1) {$\lrecv$} (l);
				\path[->] (l) edge node[below, pos=0.3, rotate=45] (i2) {$\qquad\rrecv$} (r);
				\path[->] (r) edge node (i3) {$\send$} (i);

				\path[->] (lb) edge node[above] (i4) {$\send$} (li);

				\node (Cc) [draw=black, fit= (i) (l) (r) (li) (lb) (i1) (i2) (i3) (i4)] {};

		\end{tikzpicture}}

      \centerline{\footnotesize(a)}
    \end{minipage}
    \begin{minipage}{.79\textwidth}
      \vspace*{.5\baselineskip}
      \centering
      \scalebox{0.8}[0.7]{		\begin{tikzpicture}[main node/.style={
			minimum width = 3em, minimum height = 3em}
 		]
			\begin{scope}[local bounding box=box1, shift={(0,0)}]
				\node[draw, minimum size=8mm] at (0.75, 1.2) (x) {$x$};
				\node[draw, minimum size=8mm] at (1.75, 2.8) (y) {$y$};

				\node[circle, minimum size=1mm, inner sep=0pt, outer sep=0pt, fill=black]
					at (0, 0) (a) {};
				\node[circle, minimum size=1mm, inner sep=0pt, outer sep=0pt, fill=black]
					at (1.5, 0) (b) {};
				\node[circle, minimum size=1mm, inner sep=0pt, outer sep=0pt, fill=black]
					at (2.75, 1.6) (c) {};
				\node[circle, minimum size=1mm, inner sep=0pt, outer sep=0pt, fill=black]
					at (1.75, 4) (z) {};

				\node[circle, minimum size=1mm, inner sep=0pt, outer sep=0pt, fill=black]
					at (x.south west) (xa) {};
				\node[circle, minimum size=1mm, inner sep=0pt, outer sep=0pt, fill=black]
					at (x.south east) (xb) {};
				\node[circle, minimum size=1mm, inner sep=0pt, outer sep=0pt, fill=black]
					at (x.north) (xy) {};
				\node[circle, minimum size=1mm, inner sep=0pt, outer sep=0pt, fill=black]
					at (y.south west) (yx) {};
				\node[circle, minimum size=1mm, inner sep=0pt, outer sep=0pt, fill=black]
					at (y.south east) (yc) {};
				\node[circle, minimum size=1mm, inner sep=0pt, outer sep=0pt, fill=black]
					at (y.north) (yz) {};

				\node[below of=a, yshift=7mm] () {$a$};
				\node[below of=b, yshift=7mm] () {$b$};
				\node[below of=c, yshift=7mm] () {$c$};
				\node[above of=z, yshift=-7mm] () {$z$};

				\node[right of=z, xshift=2mm] () {$\lrecv$ / $\rrecv$};
				\node[left of=c, xshift=5mm] () {$\send$};
				\node[right of=b, xshift=-5mm] () {$\send$};
				\node[left of=a, xshift=5mm] () {$\send$};

				\node[right of=yz, yshift=2mm, xshift=-6mm] () {$\send$};
				\node[right of=yc, yshift=0mm, xshift=-4mm] () {$\rrecv$};
				\node[left of=yx, yshift=0mm, xshift=4mm] () {$\lrecv$};
				\node[left of=xy, yshift=2mm, xshift=6mm] () {$\send$};
				\node[right of=xb, yshift=0mm, xshift=-4mm] () {$\rrecv$};
				\node[left of=xa, yshift=0mm, xshift=4mm] () {$\lrecv$};

				\draw[-] (a) -- (xa);
				\draw[-] (b) -- (xb);
				\draw[-] (xy) -- (yx);
				\draw[-] (yc) -- (c);
				\draw[-] (yz) -- (z);
			\end{scope}

			\begin{scope}[shift={(5,0)}]
				\node at (0, 2) {$\xRightarrow[\text{Rotation}]{\text{Right}}$};
			\end{scope}

			\begin{scope}[local bounding box=box2, shift={(7,0)}]
				\node[draw, minimum size=8mm] at (1.25, 2.8) (x) {$x$};
				\node[draw, minimum size=8mm] at (2.25, 1.2) (y) {$y$};

				\node[circle, minimum size=1mm, inner sep=0pt, outer sep=0pt, fill=black]
					at (0.25, 1.6) (a) {};
				\node[circle, minimum size=1mm, inner sep=0pt, outer sep=0pt, fill=black]
					at (1.5, 0) (b) {};
				\node[circle, minimum size=1mm, inner sep=0pt, outer sep=0pt, fill=black]
					at (3, 0) (c) {};
				\node[circle, minimum size=1mm, inner sep=0pt, outer sep=0pt, fill=black]
					at (1.25, 4) (z) {};

				\node[circle, minimum size=1mm, inner sep=0pt, outer sep=0pt, fill=black]
					at (x.south west) (xa) {};
				\node[circle, minimum size=1mm, inner sep=0pt, outer sep=0pt, fill=black]
					at (x.south east) (xy) {};
				\node[circle, minimum size=1mm, inner sep=0pt, outer sep=0pt, fill=black]
					at (x.north) (xz) {};
				\node[circle, minimum size=1mm, inner sep=0pt, outer sep=0pt, fill=black]
					at (y.south west) (yb) {};
				\node[circle, minimum size=1mm, inner sep=0pt, outer sep=0pt, fill=black]
					at (y.south east) (yc) {};
				\node[circle, minimum size=1mm, inner sep=0pt, outer sep=0pt, fill=black]
					at (y.north) (yx) {};

				\node[below of=a, yshift=7mm] () {$a$};
				\node[below of=b, yshift=7mm] () {$b$};
				\node[below of=c, yshift=7mm] () {$c$};
				\node[above of=z, yshift=-7mm] () {$z$};

				\node[right of=z, xshift=2mm] () {$\lrecv$ / $\rrecv$};
				\node[right of=c, xshift=-5mm] () {$\send$};
				\node[left of=b, xshift=5mm] () {$\send$};
				\node[right of=a, xshift=-5mm] () {$\send$};

				\node[right of=xz, yshift=2mm, xshift=-6mm] () {$\send$};
				\node[right of=yc, yshift=0mm, xshift=-4mm] () {$\rrecv$};
				\node[right of=yx, yshift=2mm, xshift=-5mm] () {$\send$};
				\node[right of=xy, yshift=0mm, xshift=-5mm] () {$\rrecv$};
				\node[left of=yb, yshift=0mm, xshift=4mm] () {$\lrecv$};
				\node[left of=xa, yshift=0mm, xshift=4mm] () {$\lrecv$};

				\draw[-] (a) -- (xa);
				\draw[-] (xy) -- (yx);
				\draw[-] (b) -- (yb);
				\draw[-] (c) -- (yc);
				\draw[-] (xz) -- (z);
			\end{scope}
		\end{tikzpicture}}

      \vspace*{.3\baselineskip}
      \centerline{\footnotesize(b)}
    \end{minipage}

    \begin{minipage}{.52\textwidth}
  {\small\[\begin{array}{l}
    \treeidle(x) \leftarrow \compin{x}{\stateleafidle} \\
    \treeidle(x) \leftarrow \exists y \exists z ~.~ \compin{x}{\stateidle} * \interac{y}{\send}{x}{\lrecv} ~* \\ 
    \hspace*{1.8cm} \interac{z}{\send}{x}{\rrecv} * \treeidle(y) * \treeidle(z) \\
    \\
    \treenotidle(x) \leftarrow \compin{x}{\stateleafbusy} \\
    \treenotidle(x) \leftarrow \exists y \exists z ~.~ \compin{x}{\stateleft} * \interac{y}{\send}{x}{\lrecv} ~* \\ 
    \hspace*{1.8cm} \interac{z}{\send}{x}{\rrecv} * \treeidle(y) * \treenotidle(z) \\
    \treenotidle(x) \leftarrow \exists y \exists z ~.~ \compin{x}{\stateright} * \interac{y}{\send}{x}{\lrecv} ~* \\ 
    \hspace*{1.8cm} \interac{z}{\send}{x}{\rrecv} * \treeidle(y) * \treeidle(z) \\
    \treenotidle(x) \leftarrow \exists y \exists z ~.~ \compin{x}{\stateidle} * \interac{y}{\send}{x}{\lrecv} ~* \\
    \hspace*{1.8cm} \interac{z}{\send}{x}{\rrecv} * \treenotidle(y) * \treenotidle(z) 
    \end{array}\]}
\end{minipage}
\begin{minipage}{.47\textwidth}
{\small\[\begin{array}{l}
  \tree(x) \leftarrow \compin{x}{\stateleafidle} \\
  \tree(x) \leftarrow \compin{x}{\stateleafbusy} \\
  \tree(x) \leftarrow \exists y \exists z ~.~ \company{x} * \interac{y}{\send}{x}{\lrecv}  ~* \\
  \hspace*{1.4cm} \interac{z}{\send}{x}{\rrecv} * \tree(y) * \tree(z) \\
  \\
  \treeseg(x,x) \leftarrow \company{x} \\
  \treeseg(x,u) \leftarrow \exists y \exists z ~.~ \company{x} * \interac{y}{\send}{x}{\lrecv} ~* \\  
  \hspace*{1.6cm} \interac{z}{\send}{x}{\rrecv} * \treeseg(y,u) * \tree(z) \\
  \treeseg(x,u) \leftarrow \exists y \exists z ~.~ \company{x} * \interac{y}{\send}{x}{\lrecv} ~* \\
  \hspace*{1.6cm} \interac{z}{\send}{x}{\rrecv} * \tree(y) * \treeseg(z,u)  \\\\
  \end{array}\]}

\vspace*{-1.5\baselineskip}
\end{minipage}
\end{center}
\vspace*{-.5\baselineskip}
\centerline{\footnotesize(c)}
\vspace*{-\baselineskip}
\end{figure}

\begin{figure}[t!]
  \vspace*{-\baselineskip}
  \caption{Proof of a Tree Rotation}
  \label{fig:tree-proof}
  \begin{center}
    {\footnotesize
      \begin{lstlisting}
$\textcolor{violet}{\Set{\begin{array}{l}
  \exists r \exists x \exists y \exists z \exists a \exists b \exists c ~.~ \treeseg(r,z) * \interac{a}{\send}{x}{\lrecv} * \interac{c}{\send}{y}{\rrecv} * \interac{y}{\send}{z}{\lrecv} * \interac{x}{\send}{y}{\lrecv} * \interac{b}{\send}{x}{\rrecv} * \\
  \hspace{2.7cm} \compin{x}{\stateidle} * \compin{y}{\stateidle} * \treenotidle(a) * \treenotidle(b) * \treenotidle(c)
\end{array}}}$
with $x,y,z,a,b,c : \interac{a}{\send}{x}{\lrecv} * \interac{c}{\send}{y}{\rrecv} * \interac{y}{\send}{z}{\lrecv} * \interac{x}{\send}{y}{\lrecv} *  \interac{b}{\send}{x}{\rrecv} * \compin{x}{\stateidle} * \compin{y}{\stateidle} $ do
$\textcolor{violet}{\Set{\begin{array}{l}
    \treeseg(r,z) * \interac{a}{\send}{x}{\lrecv} * \interac{c}{\send}{y}{\rrecv} * \interac{y}{\send}{z}{\lrecv} * \interac{x}{\send}{y}{\lrecv} * \interac{b}{\send}{x}{\rrecv} ~* \\
    \compin{x}{\stateidle} * \compin{y}{\stateidle} * \treenotidle(a) * \treenotidle(b) * \treenotidle(c)
\end{array}}}$
    disconnect($b.\send$,$x.\rrecv$);
$\textcolor{violet}{\Set{\begin{array}{l}
    \treeseg(r,z) * \interac{a}{\send}{x}{\lrecv} * \interac{c}{\send}{y}{\rrecv} * \interac{y}{\send}{z}{\lrecv} * \interac{x}{\send}{y}{\lrecv} ~* \\
    (\compin{x}{\stateidle} * \treenotidle(a) \vee \compin{x}{\stateleft} * \treeidle(a)) * \compin{y}{\stateidle} *  \treenotidle(b) * \treenotidle(c)         
\end{array}}}$ $\hinv$
    disconnect($x.\send$,$y.\lrecv$);
$\textcolor{violet}{\Set{\begin{array}{l}
     \treeseg(r,z) * \interac{a}{\send}{x}{\lrecv} * \interac{c}{\send}{y}{\rrecv} * \interac{y}{\send}{z}{\lrecv}  * \\
     (\compin{x}{\stateidle} * \treenotidle(a) \vee \compin{x}{\stateleft} * \treeidle(a)) * \compin{y}{\stateidle} *  \treenotidle(b) * \treenotidle(c)         
\end{array}}}$ 
    disconnect($y.\send$,$z.\lrecv$);
$\textcolor{violet}{\Set{\begin{array}{l}
     \treeseg(r,z) * \interac{a}{\send}{x}{\lrecv} * \interac{c}{\send}{y}{\rrecv} * \\
     (\compin{x}{\stateidle} * \treenotidle(a) \vee \compin{x}{\stateleft} * \treeidle(a)) * \compin{y}{\stateidle} *  \treenotidle(b) * \treenotidle(c)         
\end{array}}}$ 
    connect($b.\send$,$y.\lrecv$); 
$\textcolor{violet}{\Set{\begin{array}{l}
     \treeseg(r,z) * \interac{a}{\send}{x}{\lrecv} * \interac{c}{\send}{y}{\rrecv} * \interac{b}{\send}{y}{\lrecv} * \\
     (\compin{x}{\stateidle} * \treenotidle(a) \vee \compin{x}{\stateleft} * \treeidle(a)) * \\
     (\compin{y}{\stateidle} * \treenotidle(b) * \treenotidle(c) \vee \compin{y}{\stateleft} * \treeidle(b) * \treenotidle(c) \vee \compin{y}{\stateright} * \treeidle(b) * \treeidle(c))
\end{array}}}$ 
    connect($y.\send$,$x.\rrecv$);
$\textcolor{violet}{\Set{\begin{array}{l}
     \treeseg(r,z) * \interac{a}{\send}{x}{\lrecv} * \interac{c}{\send}{y}{\rrecv} * \interac{b}{\send}{y}{\lrecv} * \interac{y}{\send}{x}{\rrecv} * \\
     \big((\compin{x}{\stateidle} * \treenotidle(a) \vee \compin{x}{\stateleft} * \treeidle(a)) * \\
     (\compin{y}{\stateidle} * \treenotidle(b) * \treenotidle(c) \vee \compin{y}{\stateleft} * \treeidle(b) * \treenotidle(c) \vee \compin{y}{\stateright} * \treeidle(b) * \treeidle(c)) \big) \vee \\
     \compin{x}{\stateright} * \compin{y}{\stateidle} * \treeidle(a) * \treeidle(b) * \treeidle(c) 
\end{array}}}$ $\hinv$
    connect($x.\send$,$z.\lrecv$)
$\textcolor{violet}{\Set{\begin{array}{l}
     \treeseg(r,z) * \interac{a}{\send}{x}{\lrecv} * \interac{c}{\send}{y}{\rrecv} * \interac{b}{\send}{y}{\lrecv} * \interac{y}{\send}{x}{\rrecv} * \interac{x}{\send}{z}{\lrecv} * \\
     \big((\compin{x}{\stateidle} * \treenotidle(a) \vee \compin{x}{\stateleft} * \treeidle(a)) * \\
     (\compin{y}{\stateidle} * \treenotidle(b) * \treenotidle(c) \vee \compin{y}{\stateleft} * \treeidle(b) * \treenotidle(c) \vee \compin{y}{\stateright} * \treeidle(b) * \treeidle(c)) \big) \vee \\
     \compin{x}{\stateright} * \compin{y}{\stateidle} * \treeidle(a) * \treeidle(b) * \treeidle(c) 
         
\end{array}}}$ 
od
$\textcolor{violet}{\Set{\begin{array}{l}
     \exists r, x, y, z, a, b, c ~.~ \treeseg(r,z) * \interac{a}{\send}{x}{\lrecv} * \interac{c}{\send}{y}{\rrecv} * \interac{b}{\send}{y}{\lrecv} * \interac{y}{\send}{x}{\rrecv} * \interac{x}{\send}{z}{\lrecv} * \\
     \big((\compin{x}{\stateidle} * \treenotidle(a) \vee \compin{x}{\stateleft} * \treeidle(a)) * \\
     (\compin{y}{\stateidle} * \treenotidle(b) * \treenotidle(c) \vee \compin{y}{\stateleft} * \treeidle(b) * \treenotidle(c) \vee \compin{y}{\stateright} * \treeidle(b) * \treeidle(c)) \big) \vee \\
     \compin{x}{\stateright} * \compin{y}{\stateidle} * \treeidle(a) * \treeidle(b) * \treeidle(c) 
\end{array}}}$   
$\textcolor{violet}{\Set{\begin{array}{l}
    \exists r, x, y, z, a, b, c ~.~ \treeseg(r,z) * \interac{x}{\send}{z}{\lrecv} * \treenotidle(x) ~\wedge \\
    (\interac{a}{\send}{x}{\lrecv} * \interac{c}{\send}{y}{\rrecv} * \interac{b}{\send}{y}{\lrecv} * \interac{y}{\send}{x}{\rrecv} * \predtrue)
\end{array}}}$   
    \end{lstlisting}}
  \end{center}
  \vspace*{-2\baselineskip}
\end{figure}

Fig. \ref{fig:trees}b shows a \emph{right rotation} that reverses the
positions of components with identifiers $x$ and $y$, implemented by
the reconfiguration program from Fig. \ref{fig:tree-proof}. The
rotation applies only to configurations in which both $x$ and $y$ are
in state $\stateidle$, by distinguishing the case when $y$ is a left
or a right child of $z$. For simplicity, Fig. \ref{fig:tree-proof}
shows the program in case $y$ is a left child, the other case being
symmetric. Note that, applying the rotation in a configuration where
the component indexed by $x$ is in state $\stateright$
(both $a$ and $b$ have sent their notifications to $x$) and the one
indexed by $y$ is in state $\stateidle$ ($c$ has not yet sent its
notification to $y$) yields a configuration from which $c$ cannot send
its notification further, because $x$ has now become the root of the
subtree changed by the rotation (a similar scenario is when $y$ is in
state $\stateright$, $x$ is in state $\stateidle$ and $a$, $b$ and $c$
have sent their notifications to their parents).

We prove that, \emph{whenever a right rotation is applied to a tree,
  such that the subtrees rooted at $a$, $b$ and $c$ have not sent
  their notifications yet, the result is another tree in which the
  subtrees rooted at $a$, $b$ and $c$ are still waiting to submit
  their notifications}. This guarantees that the notification phase
will terminate properly with every inner component (except for the
root) in state $\stateidle$ and every leaf component in state
$\stateleafidle$, even if one or more reconfigurations take place in
between. In particular, this proves the correctness of more complex
reconfigurations of splay tree architectures, using e.g.\ the zig-zig
and zig-zag operations \cite{Schmid16}.

The proof in Fig. \ref{fig:tree-proof} uses the inductive definitions
from Fig. \ref{fig:trees}c. The predicates $\treeidle(x)$,
$\treenotidle(x)$ define trees where all components are idle, and
where some notifications are still being propagated, respectively. The
predicate $\tree(x)$ conveys no information about the states of the
components and the predicate $\treeseg(x,u)$ defines a tree segment,
from component $x$ to component $u$. To use the havoc proof system
from Fig. \ref{fig:havoc-rules}, we need the following
statement\footnote{This is similar to
  Prop. \ref{prop:chain-precise}.}:

\begin{restatable}{proposition}{PropTreePrecise}\label{prop:tree-precise}
  The set of symbolic configurations using predicate atoms
  $\treeidle(x)$, $\treenotidle(x)$, $\tree(x)$ and $\treeseg(x,y)$ is
  precisely closed.
\end{restatable}

Moreover, each predicate atom $\treeidle(x)$, $\treenotidle(x)$,
$\tree(x)$ and $\treeseg(x,y)$ is tight, because, in each model of
these atoms, the interactions $\interac{u}{\send}{v}{\lrecv}$ and $\interac{u}{\send}{v}{\rrecv}$
are between the ports $\tuple{\send,\lrecv}$ and
$\tuple{\send,\rrecv}$ of the components
$u$ and $v$, respectively.

The precondition of the reconfiguration program in
Fig. \ref{fig:tree-proof} states that $x$ and $y$ are idle components,
and the $a$, $b$ and $c$ subtrees are not idle, whereas the
postcondition states that the $x$ subtree is not idle. As mentioned,
\emph{this is sufficient to guarantee the correct termination of the
notification phase after the right rotation}. As in the proofs from
\S\ref{sec:reconfiguration-proofs}, proving the correctness of the
sequential composition of primitive commands requires proving the
havoc invariance of the annotations. However, since in this case, the
reconfiguration sequence is single-reversal (Def.
\ref{def:single-reversal}), we are left with proving havoc invariance
only for the annotations marked with $\hinv$ in
Fig. \ref{fig:tree-proof} (Prop. \ref{prop:single-reversal}). For
space reasons, the havoc invariance proofs of these annotations are
provided as supplementary material.

\section{Towards Automated Proof Generation}
\label{sec:open-problems}

Proof generation can be automated, by tackling the following technical
problems, briefly described in this section.

\paragraph{The entailment problem} Given a SID $\asid$ and two \adl\ formul{\ae}
$\phi$ and $\psi$, interpreted over $\asid$, is every model of $\phi$
also a model of $\psi$? This problem arises e.g., when applying the
rule of consequence (Fig. \ref{fig:hoare}c bottom-left) in a
Hoare-style proof of a reconfiguration program. Unsurprisingly, the
\adl\ entailment inherits the positive and negative aspects of the
\seplog\ entailment \cite{Reynolds}. For instance, one can reduce the
undecidable problem of universality of context-free languages
\cite{BarHillel61} to \adl\ entailment, with $\phi$ and $\psi$
restricted to predicate atoms. Decidability can be recovered via two
restrictions on the syntax of the rules in the SID and a semantic
restriction on the configurations that occur as models of the
predicate atoms defined by the SID. The syntactic restrictions are
that, each rule is of the form $\apred(x_1, \ldots, x_{\#(\apred)}) \unfoldrule
\exists y_1 \ldots \exists y_m ~.~ \compin{x}{q} * \phi * \Asterisk_{\ell=1}^h
\bpred^\ell(z^\ell_1, \ldots, z^\ell_{\#(\bpred^\ell)})$,
where $\phi$ consists of interaction atoms, such that: 
(1) $x_1$ occurs in each interaction atom from $\phi$, 
(2) $\bigcup_{ell=1}^h \set{z^\ell_1, \ldots,
  z^\ell_{\#(\bpred^\ell)}} = \set{x_2, \ldots, x_{\#(\apred)}} \cup
\set{y_1, \ldots, y_m}$, and
(3) for each $\ell \in \interv{1}{h}$, $z^\ell_1$ occurs in $\phi$.
Furthermore, the semantic restriction is that, in each model of a
predicate atom, a component must occur in a \emph{bounded number of
interactions}, i.e., the structure is a graph of bounded degree. For
instance, star topologies with a central controller and an unbounded
number of workers can be defined in \adl, but do not satisfy this
constraint. With these restrictions, it can be shown that the
\adl\ entailment problem is \twoexptime-complete, thus matching the
complexity of the similar problem for
\seplog\ \cite{EchenimIosifPeltier20,KatelaanZuleger20}. Technical are
given in \cite{BozgaBueriIosif2022Arxiv}.

\paragraph{The frame inference problem} Given two
\adl\ formul{\ae} $\phi$ and $\psi$ find a formula $\varphi$, such
that $\phi \models \psi * \varphi$. This problem occurs e.g., when
applying the frame rule (Fig. \ref{fig:hoare}c bottom-right) with a
premisse $\hoare{\phi}{\acomm}{\psi}$ to an arbitrary precondition
$\xi$ i.e., one must infer a frame $\varphi$ such that $\xi \models
\phi * \varphi$. This problem has been studied for
\seplog\ \cite{CalcagnoDistefanoOHearnYang11,DBLP:conf/sas/GorogiannisKO11},
in cases where the SID defines only data structures of a restricted
form (typically nested lists). Reconsidering the frame inference
problem for \adl\ is of paramount importance for automating the
generation of Hoare-style correctness proofs and is an open problem.

\paragraph{Automating havoc invariance proofs}
Given a precondition $\phi$ and a regular expression $\are$, the
parallel composition rule (\parr) requires the inference of regular
expressions $\are_1$ and $\are_2$, such that $\are_1
\parcomp_{\aenv_1,\aenv_2} \are_2 \requiv{\phi} \are$. We conjecture
that, under the bounded degree restriction above, the languages of the
frontier (cross-boundary) interactions (Def. \ref{def:frontier}) are
regular and can be automatically inferred by classical automata
construction techniques.

\section{Related Work}
\label{sec:related}

The ability of reconfiguring coordinating architectures of software
systems has received much interest in the Software Engineering
community, see the surveys
\cite{bradbury2004survey,rumpe2017classification}. We consider
\emph{programmed reconfiguration}, in which the architecture changes
occur according to a sequential program, executed in parallel with the
system to which reconfiguration applies. The languages used to write
such programs are classified according to the underlying formalism
used to define their operational semantics: \emph{process algebras},
e.g.\ $\pi$-{\sc ADL} \cite{CavalcanteBO15}, {\sc darwin}
\cite{magee1996dynamic}, \emph{hyper-graphs} and \emph{graph
rewriting}
\cite{taentzer1998dynamic,DBLP:journals/scp/WermelingerF02,LeMetayer,Cao2005,Arad13},
\emph{chemical reactions} \cite{wermelinger1998towards}, etc. We
separate architectures (structures) from behaviors, thus relating to
the BIP framework \cite{basu2006modeling} and its extensions for
dynamic reconfigurable systems DR-BIP \cite{DR-BIP-STTT}. In a similar
vein, the REO language \cite{Arbab04} supports reconfiguration by
changing the structure of connectors \cite{Clarke08}.

Checking the correctness of a dynamically reconfigurable system
considers mainly \emph{runtime verification} methods, i.e.\ checking a
given finite trace of observed configurations against a logical
specification. For instance, in \cite{BucchiaroneG08}, configurations
are described by annotated hyper-graphs and configuration invariants
of finite traces, given first-order logic, are checked using {\sc
  Alloy} \cite{Jackson02}. More recently,
\cite{DormoyKL10,LanoixDK11,DBLP:conf/sac/El-HokayemBS21} apply
temporal logic to runtime verification of reconfigurable systems.
Model checking of temporal specifications is also applied to REO
programs, under simplifying assumption that render the system
finite-state \cite{Clarke08}. In contrast, we use induction to deal
with parameterized systems of unbounded sizes.

To the best of our knowledge, our work is the first to tackle the
\emph{verification} of reconfiguration programs, by formally proving
the absence of bugs, using a Hoare-style annotation of a
reconfiguration program with assertions that describe infinite sets of
configurations, with unboundedly many components. Traditionally,
reasoning about the correctness of unbounded networks of parallel
processes uses mostly hard-coded architectures (see
\cite{BloemJacobsKhalimovKonnovRubinVeithWidder15} for a survey),
whereas the more recently developed architecture description logics
\cite{KonnovKWVBS16,MavridouBBS17} do not consider the
reconfigurability aspect of distributed systems.

Specifying parameterized component-based systems by inductive
definitions is not new. \emph{Network grammars}
\cite{ShtadlerGrumberg89} use context-free grammar rules to describe
systems with linear (pipeline, token-ring) structure, obtained by
composition of an unbounded number of processes. More complex
structures are specified recursively using graph grammar rules with
parameter variables \cite{LeMetayer}. To avoid clashes, these
variables must be renamed to unique names and assigned unique indices
at each unfolding step. Our recursive specifications use existential
quantifiers to avoid name clashes and separating conjunction to
guarantee that the components and interactions obtained by the
unfolding of the rules are unique.

The assertion language introduced in this paper is a resource logic
that supports local reasoning \cite{OHearnReynoldsYang01}. Local
reasoning about parallel programs has been traditionally within the
scope of Concurrent Separation Logic (CSL), that introduced a parallel
composition rule \cite{DBLP:journals/tcs/OHearn07}, with a
non-interfering (race-free) semantics of shared-memory parallelism
\cite{Brookes:2016}. Considering interference in CSL requires more
general proof rules, combining ideas of assume- and rely-guarantee
\cite{Owicki1978,DBLP:phd/ethos/Jones81} with local reasoning
\cite{FengFerreiraShao07,Vafeiadis07} and abstract notions of framing
\cite{Dinsdale-Young10,Dinsdale-Young13,Farka21}. These rules
generalize from both standard CSL parallel composition and
rely-guarantee rules, allowing even to reason about properties of
concurrent objects, such as (non-)linearizability
\cite{Sergey16}. However, the body of work on CSL deals almost
entirely with shared-memory multithreading programs, instead of
distributed systems, which is the aim of our work. In contrast, we
develop a resource logic in which the processes do not just share and
own resources, but become mutable resources themselves.

\section{Conclusions and Future Work}
\label{sec:conclusions}

We present a framework for deductive verification of reconfiguration
programs, based on a configuration logic that supports local
reasoning. We prove the absence of design bugs in ideal networks,
without packet loss and communication delays, using a discrete
event-based model of behavior, the usual level of abstraction in
formal verification of parameterized distributed systems. Our
configuration logic relies on inductive predicates to describe systems
with unbounded number of components. It is used to annotate
reconfiguration programs with Hoare triples, whose validity relies on
havoc invariants about the ongoing interactions in the system. These
invariants are tackled with a specific proof system, that uses a
parallel composition rule in the style of assume/rely-guarantee
reasoning.


As future work, we consider push-button techniques for frame inference
and havoc invariant synthesis, allowing broadcast interactions between
all the components, and extensions of the finite-state model of
behavior, using timed and hybrid automata.

\bibliographystyle{abbrv}
\bibliography{literature}

\newpage
\appendix

\section{Proofs from Section \ref{sec:rpl}}

\LemmaLocality* \proof{ By induction on the structure of the local
  program $\acomm$. For the base case $\acomm \in \primcomms$, we
  check the following points, for all $\aconfig_i = (\comps_i,
  \interacs_i, \statemap_i, \store) \in \configset$, for $i=1,2$, such
  that $\aconfig_1 \comp \aconfig_2$ is defined: \begin{itemize}
  \item $\acomm = \new(\smallstate,x)$: we compute $\semcomm{\new(q,x)}(\aconfig_1 \comp \aconfig_2) =$
    \[\begin{array}{l}
    \set{(\comps_1 \cup \comps_2 \cup \set{c}, \interacs_1 \cup \interacs_2, (\statemap_1 \cup \statemap_2)[c \leftarrow q], \store[x \leftarrow c])} = \\
    \set{(\comps_1 \cup \set{c}, \interacs_1, \statemap_1[c \leftarrow q], \store[x \leftarrow c]) \comp (\comps_2, \interacs_2, \statemap_2, \store[x \leftarrow c])} \subseteq \\
    \semcomm{\new(q,x)}(\aconfig_1) \comp \lift{\big\{\aconfig_2\big\}}{\set{x}} = \semcomm{\new(q,x)}(\aconfig_1) \comp \lift{\big\{\aconfig_2\big\}}{\modif{\new(\smallstate,x)}}
    \end{array}\]
  \item $\acomm = \delete(x)$: we distinguish the following cases: \begin{itemize}
  \item if $\store(x) \in \comps_1$, we compute $\semcomm{\delete(x)}(\aconfig_1 \comp \aconfig_2) =$
    \[\begin{array}{l}
    \set{((\comps_1 \cup \comps_2) \setminus \set{\store(x)}, \interacs_1 \cup \interacs_2, \statemap_1 \cup \statemap_2, \store)} = \\
    \set{(\comps_1 \setminus \set{\store(x)}, \interacs_1, \statemap_1, \store)} \comp \set{(\comps_2, \interacs_2, \statemap_2, \store)} \subseteq \\
    \semcomm{\delete(x)}(\aconfig_1) \comp \lift{\big\{\aconfig_2\big\}}{\emptyset} = \semcomm{\delete(x)}(\aconfig_1) \comp \lift{\big\{\aconfig_2\big\}}{\modif{\delete(x)}}
    \end{array}\]    
  \item else $\store(x) \not\in \comps_1$ and
    $\semcomm{\delete(x)}(\aconfig_1) = \errconfigs$, thus
    we obtain:
      \[\semcomm{\delete(x)}(\aconfig_1 \comp
      \aconfig_2) \subseteq \errconfigs = \errconfigs \comp
      \lift{\set{\aconfig_2}}{\emptyset} =
      \semcomm{\delete(x)}(\aconfig_1) \comp \lift{\set{\aconfig_2}}{\emptyset} =
      \semcomm{\delete(x)}(\aconfig_1) \comp \lift{\set{\aconfig_2}}{\modif{\delete(\acomp,x)}}\]
    \end{itemize}
  \item $\acomm = \connect(x_1.p1,x_2.p_2)$: we compute $\semcomm{\connect(x_1.p1,x_2.p_2)}(\aconfig_1 \comp \aconfig_2) =$
    \[\begin{array}{l}
    \set{(\comps_1 \cup \comps_2, \interacs_1 \cup \interacs_2 \cup \set{(\store(x_1), p_1, \store(x_2), p_2), \statemap_1 \cup \statemap_2, \store})} = \\
    \set{(\comps_1, \interacs_1 \cup \set{(\store(x_1), p_1, \store(x_2), p_2)}, \statemap_1, \store)} \comp \set{(\comps_2,\interacs_2, \statemap_2, \store)} \subseteq \\
    \semcomm{\connect(x_1.p1,x_2.p_2)}(\aconfig_1) \comp \lift{\set{\aconfig_2}}{\emptyset} =
    \semcomm{\connect(x_1.p1,x_2.p_2)}(\aconfig_1) \comp \lift{\set{\aconfig_2}}{\modif{\connect(x_1.p1,x_2.p_2)}}
    \end{array}\]    
  \item $\acomm = \disconnect(x_1.p_1, x_2.p_2)$: we distinguish the
    following cases: \begin{itemize}
    \item if $(\store(x_1), p_1, \store(x_2), p_2) \in \interacs_1$, we compute 
      $\semcomm{\disconnect(x_1.p_1, x_2.p_2)}(\aconfig_1 \comp \aconfig_2) =$
      \[\begin{array}{l}
      \set{(\comps_1 \cup \comps_2, (\interacs_1 \cup \interacs_2) \setminus \set{(\store(x_1), p_1, \store(x_2), p_2)}, \statemap_1 \cup \statemap_2, \store)} = \\
      \set{(\comps_1, \interacs_1 \setminus \set{(\store(x_1), p_1, \store(x_2), p_2)}, \statemap_1 \cup \statemap_2, \store)} \comps
      \lift{\set{(\comps_2, \interacs_2, \statemap_2, \store)}}{\emptyset} \subseteq \\
      \semcomm{\disconnect(x_1.p1,x_2.p_2)}(\aconfig_1) \comp \lift{\set{\aconfig_2}}{\emptyset} =
      \semcomm{\disconnect(x_1.p1,x_2.p_2)}(\aconfig_1) \comp \lift{\set{\aconfig_2}}{\modif{\disconnect(x_1.p1,x_2.p_2)}}
      \end{array}\]      
    \item else $(\store(x_1), p_1, \store(x_2), p_2) \not\in \interacs_1$ and
      $\semcomm{\disconnect(x_1.p_1, x_2.p_2)}(\aconfig_1) = \errconfigs$, thus:
      \[\begin{array}{l}
      \semcomm{\disconnect(x_1.p_1, x_2.p_2)}(\aconfig_1 \comp \aconfig_2) \subseteq \errconfigs = \errconfigs \comp \set{\aconfig_2} = \\ 
      \semcomm{\disconnect(x_1.p_1, x_2.p_2)}(\aconfig_1) \comp \lift{\set{\aconfig_2}}{\emptyset} =
      \semcomm{\disconnect(x_1.p_1, x_2.p_2)}(\aconfig_1) \comp \lift{\set{\aconfig_2}}{\modif{\disconnect(x_1.p_1, x_2.p_2)}}
      \end{array}\] 
    \end{itemize}
  \item $\acomm = \skipcomm$: this case is a trivial check.
  \end{itemize}
  For the inductive step, we check the following points: \begin{itemize}
  \item $\acomm = \acomm_1 + \acomm_2$: we compute $\semcomm{\acomm_1 + \acomm_2}(\aconfig_1 \comp \aconfig_2) =$
    \[\begin{array}{l}     
    \semcomm{\acomm_1}(\aconfig_1 \comp \aconfig_2) \cup \semcomm{\acomm_2}(\aconfig_1 \comp \aconfig_2) \subseteq 
    \text{ [by the inductive hypothesis]} \\
    \semcomm{\acomm_1}(\aconfig_1) \comp \lift{\set{\aconfig_2}}{\modif{\acomm_1}} \cup
    \semcomm{\acomm_2}(\aconfig_1) \comp \lift{\set{\aconfig_2}}{\modif{\acomm_2}} \subseteq 
    \text{ [$\modif{\acomm_1 + \acomm_2} = \modif{\acomm_1} \cup \modif{\acomm_2}$]} \\
    \semcomm{\acomm_1}(\aconfig_1) \comp \lift{\set{\aconfig_2}}{\modif{\acomm_1+\acomm_2}} \cup
    \semcomm{\acomm_2}(\aconfig_1) \comp \lift{\set{\aconfig_2}}{\modif{\acomm_1+\acomm_2}} = \\
    \big(\semcomm{\acomm_1}(\aconfig_1) \cup \semcomm{\acomm_2}(\aconfig_1)\big) \comp \lift{\set{\aconfig_2}}{\modif{\acomm_1+\acomm_2}} =
    \semcomm{\acomm_1 + \acomm_2}(\aconfig_1) \comp \lift{\set{\aconfig_2}}{\modif{\acomm_1+\acomm_2}}
    \end{array}\]
  \item $\acomm = (\withcomm{\vec{x}}{\pureform}{\acomm_1})$, where
    $\pureform$ consists of equalities and disequalities; we distinguish the
    cases below: \begin{itemize}
  \item if $\aconfig_1 \comp \aconfig_2 \models \pureform$, we compute 
    \[\begin{array}{rcl}
    \semcomm{\withcomm{\vec{x}}{\pureform}{\acomm_1}}(\aconfig_1 \comp \aconfig_2) & \subseteq & \semcomm{\acomm_1}(\lift{\set{\aconfig_1 \comp \aconfig_2}}{\vec{x}}) \\
    & \subseteq & \semcomm{\acomm_1}(\aconfig_1) \comp \lift{\set{\aconfig_2}}{\vec{x} \cup \modif{\acomm_1}} \\ 
    & = & \semcomm{\acomm_1}(\aconfig_1) \comp \lift{\set{\aconfig_2}}{\modif{\withcomm{\vec{x}}{\pureform}{\acomm_1}}}
    \end{array}\]
  \item else $\aconfig_1 \comp \aconfig_2 \not\models \pureform$ and 
    \[\semcomm{\withcomm{\vec{x}}{\pureform}{\acomm_1}}(\aconfig_1 \comp \aconfig_2) = 
    \emptyset \subseteq \semcomm{\acomm_1}(\aconfig_1) \comp \lift{\set{\aconfig_2}}{\modif{\whencomm{\pureform}{\acomm_1}}}\]
  \end{itemize}
  \end{itemize}
\qed}

\LemmaHoareAxioms* \proof{ The
  proof goes by case split on the type of the primitive command
  $\acomm$, which determines the pre- and post-condition $\phi$ and
  $\psi$ of the axiom, respectively: \begin{itemize}
  \item $\acomm = \new(q,x)$, $\phi = \emp$ and $\psi
    = \compin{x}{q}$:
    \[\semcomm{\new(q,x)}(\sem{\emp}{}) 
    = \set{ (\set{c}, \emptyset, \statemap[c \leftarrow q], \store[x
        \leftarrow c]) \mid c \in \universe} = \sem{\compin{x}{q}}{}\]
    The second step applies the definition
    $\semcomm{\acomm}{(\sem{\phi}{})} = \set{\aconfig' \mid \exists
      \aconfig \in \sem{\phi}{} ~.~
      \succj{\acomm}{\aconfig}{\aconfig'}}$ to the case $\acomm =
    \new(q,x)$, where the judgement
    $\succj{\new(q,x)}{\aconfig}{\aconfig'}$ is defined in
    Fig. \ref{fig:os}. The rest is by the semantics of \adl.
  \item $\acomm = \delete(x)$, $\phi = \company{x}$ and $\psi = \emp$: 
    \[
    \semcomm{\delete(x)}(\sem{\company{x}}{}) = 
    \semcomm{\delete(x)}(\set{(\set{\store(x)}, \emptyset, \statemap, \store) \mid \dom{\statemap}=\set{\store(x)}}) = \sem{\emp}{}
    \]
    The second step applies the definition
    $\semcomm{\acomm}{(\sem{\phi}{})} = \set{\aconfig' \mid \exists
      \aconfig \in \sem{\phi}{} ~.~
      \succj{\acomm}{\aconfig}{\aconfig'}}$ to the case $\acomm =
    \delete(x)$, where the judgement
    $\succj{\delete(x)}{\aconfig}{\aconfig'}$ is defined in
    Fig. \ref{fig:os}. The rest is by the semantics of \adl.
  \item $\acomm = \connect(x_1.p_1, x_2.p_2)$,
    $\phi = \emp$ and $\psi=\interac{x_1}{p_1}{x_2}{p_2}$:
    similar to $\acomm = \new(q,x)$.
  \item $\acomm = \disconnect(x_1.p_1, x_2.p_2)$, $\phi =
    \interac{x_1}{p_1}{x_2}{p_2}$ and $\psi = \emp$: similar to
    $\acomm = \delete(x)$.
  \item $\acomm = \skipcomm$ and $\phi = \emp$: trivial. \qed
  \end{itemize}}

\ThmSoundness* \proof{We prove that the inference rules in
  Fig. \ref{fig:hoare} are sound. For the axioms, soundness follows
  from Lemma \ref{lemma:hoare-axioms}. The rules for the composite
  programs are proved below by a case split on the syntax of the
  program from the conclusion $\hoare{\phi}{\acomm}{\psi}$, assuming
  that $\models \hoare{\phi_i}{\acomm_i}{\psi_i}$, for each premiss
  $\hoare{\phi_i}{\acomm_i}{\psi_i}$ of the rule: \begin{itemize}
    %
    %
  \item $\acomm = \withcomm{x_1, \ldots, x_k}{\varphi}{\acomm}$: let $\config \in
    \sem{\phi}{}$ be a configuration and distinguish the following
    cases: \begin{itemize}
    \item if $(\comps, \interacs, \statemap, \store[x_1 \leftarrow
      c_1, \ldots, x_k \leftarrow c_k]) \models \varphi * \predtrue$,
      for some $c_1, \ldots, c_k \in \universe$, we obtain $(\comps,
      \interacs, \statemap, \store[x_1 \leftarrow c_1, \ldots, x_k
        \leftarrow c_k]) \models \phi \wedge (\varphi * \predtrue)$,
      because $\fv{\phi} \cap \set{x_1,\ldots,x_k}=\emptyset$. Then
      $\semcomm{\withcomm{x_1, \ldots, x_k}{\varphi}{\acomm}}\config =
      \semcomm{\acomm}(\comps, \interacs, \statemap, \store[x_1
        \leftarrow c_1, \ldots, x_k \leftarrow c_k]) \subseteq
      \sem{\psi}{} \subseteq \sem{\exists \vec{x} ~.~ \psi}{}$ follows
      from the premiss of the rule.
    \item othewise, we have $\config \models \forall x_1 \ldots
      \forall x_k ~.~ \neg(\varphi * \predtrue)$ and
      $\semcomm{\withcomm{\vec{x}}{\varphi}{\acomm}}\config =
      \emptyset \subseteq \sem{\exists \vec{x} ~.~ \psi}{}$ follows.
    \end{itemize}
  \item the cases $\acomm = \acomm_1; \acomm_2$, $\acomm = \acomm_1 +
    \acomm_2$ and $\acomm = \acomm_1^*$ are simple checks using the
    operational semantics rules from Fig. \ref{fig:os}.
  \end{itemize}
  Concerning the structural rules, we show only the soundness of the
  frame rule below; the other rules are simple checks, left to the
  reader. Let $\aconfig \in \sem{\phi * \varphi}{}$ be a
  configuration. By the semantics of $*$, there exists $\aconfig_1 \in
  \sem{\phi}{}$ and $\aconfig_2 \in \sem{\varphi}{}$, such
  that $\aconfig = \aconfig_1 \comp \aconfig_2$. Since $\acomm \in
  \localcomms$, by Lemma \ref{lemma:locality}, we obtain
  $\semcomm{\acomm}(\aconfig_1 \comp \aconfig_2) \subseteq
  \semcomm{\acomm}(\aconfig_1) \comp
  \lift{\set{\aconfig_2}}{\modif{\acomm}}$. Since $\aconfig_1 \in
  \sem{\phi}{}$, by the hypothesis on the premiss we obtain
  $\semcomm{\acomm}(\aconfig_1) \subseteq
  \sem{\psi}{}$. Moreover, since $\aconfig_2 \in
  \sem{\varphi}{}$ and $\modif{\acomm} \cap \fv{\varphi}$, we
  obtain $\lift{\set{\aconfig_2}}{\modif{\acomm}} \subseteq
  \sem{\varphi}{}$, leading to $\semcomm{\acomm}(\aconfig)
  \subseteq \sem{\psi * \varphi}{}$, as required. \qed}

\PropSingleReversal* \proof{ In order to apply the sequential
  composition rule for the entire sequence, we need to prove that
  $\phi_1, \ldots, \phi_{\ell-1}$ are havoc invariant: \begin{itemize}
  \item For $i \in \interv{1}{k}$, the proof is by induction on
    $i$. In the base case, $\phi_1$ is havoc invariant, by the
    hypothesis. For the inductive step $i \in \interv{2}{k}$, let
    $\config \models \phi_{i}$ and $(\comps, \interacs, \statemap,
    \store) \Arrow{}{} \ldots \Arrow{}{} (\comps, \interacs,
    \statemap', \store)$ be a sequence of state changes induced by the
    execution of some interactions $(c_{i_1}, p_{i_1}, c'_{i_1},
    p'_{i_1})$, $\ldots$, $(c_{i_n}, p_{i_n}, c'_{i_n}, p'_{i_n}) \in
    \interacs$. Since $\models\hoare{\phi_{i-1}}{\disconnect(x_i.p_i,
      x'_i.p'_i)}{\phi_i}$, by the hypothesis, there exists a model
    $(\comps,\interacs',\statemap,\store)$ of $\phi_{i-1}$, such that
    $(c_{i_1}, p_{i_1}, c'_{i_1}, p'_{i_1})$, $\ldots$, $(c_{i_n},
    p_{i_n}, c'_{i_n}, p'_{i_n}) \in \interacs'$. Since $\phi_{i-1}$
    is havoc invariant, by the inductive hypothesis, we have
    $(\comps,\interacs',\statemap',\store) \models \phi_{i-1}$ and,
    since $\interacs = \interacs' \setminus \set{(\store(x_i), p_i,
      \store(x'_i), p'_i)}$, we have $(\store(x_i), p_i, \store(x'_i),
    p'_i) \not\in \{(c_{i_1}, p_{i_1}, c'_{i_1}, p'_{i_1}), \ldots,
    (c_{i_n}, p_{i_n}, c'_{i_n}, p'_{i_n})\}$, thus
    $(\comps,\interacs,\statemap',\store) \models \phi_i$. Since the
    choices of $\config$ and $(c_{i_1}, p_{i_1}, c'_{i_1}, p'_{i_1}),
    \ldots, (c_{i_n}, p_{i_n}, c'_{i_n}, p'_{i_n})$ were arbitrary, we
    obtain that $\phi_i$ is havoc invariant.
  \item For $i \in \interv{k+1}{\ell-1}$, the proof is by reversed
    induction on $i$. In the base case, $\phi_{\ell-1}$ is havoc
    invariant, by the hypothesis. For the inductive step, let $\config
    \models \phi_{i-1}$ and $(\comps, \interacs, \statemap, \store)
    \Arrow{}{} \ldots \Arrow{}{} (\comps, \interacs, \statemap',
    \store)$ be a sequence of state changes, induced by the executions
    of some interactions $(c_{i_1}, p_{i_1}, c'_{i_1}, p'_{i_1})$,
    $\ldots$, $(c_{i_n}, p_{i_n}, c'_{i_n}, p'_{i_n}) \in \interacs$.
    Since $\models\hoare{\phi_{i-1}}{\connect(x_i.p_i,
      x'_i.p_i)}{\phi_i}$, by the hypothesis, there exists a model
    $(\comps,\interacs',\statemap,\store)$ of $\phi_{i}$, such that
    $(c_{i_1}, p_{i_1}, c'_{i_1}, p'_{i_1})$, $\ldots$, $(c_{i_n},
    p_{i_n}, c'_{i_n}, p'_{i_n}) \in \interacs'$. Since $\phi_i$ is
    havoc invariant, by the inductive hypothesis, we have
    $(\comps,\interacs',\statemap',\store) \models \phi_i$ and, since
    $\interacs =
    \interacs'\setminus\set{(\store(x_i),p_i,\store(x'_i),p'_i)}$, we
    have $(\store(x_i),p_i,\store(x'_i),p'_i) \not\in \{(c_{i_1},
    p_{i_1}, c'_{i_1}, p'_{i_1})$, $\ldots$, $(c_{i_n}, p_{i_n},
    c'_{i_n}, p'_{i_n})\}$ and $(\comps,\interacs,\statemap',\store)
    \models \phi_{i-1}$. Since the choices of $\config$ and $(c_{i_1},
    p_{i_1}, c'_{i_1}, p'_{i_1}), \ldots, (c_{i_n}, p_{i_n}, c'_{i_n},
    p'_{i_n})$ were arbitrary, $\phi_{i-1}$ is havoc invariant. \qed
  \end{itemize}}

\section{Proof from Section \ref{sec:havoc}}

\PropChainPrecise* \proof{ Let $\varphi_i \isdef \phi_i *
  \Asterisk_{j=1}^{k_i} \chain{h_{i,j}}{t_{i,j}}(x_{i,j},y_{i,j})$ be
  symbolic configurations, where $\phi_i$ is a predicate-free symbolic
  configuration and $h_{i,j}, t_{i,j} \geq 0$ are integers, for all $j
  \in \interv{1}{k_i}$ and $i = 1,2$. We prove that $\varphi_1$ is
  precise on $\sem{\varphi_2}{}$. Let $\aconfig = \config \in
  \sem{\varphi_2}{}$ be a configuration and suppose that there exist
  configurations $\aconfig' = (\comps',\interacs',\statemap',\store)$
  and $\aconfig'' = (\comps'',\interacs'',\statemap'',\store)$, such
  that $\aconfig' \isubstreq \aconfig$, $\aconfig'' \isubstreq
  \aconfig$, $\aconfig' \models \varphi_1$ and $\aconfig'' \models
  \varphi_1$. Then there exist configurations $\aconfig'_0 \isdef
  (\comps'_0,\interacs'_0,\statemap,\store), \ldots, \aconfig'_{k_1}
  \isdef (\comps'_{k_1},\interacs'_{k_1},\statemap,\store)$ and
  $\aconfig''_0 \isdef (\comps''_0,\interacs''_0,\statemap,\store),
  \ldots, \aconfig''_{k_1} \isdef (\comps''_{k_1}, \interacs''_{k_1},
  \statemap, \store)$, such that: \begin{itemize}
  \item $\aconfig' = \bigcomp_{j=0}^{k_1} ~\aconfig'_j$ and
    $\aconfig'' = \bigcomp_{j=0}^{k_1} ~\aconfig''_j$,
  \item $\aconfig'_0 \models \phi_1$ and $\aconfig''_0 \models
    \phi_1$, and
  \item $\aconfig'_j \models
    \chain{h_{1,j}}{t_{1,j}}(x_{1,j},y_{1,j})$ and $\aconfig''_j
    \models \chain{h_{1,j}}{t_{1,j}}(x_{1,j},y_{1,j})$, for all
    $j \in \interv{1}{k_1}$.
  \end{itemize}
  Since $\phi_1$ is a predicate-free symbolic configuration, we have
  $\comps'_0=\comps''_0$ and $\interacs'_0=\interacs''_0$, thus
  $\aconfig'_0 = \aconfig''_0$. Moreover, for each $j \in
  \interv{1}{k_1}$, we have $\comps'_j = \comps''_j$ and $\interacs'_j
  = \interacs''_j$, because both configurations consist of the tight
  interactions $(c_1, \mathit{out}, c_2, \mathit{in})$, $\ldots$,
  $(c_{\ell-1}, \mathit{out}, c_\ell, \mathit{in})$, such that
  $\store(x_{1,j}) = c_1$ and $\store(y_{1,j}) = c_\ell$. Thus, we
  obtain $\aconfig'_j = \aconfig''_j$, for all $j \in \interv{0}{k}$,
  leading to $\aconfig' = \aconfig''$. \qed}

\PropHavocValid* \proof{ Let $\aconfig=\config$ be a model of $\phi$
  i.e., $\aconfig \in \sem{\phi}{}$. It is sufficient to prove that
  $\ahavoc(\aconfig) \subseteq
  \set{(\comps,\interacs,\statemap',\store) \mid \aconfig \Open{w}{}
    (\comps,\interacs,\statemap',\store),~ w \in
    \semlang{\interof{\phi}^*}{\aconfig}}$, because
  $\set{(\comps,\interacs,\statemap',\store) \mid \statemap \Open{w}{}
    (\comps,\interacs,\statemap',\store)} \subseteq \sem{\psi}{}$ for
  each $w \in \semlang{\interof{\phi}^*}{\aconfig}$, by the hypothesis
  $\models \havoctriple{\aenv}{\phi}{\interof{\phi}^*}{\psi}$
  (Def. \ref{def:havoc-valid}). Let $\aconfig' =
  (\comps,\interacs,\statemap',\store)\in \ahavoc(\aconfig)$ be a
  configuration. Then there exists a finite sequence of interactions,
  say $w = (c_1,p_1,c'_1,p'_1) \ldots (c_n,p_n,c'_n,p'_n) \in
  \interacs^*$, such that $(\comps,\interacs,\statemap,\store)
  \Arrow{w}{} (\comps,\interacs,\statemap',\store)$, by
  Def. \ref{def:havoc}.  Note that
  $\set{(\comps,\interacs,\statemap'',\store) \mid
    (\comps,\interacs,\statemap,\store) \Arrow{(c_1,p_1,c_2,p_2)}{}
    (\comps,\interacs,\statemap'',\store)} \subseteq
  \set{(\comps,\interacs,\statemap'',\store) \mid \statemap
    \Open{(c_1,p_1,c_2,p_2)}{}
    (\comps,\interacs,\statemap'',\store)}$, for each interaction
  $(c_1,p_1,c_2,p_2) \in \interacs$, by Def. \ref{def:havoc} and
  Def. \ref{def:open}. It remains to show that $w \in
  \semlang{\interof{\phi}^*}{\aconfig}$. Since $\config \models \phi$,
  for any interaction $(c_k,p_k,c'_k,p'_k)\in\interacs$, for $k \in
  \interv{1}{n}$, we distinguish two cases, either: \begin{itemize}
  \item there exists an interaction atom $\alpha =
    \interac{x_k}{p_k}{x'_k}{p'_k} \in \iatoms{\phi}$ and a
    configuration $(\comps'',\interacs'',\statemap'',\store)
    \isubstreq \aconfig$, such that
    $\interacs''=\set{(c_k,p_k,c'_k,p'_k)}$ and
    $(\comps'',\interacs'',\statemap'',\store) \models \alpha$, or
  \item there exists a predicate atom $\alpha \in \patoms{\phi}$ and a
    configuration $(\comps'',\interacs'',\statemap'',\store)
    \isubstreq \aconfig$, such that $(c_k,p_k,c'_k,p'_k)\in\interacs''$ and
    $(\comps'',\interacs'',\statemap'',\store) \models \alpha$. 
  \end{itemize}
  In both cases, we have $(c_k,p_k,c'_k,p'_k) \in
  \semlang{\interof{\alpha}}{\aconfig}$, for some $\alpha \in
  \atoms{\phi}$, thus $(c_k,p_k,c'_k,p'_k) \in
  \semlang{\interof{\phi}}{\aconfig}$, because $\interof{\phi} =
  \bigcup_{\alpha\in\atoms{\phi}} \interof{\alpha}$. Since the choice
  of $k \in \interv{1}{n}$ is arbitrary, we obtain that $w \in
  \semlang{\interof{\phi}^*}{\aconfig}$. \qed}

\LemmaDistinctive* \proof{ The proof goes by induction on the
  structure of the proof tree. For the base case, the tree consists of
  a single root node and let $\havoctriple{\aenv}{\phi}{\are}{\psi}$
  be the label of the root node. By Assumption \ref{ass:root}, $\phi$
  is a symbolic configuration and $\aenv = \set{\interof{\alpha_1},
    \ldots, \interof{\alpha_k}}$, where $\atoms{\phi} = \set{\alpha_1,
    \ldots, \alpha_k}$ is the set of interaction and predicate atoms
  from $\phi$. Let $\aconfig$ be a model of $\phi$, hence there exist
  configurations $\aconfig_0, \aconfig_1, \ldots, \aconfig_k$, such
  that $\aconfig = \bigcomp_{i=0}^k ~\aconfig_i$ and $\aconfig_i
  \models \alpha_i$, for all $i \in \interv{1}{k}$. Because the
  composition $\aconfig_i \comp \aconfig_j$ is defined, we obtain that
  $\semlang{\interof{\alpha_i}}{\aconfig_i} \cap
  \semlang{\interof{\alpha_j}}{\aconfig_j} = \emptyset$, for all $i
  \neq j \in \interv{1}{k}$. Moreover, since each formula
  $\alpha_i\in\atoms{\phi}$ is precise on $\sem{\phi}{}$, we have
  $\semlang{\interof{\alpha_i}}{\aconfig_i} =
  \semlang{\interof{\alpha_i}}{\aconfig}$, hence
  $\semlang{\interof{\alpha_i}}{\aconfig} \cap
  \semlang{\interof{\alpha_j}}{\aconfig} = \emptyset$, for all $i \neq
  j \in \interv{1}{k}$. For the inductive step, we distinguish the
  cases below, based on the type of the inference rule that expands
  the root: \begin{itemize}
  \item (\ii) Let $\havoctriple{\aenv}{\phi *
    \interac{x_1}{p_1}{x_2}{p_2}}{\are}{\psi *
    \interac{x_1}{p_1}{x_2}{p_2}}$ be label of the root of the proof
    tree and $\aconfig \in \sem{\phi *
      \interac{x_1}{p_1}{x_2}{p_2}}{}$ be a configuration. Then there
    exists configurations $\aconfig_0$ and $\aconfig_1$, such that
    $\aconfig = \aconfig_0 \comp \aconfig_1$, $\aconfig_0 \models
    \phi$ and $\aconfig_1 \models \interac{x_1}{p_1}{x_2}{p_2}$. By
    Assumption \ref{ass:root}, we have $\aenv = \interof{\phi} \cup
    \set{\interof{\interac{x_1}{p_1}{x_2}{p_2}}}$. By the inductive
    hypothesis, the premiss $\havoctriple{\aenv \setminus
      \set{\interac{x_1}{p_1}{x_2}{p_2}}}{\phi}{\are}{\psi}$ of the
    rule is distinctive, hence the interpretations of the atoms in the
    environment $\set{\semlang{\interof{\alpha}}{\aconfig_0} \mid
      \alpha \in \atoms{\phi}}$ are pairwise disjoint. Since each
    predicate atom $\alpha \in \atoms{\phi}$ is precise on
    $\sem{\phi}{}$, the sets $\semlang{\interof{\alpha}}{\aconfig}$,
    $\alpha \in \atoms{\phi}$ are also pairwise disjoint. Since
    $\interac{x_1}{p_1}{x_2}{p_2}$ is precise on $\configset$, we
    obtain that $\semlang{\interac{x_1}{p_1}{x_2}{p_2}}{\aconfig_1} =
    \semlang{\interac{x_1}{p_1}{x_2}{p_2}}{\aconfig}$ and, since
    $\aconfig = \aconfig_0 \comp \aconfig_1$, the set
    $\semlang{\interac{x_1}{p_1}{x_2}{p_2}}{\aconfig}$ is disjoint
    from the sets $\semlang{\interof{\alpha}}{\aconfig}$, $\alpha \in
    \atoms{\phi}$, thus $\havoctriple{\aenv}{\phi *
      \interac{x_1}{p_1}{x_2}{p_2}}{\are}{\psi *
      \interac{x_1}{p_1}{x_2}{p_2}}$ is distinctive.
  \item (\ie) Let $\havoctriple{\aenv}{\phi}{\are}{\psi}$ be the root
    label and let $\aconfig \in \sem{\phi}{}$ be a configuration. By
    Assumption \ref{ass:root}, we have $\aenv = \interof{\phi}$ and
    let $\interac{x_1}{p_1}{x_2}{p_2}$ be an interaction atom, such
    that $\excluded{\phi}{\interac{x_1}{p_1}{x_2}{p_2}}$. Let
    $\aconfig'$ be any model of $\interac{x_1}{p_1}{x_2}{p_2}$. By
    $\excluded{\phi}{\interac{x_1}{p_1}{x_2}{p_2}}$, we have
    $\interac{x_1}{p_1}{x_2}{p_2} \not\in \aenv$ and, moreover, the
    composition $\aconfig \comp \aconfig'$ is defined, thus $\aconfig
    \comp \aconfig' \in \sem{\phi *
      \interac{x_1}{p_1}{x_2}{p_2}}{}$. By the inductive hypothesis,
    $\havoctriple{\aenv \cup
      \set{\interof{\interac{x_1}{p_1}{x_2}{p_2}}}}{\phi *
      \interac{x_1}{p_1}{x_2}{p_2}}{\are}{\psi *
      \interac{x_1}{p_1}{x_2}{p_2}}$ is distinctive, hence
    $\semlang{\interof{\alpha_1}}{\aconfig \comp \aconfig'} \cap
    \semlang{\interof{\alpha_2}}{\aconfig \comp \aconfig'} =
    \emptyset$, for all $\interof{\alpha_1}, \interof{\alpha_2} \in
    \aenv$. Since $\aconfig' \models \interac{x_1}{p_1}{x_2}{p_2}$,
    $\aenv = \interof{\phi}$ and
    $\excluded{\phi}{\interac{x_1}{p_1}{x_2}{p_2}}$, we obtain
    $\semlang{\interof{\alpha}}{\aconfig \comp \aconfig'} =
    \semlang{\interof{\alpha}}{\aconfig}$, for all $\interof{\alpha}
    \in \aenv$, thus $\havoctriple{\aenv}{\phi}{\are}{\psi}$ is
    distinctive.
  \item (\parr) Let $\havoctriple{\aenv_1\cup\aenv_2}{\phi_1 *
    \phi_2}{\are_1 \parcomp_{\aenv_1,\aenv_2} \are_2}{\psi_1 *
    \psi_2}$ be the label of the root, $\aenv_i = \interof{\phi_i *
    \front{\phi_i}{\phi_{3-i}}}$, for $i=1,2$, and let $\aconfig$ be a
    model of the precondition of this havoc triple. Then there exists
    two configurations $\aconfig_1$, $\aconfig_2$, such that $\aconfig
    = \aconfig_1 \comp \aconfig_2$ and $\aconfig_i \models \phi_i$,
    for $i = 1,2$. By Assumption \ref{ass:root}, we have $\aenv_1 \cup
    \aenv_2 = \interof{\phi_1} \cup \interof{\phi_2}$. Let
    $\aconfig'_i$ be a structure, such that $\aconfig'_i \isubstreq
    \aconfig_{3-i}$ and $\aconfig'_i \models
    \front{\phi_i}{\phi_{3-i}}$, for $i = 1,2$. By the definition of
    $\front{\phi_i}{\phi_{3-i}}$, as separated conjunction of
    interaction atoms from $\phi_{3-i}$, these substructures exist,
    and moreover, because each interaction atom is precise on
    $\configset$, they are unique. Then we have $\aconfig_i \comp
    \aconfig'_i \models \phi_i * \front{\phi_i}{\phi_{3-i}}$, for $i =
    1,2$. By the inductive hypothesis, since each havoc triple
    $\havoctriple{\aenv_i}{\phi_i *
      \front{\phi_i}{\phi_{3-i}}}{\are_i}{\psi_i *
      \front{\phi_i}{\phi_{3-i}}}$ is distinctive, the sets
    $\set{\semlang{\interof{\alpha}}{\aconfig_i \comp \aconfig'_i}
      \mid \alpha \in \atoms{\phi_i}}$ are pairwise disjoint, for $i =
    1,2$. Since each predicate atom $\alpha \in \atoms{\phi_1 *
      \phi_2}$ is precise on $\sem{\phi_1 * \phi_2}{}$, hence the sets
    $\set{\semlang{\interof{\alpha}}{\aconfig} \mid \alpha \in
      \atoms{\phi_i}}$ are pairwise disjoint as well, for $i =
    1,2$. Since the configurations $\aconfig_1$ and $\aconfig_2$ share
    no interactions, the havoc triple
    $\havoctriple{\aenv_1\cup\aenv_2}{\phi_1 * \phi_2}{\are_1
      \parcomp_{\aenv_1, \aenv_2} \are_2}{\psi_1 * \psi_2}$ is
    distinctive.
  \item (\lu) Let $\havoctriple{\aenv}{\phi *
    \apred(y_1,\ldots,y_{\#(\apred)})}{\are}{\psi}$ be the label of
    the root let $\aconfig$ be a model of the precondition of this
    havoc triple. Then there exist configurations $\aconfig_0 =
    (\comps_0, \interacs_0, \statemap_0, \store)$ and $\aconfig_1 =
    (\comps_1, \interacs_1, \statemap_1, \store)$, such that $\aconfig
    = \aconfig_0 \comp \aconfig_1$, $\aconfig_0 \models \phi$ and
    $\aconfig_1 \models \apred(y_1,\ldots,y_{\#(\apred)})$. By
    Assumption \ref{ass:root}, we have $\aenv = \interof{\phi} \cup
    \set{\interof{\apred(y_1,\ldots,y_{\#(\apred)})}}$. Since
    $\aconfig_1 \models \apred(y_1,\ldots,y_{\#(\apred)})$, there
    exists a rule $\apred(x_1, \ldots, x_{\#(\apred)}) \unfoldrule
    \exists z_1 \ldots \exists z_{h} ~.~ \varphi$ in the SID, where
    $\varphi$ is a symbolic configuration, such that $(\comps_1, \interacs_1, \statemap_1, 
    \store[z_1 \leftarrow c_1, \ldots, z_{h} \leftarrow c_{h}]) \models \varphi$, for some components $c_1, \ldots, c_{h}
    \in \universe$, and let $\aenv' \isdef (\aenv \setminus
    \set{\interof{\apred(y_1,\ldots,y_{\#(\apred)})}}) \cup
    \interof{\varphi}$. We can assume w.l.o.g. that $\set{z_1, \ldots,
      z_h} \cap \fv{\phi} = \emptyset$ (if necessary, by an
    $\alpha$-renaming of existentially quantified variables), hence
    $\aconfig'_0 \models \phi$ and $\aconfig'_1 \models \apred(y_1,
    \ldots, y_{\#(\apred)})$, where $\aconfig'_0 \isdef (\comps_0,\interacs_0,\statemap_0,
    \store[z_1 \leftarrow c_1, \ldots, z_{h} \leftarrow c_{h}])$ and $\aconfig'_1 \isdef (\comps_1, \interacs_1, \statemap_1, \store[z_1
      \leftarrow c_1, \ldots, z_{h} \leftarrow c_{h}])$,
    thus $\aconfig' \models \phi * \apred(y_1, \ldots,
    y_{\#(\apred)})$, where $\aconfig' = \aconfig'_0 \comp
    \aconfig'_1$. By the inductive hypothesis, the premiss
    $\havoctriple{\aenv'}{\exists z_1 \ldots \exists z_{h} ~.~ \phi *
      \varphi[x_1/y_1 \ldots
        x_{\#(\apred)}/y_{\#(\apred)}]}{\are'}{\psi}$ is distinctive
    and, moreover, $\aconfig' \models \exists z_1 \ldots \exists z_{h}
    ~.~ \phi * \varphi[x_1/y_1 \ldots x_{\#(\apred)}/y_{\#(\apred)}]$,
    hence $\semlang{\interof{\alpha_1}}{\aconfig'} \cap
    \semlang{\interof{\alpha_2}}{\aconfig'} = \emptyset$, for all
    $\alpha_1 \in \atoms{\phi}$ and $\alpha_2 \in \atoms{\varphi}$.
    Since $\apred(y_1, \ldots, y_{\#(\apred)})$ is precise on
    $\sem{\phi * \apred(y_1, \ldots, y_{\#(\apred)})}{}$, we have
    $\semlang{\interof{\apred(y_1, \ldots,
        y_{\#(\apred)})}}{\aconfig'} = \bigcup_{\alpha \in
      \atoms{\varphi}} \semlang{\interof{\alpha}}{\aconfig'}$. Since
    $\semlang{\interof{\alpha}}{\aconfig'} =
    \semlang{\interof{\alpha}}{\aconfig}$, for each $\alpha \in
    \atoms{\phi} \cup \set{\apred(y_1, \ldots, y_{\#(\apred)})}$, we
    obtain that $\semlang{\interof{\alpha_1}}{\aconfig} \cap
    \semlang{\interof{\alpha_2}}{\aconfig} = \emptyset$, for all
    $\alpha_1, \alpha_2 \in \atoms{\phi} \cup \set{\apred(y_1, \ldots,
      y_{\#(\apred)})}$, leading to the fact that
    $\havoctriple{\aenv}{\phi *
      \apred(y_1,\ldots,y_{\#(\apred)})}{\are}{\psi}$ is distinctive.
  \item ($\vee$) Let $\havoctriple{\aenv}{\bigvee_{i=1}^k \phi \wedge
    \delta_i}{\are}{\bigvee_{i=1}^k\psi_i}$ be the label of the root
    and let $\aconfig$ be a model of the precondition of this
    triple. Then $\aconfig \models \phi \wedge \delta_i$, for some
    $i \in \interv{1}{k}$. By the inductive hypothesis, the triple
    $\havoctriple{\aenv}{\phi \wedge \delta_i}{\are}{\psi_i}$ is
    distinctive, hence $\semlang{\interof{\alpha_1}}{\aconfig} \cap
    \semlang{\interof{\alpha_1}}{\aconfig} = \emptyset$, for all
    $\alpha_1, \alpha_2 \in \atoms{\phi}$. Since $\atoms{\phi} =
    \atoms{\bigvee_{i=1}^k \phi \wedge \delta_i}$, we obtain that
    $\havoctriple{\aenv}{\bigvee_{i=1}^k \phi \wedge
      \delta_i}{\are}{\bigvee_{i=1}^k\psi_i}$ is distinctive.
  \item ($\wedge$) and ($\cdot$): these cases are similar to ($\vee$).
  \item (\conseq), ($*$), (\ui) and (\ue): these cases are trivial,
    because the precondition and the environment does not change
    between the conclusion and the premisses of these rules. \qed
  \end{itemize}} 

\ThmHavocSoundness*
\proof{ For each axiom and inference rule in
  Fig. \ref{fig:havoc-rules}, with premisses
  $\havoctriple{\aenv_i}{\phi_i}{\are_i}{\psi_i}$, for $i = 1, \ldots,
  k$, $k \geq 0$, and conclusion
  $\havoctriple{\aenv}{\phi}{\are}{\psi}$, we prove that:
  \[(\star)~ \models \havoctriple{\aenv}{\phi}{\are}{\psi} \text{, if }
  \models \havoctriple{\aenv_i}{\phi_i}{\are_i}{\psi_i} \text{, for
    all $i \in \interv{1}{k}$}\] Let us show first that ($\star$) is a
  sufficient condition. If $\Vdash
  \havoctriple{\aenv}{\phi}{\are}{\psi}$ then there exists a cyclic
  proof whose root is labeled by
  $\havoctriple{\aenv}{\phi}{\are}{\psi}$ and we apply the principle
  of infinite descent to prove that $\models
  \havoctriple{\aenv}{\phi}{\are}{\psi}$. Suppose, for a
  contradiction, that this is not the case. Assuming that ($\star$)
  holds, each invalid node, with label
  $\havoctriple{\aenv_i}{\phi_i}{\are_i}{\psi_i}$ and counterexample
  $\aconfig_i$, not on the frontier of the proof tree, has a
  successor, whose label is invalid, for all $i \geq 0$. Let
  $\patoms{\phi_i} = \set{\apred^i_1(\vec{y}^i_1), \ldots,
    \apred^i_{k_i}(\vec{y}^i_{k_i})}$ be the set of predicate atoms
  from $\phi_i$, for each $i \geq 0$. Consequently, there exists a set
  of configurations $\Gamma_i = \set{\aconfig^i_0, \ldots,
    \aconfig^i_{k_i}}$, such that $\aconfig_i = \aconfig^i_0 \comp
  \ldots \comp \aconfig^i_{k_i}$ and $\aconfig^i_j \models
  \apred^i_j(\vec{y}^i_j)$, for all $j \in \interv{1}{k_i}$ and all $i
  \geq 0$.

  \begin{fact}\label{fact:cex}
    For each $i \geq 0$, either $\Gamma_{i+1} \subseteq \Gamma_i$ or
    there exists $j \in \interv{1}{k_i}$, such that $\Gamma_{i+1} =
    \big(\Gamma_i \setminus \set{\aconfig^i_j}\big) \cup
    \set{\aconfig' \in \sem{\apred(x_1, \ldots, x_{\#(\apred)})}{}
      \mid \aconfig' \substreq \aconfig^i_j,~ \apred(x_1, \ldots,
      x_{\#(\apred)}) \in
      \patoms{\varphi^i_j[\vec{x}^i_j/\vec{y}^i_j]}}$, where
    $\apred^i_j(\vec{x}^i_j) \unfoldrule \exists \vec{z}^i_j ~.~
    \varphi^i_j$ is a rule of the SID and $\varphi^i_j$ is a symbolic
    configuration.
  \end{fact}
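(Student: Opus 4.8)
The plan is to prove the Fact by case analysis on the inference rule of Fig.~\ref{fig:havoc-rules} whose conclusion labels node $i$ and one of whose premisses labels its successor $i+1$ on the fixed infinite branch of invalid nodes. The quantity I track at each node is the decomposition $\aconfig_i = \aconfig^i_0 \comp \ldots \comp \aconfig^i_{k_i}$ into the predicate-free remainder $\aconfig^i_0$ and the witnesses $\aconfig^i_1, \ldots, \aconfig^i_{k_i}$ of the predicate atoms in $\patoms{\phi_i}$. Throughout I rely on preciseness of each predicate atom (established inside the proof of Lemma~\ref{lemma:distinctive}, via Prop.~\ref{prop:chain-precise}): the witness of a predicate atom in the decomposition of a given counterexample is uniquely determined, which is what makes $i \mapsto \Gamma_i$ well defined and lets me compare $\Gamma_{i+1}$ with $\Gamma_i$.

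First I would dispatch every rule that does not unfold a predicate, showing each yields option~1. The structural rules (\conseq), ($*$), (\ui), (\ue), ($\vee$), ($\wedge$), ($\cdot$) leave the predicate atoms of the precondition unchanged and propagate the same counterexample, so $\patoms{\phi_{i+1}}=\patoms{\phi_i}$ and each witness is reused verbatim, giving $\Gamma_{i+1}=\Gamma_i$. The interaction rules (\ii) and (\ie) merely add or delete one interaction atom $\interac{x_1}{p_1}{x_2}{p_2}$; since an interaction atom carries no predicate, $\patoms{\phi_{i+1}}=\patoms{\phi_i}$, and by preciseness the witnesses $\aconfig^i_1,\ldots,\aconfig^i_{k_i}$ are untouched, only the remainder $\aconfig^i_0$ gaining or losing that interaction. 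The rule (\parr) projects $\phi_1 * \phi_2$ onto one component $\phi_m$, whose premise has predicate atoms $\patoms{\phi_m}\subseteq\patoms{\phi_1*\phi_2}$ and a frontier $\front{\phi_m}{\phi_{3-m}}$ built solely from interaction atoms; hence no new witness appears and the surviving ones are carried over. In all of these, every predicate witness of $\Gamma_{i+1}$ already lies in $\Gamma_i$, so $\Gamma_{i+1}\subseteq\Gamma_i$.

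The only rule producing option~2 is the left-unfold rule (\lu). There $\phi_i = \phi * \apred^i_j(\vec{y}^i_j)$ and the premise replaces that atom by the body of a chosen SID rule $\apred^i_j(\vec{x}^i_j)\unfoldrule\exists\vec{z}^i_j ~.~ \varphi^i_j$, so that $\phi_{i+1} = \exists \vec{z}^i_j ~.~ \phi * \varphi^i_j[\vec{x}^i_j/\vec{y}^i_j]$. Since the premise counterexample $\aconfig_{i+1}$ is the propagation of $\aconfig_i$ extended on the fresh $\vec{z}^i_j$, and the witnesses of the unchanged atoms of $\phi$ are pinned down by preciseness, the witness $\aconfig^i_j$ of the unfolded atom must itself model the unfolded body; it therefore decomposes into witnesses of the atoms of $\varphi^i_j[\vec{x}^i_j/\vec{y}^i_j]$, and the predicate atoms among them are witnessed by configurations $\aconfig' \substreq \aconfig^i_j$. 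All other witnesses $\aconfig^i_{j'}$ with $j'\neq j$ survive, yielding $\Gamma_{i+1}=(\Gamma_i\setminus\set{\aconfig^i_j})\cup\set{\aconfig'\in\sem{\apred(\vec{x})}{}\mid\aconfig'\substreq\aconfig^i_j,\ \apred(\vec{x})\in\patoms{\varphi^i_j[\vec{x}^i_j/\vec{y}^i_j]}}$, exactly as stated.

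The hard part will be the (\lu) case: I must justify that $\aconfig^i_j$ is precisely the part refined while nothing else moves. Concretely, I would use preciseness of predicate atoms to argue that the premise counterexample can be chosen so that its restriction to the unfolded sub-formula coincides with $\aconfig^i_j$ up to the store extension on $\vec{z}^i_j$, and that no interaction belonging to $\aconfig^i_j$ leaks into the remainder or into the other witnesses. A secondary bookkeeping subtlety, which I would make explicit, is that in (\ii), (\ie) and (\parr) the predicate-free remainder $\aconfig^i_0$ exchanges interaction atoms with the deleted atom or the frontier; I would note that this never alters a predicate witness, so it is harmless for the descent measure that option~1 is designed to preserve.
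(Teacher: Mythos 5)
Your proposal is correct and takes essentially the same route as the paper: the paper also proves the Fact by inspection of the inference rules, singling out (\parr) (where $\Gamma_{i+1}\subseteq\Gamma_i$ because the premiss counterexamples are subconfigurations of the conclusion's) and (\lu) (where $\aconfig^i_j$ is replaced by the witnesses, all $\substreq\aconfig^i_j$, of the predicate atoms in the chosen unfolding) as the only interesting cases. The paper's version is merely far terser, leaving implicit the per-rule bookkeeping and the appeal to preciseness that you spell out.
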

  \proof{ By inspection of the inference rules in
    Fig. \ref{fig:havoc-rules}b-e. The only interesting cases are: \begin{itemize}
    \item (\parr) in this case $\Gamma_{i+1}\subseteq\Gamma_i$,
      because the models of the preconditions from the premises are
      subconfigurations of the model of the precondition in the
      conclusion,
    \item (\lu) in this case $\Gamma_{i+1}$ is obtained by replacing
      an element $\aconfig^i_j$ from $\Gamma_i$ with a set of
      configurations $\aconfig'$, such that $\aconfig' \substreq
      \aconfig^i_j$ and $\aconfig'$ is a model of a predicate atom
      from an unfolding of the predicate atom for which $\aconfig^i_j$
      is a model. \qed
    \end{itemize}}

  For a configuration $\aconfig^i_j \in
  \sem{\apred^i_j(\vec{x}^i_j)}{}$, we denote by $\stageno(i,j)$ the
  minimum number of steps needed to evaluate the $\models$ relation in
  the given SID. Since $\apred(y_1, \ldots, y_{\#(\apred)})$ is
  precise on $\sem{\phi * \apred(y_1, \ldots, y_{\#(\apred)})}{}$, we
  have $\semlang{\interof{\apred(y_1, \ldots,
      y_{\#(\apred)})}}{\aconfig'} = \bigcup_{\alpha \in
    \atoms{\varphi}} \semlang{\interof{\alpha}}{\aconfig'}$. Since
  $\semlang{\interof{\alpha}}{\aconfig'} =
  \semlang{\interof{\alpha}}{\aconfig}$, for each $\alpha \in
  \atoms{\phi} \cup \set{\apred(y_1, \ldots, y_{\#(\apred)})}$, we
  obtain that $\semlang{\interof{\alpha_1}}{\aconfig} \cap
  \semlang{\interof{\alpha_2}}{\aconfig} = \emptyset$, for all
  $\alpha_1, \alpha_2 \in \atoms{\phi} \cup \set{\apred(y_1, \ldots,
    y_{\#(\apred)})}$, leading to the fact that
  $\havoctriple{\aenv}{\phi *
    \apred(y_1,\ldots,y_{\#(\apred)})}{\are}{\psi}$ is
  distinctive. Let $\stagemset_i$ be the multiset of numbers
  $\stageno(i,j)$, for all $j \in \interv{1}{k_i}$ and $i \geq 0$. By
  Fact \ref{fact:cex}, the sequence of multisets $\stagemset_0,
  \stagemset_1, \ldots$ is such that either $\stagemset_i =
  \stagemset_{i+1}$ or $\stagemset_i \succ \stagemset_{i+1}$, where
  the Dershowitz-Manna multiset ordering $\prec$ is defined as
  $\stagemset \prec \stagemset'$ if and only if there exist two
  multisets $X$ and $Y$, such that $X \neq \emptyset$, $X \subseteq
  \stagemset'$, $\stagemset = (\stagemset' \setminus X) \cup Y$, and
  for all $y \in Y$ there exists some $x \in X$, such that $y < x$. By
  the fact that the cyclic proof tree is a cyclic proof, the infinite
  path goes infinitely often via a node whose label is the conclusion
  of the application of (\lu). Then the infinite sequence of multisets
  $\stagemset_0, \stagemset_1, \ldots$ contains a strictly decreasing
  subsequence in the multiset order, which contradicts the fact that
  $\prec$ is well-founded.

  Let $\wopen{w}(\aconfig) \isdef \set{\aconfig' \mid
    \aconfig\Open{w}{}\aconfig'}$, where the relation $\aconfig
  \Open{w}{} \aconfig'$ is defined in Def. \ref{def:open}. We are left
  with proving ($\star$) for each type of axiom and inference rule in
  Fig. \ref{fig:havoc-rules}: \begin{itemize}
  \item (\epsilonr) For each configuration $\aconfig$, we have
    $\semlang{\epsilon}{\aconfig} = \set{\epsilon}$ and
    $\wopen{\epsilon}\config = \set{\config}$. 
  \item (\disr) In any model $\config$ of $\phi$, we have
    $\wopen{(\store(x_1), p_1, \store(x_2), p_2)}\config = \emptyset$,
    because of the side condition
    $\disabled{\phi}{\interac{x_1}{p_1}{x_2}{p_2}}$
    (Def. \ref{def:disabled-excluded}).
  \item (\botr) Because the precondition has no models.
  \item (\inter) By an application of Def. \ref{def:open}.
  \item (\supr) Let $\aconfig = \config \in \sem{\phi *
    \interac{x_1}{p_1}{x_2}{p_2}}{}$ be a configuration. By Lemma
    \ref{lemma:distinctive}, we have that $\havoctriple{\aenv}{\phi *
      \interac{x_1}{p_1}{x_2}{p_2}}{\are}{\psi *
      \interac{x_1}{p_1}{x_2}{p_2}}$ is distinctive. By the side
    condition $\interof{\interac{x_1}{p_1}{x_2}{p_2}} \in \aenv
    \setminus \supp{\are}$, it follows that
    $\semlang{\interac{x_1}{p_1}{x_2}{p_2}}{\aconfig}$ is disjoint
    from the interpretation $\semlang{\interof{\alpha}}{\aconfig}$ of
    any alphabet symbol $\interof{\alpha} \in \supp{\are}$, hence the
    interaction $(\store(x_1), p_1, \store(x_2), p_2)$ does not occur
    in $\semlang{\are}{\aconfig}$. By the inductive hypothesis, we
    have $\models \havoctriple{\aenv}{\phi}{\are}{\psi}$, which leads
    to the required $\models \havoctriple{\aenv}{\phi *
      \interac{x_1}{p_1}{x_2}{p_2}}{\are}{\psi *
      \interac{x_1}{p_1}{x_2}{p_2}}$.
  \item (\ie) Let $\aconfig = \config \in \sem{\phi}{}$ be a
    configuration and $\omega \isdef \interof{\alpha_1} \cdot \ldots
    \cdot \interof{\alpha_k}$ be a finite concatenation of alphabet
    symbols from $\supp{\are}$. If $\alpha_i =
    \interac{x_1}{p_1}{x_2}{p_2}$, for some $i \in \interv{1}{k}$,
    then we have $\semlang{\interof{\alpha_i}}{\aconfig} = \emptyset$,
    because of the side condition
    $\excluded{\phi}{\interac{x_1}{p_1}{x_2}{p_2}}$. Then
    $\wopen{w}(\aconfig) = \emptyset \subseteq \sem{\psi}{}$, for each
    $w \in \semlang{\omega}{\aconfig}$. Otherwise, if
    $\interac{x_1}{p_1}{x_2}{p_2}$ does not occur on $\omega$, then
    $\wopen{w}(\aconfig) \subseteq \sem{\psi}{}$, for each $w \in
    \semlang{\are}{\aconfig} \cap \semlang{\omega}{\aconfig}$, by the
    inductive hypothesis.
    %
    %
  \item (\parr) Let $\aconfig = \config \in \sem{\phi_1 * \phi_2}{}$
    be a configuration. Then there exist configurations $\aconfig_i =
    (\comps_i, \interacs_i, \statemap_i, \store) \in \sem{\phi_i}{}$,
    for $i = 1,2$, such that $\aconfig = \aconfig_1 \comp
    \aconfig_2$. Let $\aconfig'_i$ be configurations such that
    $\aconfig'_i \isubstreq \aconfig_{3-i}$ and $\aconfig'_i \models
    \front{\phi_i}{\phi_{3-i}}$, for $i = 1,2$. Because
    $\front{\phi_i}{\phi_{3-i}}$ is a separated conjunction of
    interaction atoms, each of which is precise on $\configset$, it
    follows that $\front{\phi_i}{\phi_{3-i}}$ is precise on
    $\configset$, thus $\aconfig'_i$ are unique, for $i = 1,2$. Let
    $\aconfig''_i \isdef \aconfig_i \comp \aconfig'_{3-i}$, for $i =
    1,2$.  Moreover, since $\aenv_i = \interof{\phi_i *
      \front{\phi_i}{\phi_{3-i}}}$, the only interactions in
    $\semlang{\aenv_i}{\aconfig}$ are the ones in
    $\semlang{\aenv_i}{\aconfig''_i}$, hence
    $\semlang{\aenv_i}{\aconfig} = \semlang{\aenv_i}{\aconfig''_i}$,
    for $i = 1,2$. Let $w \in \semlang{\are_1 \parcomp_{\aenv_1,
        \aenv_2} \are_2}{\aconfig}$ be a word. Then
    $\proj{w}{\semlang{\aenv_i}{\aconfig}} \in
    \semlang{\are_i}{\aconfig}$, for $i = 1,2$. Because
    $\semlang{\aenv_i}{\aconfig} = \semlang{\aenv_i}{\aconfig''_i}$,
    we obtain $\proj{w}{\semlang{\aenv_i}{\aconfig}} =
    \proj{w}{\semlang{\aenv_i}{\aconfig''_i}} \in
    \semlang{\are_i}{\aconfig''_i}$, for $i = 1,2$. Since, moreover,
    $\aconfig''_i \models \phi_i * \front{\phi_i}{\phi_{3-i}}$, by the
    inductive hypothesis we obtain that
    $\wopen{\proj{w}{\semlang{\aenv_i}{\aconfig''_i}}}(\aconfig''_i)
    \subseteq \sem{\psi_i * \front{\phi_i}{\phi_{3-i}}}{}$, for $i =
    1,2$. We partition $w = w_1 w'_1 w''_1 w'''_1 \ldots w_k w'_k
    w''_k w'''_k$, for some $k \geq 1$, in three types of (possibly
    empty) blocks, such that, for all $j \in \interv{1}{k}$, we
    have: \begin{itemize}
    \item $w_j \in \big(\semlang{\aenv_1}{\aconfig''_1} \setminus
      \semlang{\aenv_2}{\aconfig''_2}\big)^*$,
    \item $w'_j, w'''_j \in \big(\semlang{\aenv_1}{\aconfig''_1} \cap
      \semlang{\aenv_2}{\aconfig''_2}\big)^*$, and
    \item $w''_j \in \big(\semlang{\aenv_2}{\aconfig''_2} \setminus
      \semlang{\aenv_1}{\aconfig''_1}\big)^*$.
    \end{itemize}
    If $\wopen{w}(\aconfig) = \emptyset$, there is nothing to
    prove. Otherwise, let $\aconfig' =
    (\comps,\interacs,\statemap',\store) \in \wopen{w}(\aconfig)$ and
    $\statemap_1 \isdef \statemap, \statemap'_1, \statemap''_1,
    \statemap'''_1, \ldots, \statemap_k, \statemap'_k, \statemap''_k,
    \statemap'''_k \isdef \statemap'$ be a sequence of state maps such
    that, for all $j \in \interv{1}{k}$, we have
    $(\comps,\interacs,\statemap'_j,\store) \in
    \wopen{w_j}(\comps,\interacs,\statemap_j,\store)$,
    $(\comps,\interacs,\statemap''_j,\store) \in
    \wopen{w'_j}(\comps,\interacs,\statemap'_j,\store)$,
    $(\comps,\interacs,\statemap'''_j,\store) \in
    \wopen{w''_j}(\comps,\interacs,\statemap''_j,\store)$, and
    $(\comps,\interacs,\statemap_{j+1},\store) \in
    \wopen{w'''_j}(\comps,\interacs,\statemap'''_j,\store)$, if $j <
    k$, in particular. Let $\statemap_{i,j}$, $\statemap'_{i,j}$,
    $\statemap''_{i,j}$ and $\statemap'''_{i,j}$ be the restrictions
    of $\statemap_j$, $\statemap'_j$, $\statemap''_j$ and
    $\statemap'''_j$ to $\comps_i$, for $i = 1,2$, respectively. We
    prove the following: \begin{enumerate}
    \item\label{it1:parcomp-soundness} $\statemap_{2,j} =
      \statemap'_{2,j}$, for all $j \in \interv{1}{k}$, and
    \item\label{it2:parcomp-soundness} $\statemap''_{1,j} =
      \statemap'''_{1,j}$, for all $j \in \interv{1}{k}$.
    \end{enumerate}
    We prove the first point, the argument for the second point being
    symmetric. It is sufficient to prove that the state of the
    components with indices in $\comps_2$, which are the only ones
    $\statemap_{2,j}$ and $\statemap'_{2,j}$ account for, is not
    changed by $w_j$, for all $j \in \interv{1}{k}$. Since $w_j \in
    \big(\semlang{\aenv_1}{\aconfig''_1} \setminus
    \semlang{\aenv_2}{\aconfig''_2}\big)^*$, the only interactions on
    $w_j$ are the ones from $\aconfig''_1 = \aconfig_1 \comp
    \aconfig'_1$ that do not occur in $\aconfig''_2 = \aconfig_2 \comp
    \aconfig'_2$, where $\aconfig'_1 \isubstreq \aconfig_2$ and
    $\aconfig'_2 \isubstreq \aconfig_1$. It follows that the
    interactions occurring on $w_j$ are the ones from $\aconfig_1$
    that do not occur in $\aconfig''_2$. Since $\aconfig''_2 \models
    \Asterisk_{\alpha \in \iatoms{\phi_1} \setminus
      (\iatoms{\overline{\phi}_1} \cup \iatoms{\phi_2})} ~\alpha$ [$=
      \front{\phi_2}{\phi_1}$], the interactions occurring on $w_j$
    must occur in some model $\overline{\aconfig}$ of a tight
    subformula of $\phi_1$. Hence, the interactions from
    $\overline{\aconfig}$ can only change the state of a component
    from $\overline{\aconfig}$. Since $\overline{\aconfig} \substreq
    \aconfig_1$ and $\aconfig_1 \comp \aconfig_2$ is defined, there
    can be no component indexed by some element of $\mathfrak{C}_2$,
    whose state is changed by an interaction from
    $\overline{\aconfig}$, thus $\statemap_{2,j} = \statemap'_{2,j}$
    (\ref{it1:parcomp-soundness}). Consequently, we obtain two
    sequences of words and finite state maps: \begin{itemize}
    \item $w_1, w'_1, w'''_1, \ldots, w_k, w'_k, w'''_k$ and
      $\statemap_{1,1}, \statemap'_{1,1}, \statemap''_{1,1}, \ldots,
      \statemap_{1,k}, \statemap'_{1,k}, \statemap''_{1,k}$,
      where: \begin{itemize}
      \item $(\comps_1,\interacs_1,\statemap'_{1,j},\store) \in
        \wopen{w_1}(\comps_1,\interacs_1,\statemap_{1,j},\store)$,
      \item $(\comps_1,\interacs_1,\statemap''_{1,j},\store) \in
        \wopen{w'_1}(\comps_1,\interacs_1,\statemap'_{1,j},\store)$ and
      \item $(\comps_1,\interacs_1,\statemap_{1,j+1},\store) \in
        \wopen{w'''_1}(\comps_1,\interacs_1,\statemap''_{1,j},\store)$,
      \end{itemize}
      for all $j \in \interv{1}{k-1}$, and
    \item $w'_1, w''_1, w'''_1, \ldots, w'_k, w''_k, w'''_k$ and
      $\statemap'_{2,1}, \statemap''_{2,1}, \statemap'''_{2,1},
      \ldots, \statemap'_{2,k}, \statemap''_{2,k},
      \statemap'''_{2,k}$, where: \begin{itemize}
      \item $(\comps_2,\interacs_2,\statemap''_{2,j},\store) \in
        \wopen{w'_1}(\comps_2,\interacs_2,\statemap'_{2,j},\store)$,
      \item $(\comps_2,\interacs_2,\statemap'''_{2,j},\store) \in
        \wopen{w''_1}(\comps_2,\interacs_2,\statemap''_{2,j},\store)$
        and
      \item $(\comps_2,\interacs_2,\statemap'_{2,j+1},\store) \in
        \wopen{w'''_1}(\comps_2,\interacs_2,\statemap'''_{2,j},\store)$,
      \end{itemize}
      for all $j \in \interv{1}{k-1}$.
    \end{itemize}
    Note that $\proj{w}{\semlang{\aenv_1}{\aconfig''_1}} = w_1 w'_1
    w'''_1 \ldots w_k w'_k w'''_k$ and
    $\proj{w}{\semlang{\aenv_2}{\aconfig''_2}} = w'_1 w''_1 w'''_1
    \ldots w'_k w''_k w'''_k$. By the inductive hypothesis, we have
    $\wopen{\proj{w}{\semlang{\aenv_i}{\aconfig''_i}}}(\aconfig''_i)
    \subseteq \sem{\psi_i * \front{\phi_i}{\phi_{3-i}}}{}$, hence
    $(\comps_i,\interacs_i,\statemap'''_{i,k},\store) \in
    \wopen{\proj{w}{\semlang{\aenv_i}{\aconfig''_i}}}(\aconfig''_i)$,
    for $i = 1,2$. Moreover, $\comps_1 \uplus \comps_2 = \comps$,
    $\interacs_1 \uplus \interacs_2 = \interacs$ and the state maps
    $\statemap'$ and $\statemap'''_{1,k} \cup \statemap'''_{2,k}$
    agree on all $c \in \comps$, hence $\aconfig' =
    (\comps,\interacs,\statemap',\store) \in \sem{\psi_1 * \psi_2}{}$.
  \end{itemize}
  Proving ($\star$) for the rest of the rules is a standard check,
  left to the reader. \qed}

\section{Proofs from Section \ref{sec:trees}}

\PropTreePrecise* \proof{ Let $\varphi_i \isdef \phi_i *
  \Asterisk_{j=1}^{k_i} \treestar(x_{i,j}) * \Asterisk_{j = k_i +
    1}^{\ell_i} \treeseg(x_{i,j},y_{i,j})$ be symbolic configurations,
  where $\phi_i$ is a predicate-free symbolic configuration and
  $\treestar(x)$ is either $\treeidle(x)$, $\treenotidle(x)$ or
  $\tree(x)$, for all $j \in \interv{1}{\ell_i}$ and $i = 1,2$. We
  prove that $\varphi_1$ is precise on $\sem{\varphi_2}{}$.  Let
  $\aconfig = \config \in \sem{\varphi_2}{}$ be a configuration and
  suppose that there exist configurations $\aconfig' =
  (\comps',\interacs',\statemap,\store), \aconfig'' =
  (\comps'',\interacs'',\statemap,\store)$, such that $\aconfig'
  \isubstreq \aconfig$, $\aconfig'' \isubstreq \aconfig$, $\aconfig'
  \models \varphi_1$ and $\aconfig'' \models \varphi_1$. Then there
  exist configurations $\aconfig'_0 \isdef
  (\comps'_0,\interacs'_0,\statemap'_0,\store), \ldots,
  \aconfig'_{\ell_1} \isdef
  (\comps'_{\ell_1},\interacs'_{\ell_1},\statemap'_{\ell_1},\store)$
  and $\aconfig''_0 \isdef
  (\comps''_0,\interacs''_0,\statemap''_0,\store), \ldots,
  \aconfig''_{\ell_1} \isdef
  (\comps''_{\ell_1},\interacs''_{\ell_1},\statemap''_{\ell_1},\store)$,
  such that: \begin{itemize}
  \item $\aconfig' = \Asterisk_{j=0}^{\ell_1} \aconfig'_j$ and
    $\aconfig'' = \Asterisk_{j=0}^{\ell_1} \aconfig''_j$,
  \item $\aconfig'_0 \models \phi_1$ and $\aconfig''_0 \models
    \phi_1$,
  \item $\aconfig'_j \models \treestar(x_{1,j})$ and
    $\aconfig''_j \models \treestar(x_{2,j})$, for all $j \in
    \interv{1}{k_1}$, and 
  \item $\aconfig'_j \models \treeseg(x_{1,j},y_{1,j})$ and
    $\aconfig''_j \models \treeseg(x_{2,j},y_{2,j})$, for all $j
    \in \interv{k_1+1}{\ell_1}$. 
  \end{itemize}
  Since $\phi_1$ is a predicate-free symbolic configuration, we have
  $\comps'_0 = \comps''_0$, $\interacs'_0 = \interacs''_0$ and
  $\statemap'_0 = \statemap''_0$, thus $\aconfig'_0 =
  \aconfig''_0$. Next, for each $j \in \interv{1}{k_1}$, we have
  $\comps'_j = \comps''_j$, because these sets of components
  correspond to the vertices of the same tree, whose root is
  $\store(x_{1,j})$ and whose frontier contains only indices $c \in
  \comps'_j \cap \comps''_j$, such that $\statemap(c) \in
  \set{\stateleafidle,\stateleafbusy}$. Finally, for each $j \in
  \interv{k_1+1}{\ell_1}$, we have $\interacs'_j = \interacs''_j$,
  because these sets of interactions correspond to the edges of the
  same tree, whose root is $\store(x_{1,j})$ and whose frontier
  contains $\store(y_{1,j})$ together with indices $c \in \comps'_j
  \cap \comps''_j$, such that $\statemap(c) \in
  \set{\stateleafidle,\stateleafbusy}$. We obtain, consequently, that
  $\aconfig'_j = \aconfig''_j$, for all $j \in \interv{1}{\ell_1}$,
  leading to $\aconfig' = \aconfig''$. \qed}

\section{Havoc Invariance Proofs from Section \ref{sec:trees}}
\label{app:trees}

In order to shorten the following proofs, we introduce the rule
(\iidisr) that allows us to remove a disabled interaction atom
$\alpha$ from the pre- and postcondition, the environment and the
language if certain conditions hold.

\begin{lemma}
	Using the notation in \S\ref{sec:havoc}, the following rule is sound:
	{ \scriptsize
		\begin{prooftree}
	    \AxiomC{$\havoctriple{\aenv \setminus \set{\interof{\alpha}}}{\phi}{\are}{\psi}$}
	    \LeftLabel{(\iidisr)}
	    \RightLabel{$\begin{array}{l}
	        \scriptstyle{\alpha = \interac{x_1}{p_1}{x_2}{p_2}} \\[-.5mm]
	        \scriptstyle{\interof{\alpha} \in \aenv \setminus \supp{\are}} \\[-.5mm]
	        \scriptstyle{\disabled{\phi}{\alpha}.}
	      \end{array}$}
	    \UnaryInfC{$\havoctriple{\aenv}{\phi * \alpha}{\interof{\alpha} \cup \are}{\psi * \alpha}$}
		\end{prooftree}
	}
\end{lemma}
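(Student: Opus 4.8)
The plan is to reduce this new rule to two ingredients that are already available, treating the language $\interof{\alpha} \cup \are$ one summand at a time. Recall that validity $\models \havoctriple{\aenv}{\phi * \alpha}{\interof{\alpha} \cup \are}{\psi * \alpha}$ unfolds (Def. \ref{def:havoc-valid}) to the requirement that, for every model $\aconfig = \config \in \sem{\phi * \alpha}{}$ and every word $w \in \semlang{\interof{\alpha} \cup \are}{\aconfig}$, we have $\wopen{w}(\aconfig) \subseteq \sem{\psi * \alpha}{}$. Assuming $\models \havoctriple{\aenv \setminus \set{\interof{\alpha}}}{\phi}{\are}{\psi}$, I would fix such an $\aconfig$, write $\semlang{\interof{\alpha} \cup \are}{\aconfig} = \semlang{\interof{\alpha}}{\aconfig} \cup \semlang{\are}{\aconfig}$, and split on which summand $w$ belongs to. Since $(\iidisr)$ is a new rule added after the havoc soundness theorem, I am free to invoke the already-established soundness of the existing rules in that proof.

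For the words $w \in \semlang{\are}{\aconfig}$, the side condition $\interof{\alpha} \in \aenv \setminus \supp{\are}$ is exactly the hypothesis of rule (\supr). Its soundness, applied to the premise, gives $\models \havoctriple{\aenv}{\phi * \alpha}{\are}{\psi * \alpha}$, and hence $\wopen{w}(\aconfig) \subseteq \sem{\psi * \alpha}{}$ for all such $w$; I would simply cite this rather than re-derive it. (The content being reused is that, by Lemma \ref{lemma:distinctive} and preciseness, the interaction $(\store(x_1), p_1, \store(x_2), p_2)$ never occurs along a word of $\semlang{\are}{\aconfig}$, so firing $w$ leaves the $\alpha$-edge intact and acts as on the $\phi$-part alone.) For the remaining word, the single letter $w = \interof{\alpha}$, its interpretation in $\aconfig$ is the singleton $\set{(\store(x_1), p_1, \store(x_2), p_2)}$. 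Decomposing $\aconfig = \aconfig_0 \comp \aconfig_1$ with $\aconfig_0 \models \phi$ and $\aconfig_1 \models \alpha$, the states of the two participating components $\store(x_1), \store(x_2)$ in $\aconfig$ are inherited from $\aconfig_0$, because the interaction atom $\alpha$ contributes only an edge, no component or state; since $\disabled{\phi}{\alpha}$ constrains precisely these states (Def. \ref{def:disabled-excluded}), the interaction remains disabled in $\aconfig$, giving $\wopen{\interof{\alpha}}(\aconfig) = \emptyset \subseteq \sem{\psi * \alpha}{}$. Combining the two cases over the union yields the conclusion.

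The step I expect to be the main obstacle is the transfer of disabledness from $\phi$ to the extended precondition, i.e. from $\aconfig_0$ to $\aconfig = \aconfig_0 \comp \aconfig_1$: one must argue from the definition of $\dagger$ (Def. \ref{def:disabled-excluded}) that composing in the single-edge configuration $\aconfig_1$ cannot enable $\alpha$, which amounts to observing that $\dagger$ depends only on the local component states and not on the presence of the edge itself. The decisive structural point that makes this suffice is that the rule adjoins $\interof{\alpha}$ to the language by \emph{union} rather than by concatenation or Kleene iteration: consequently $\interof{\alpha}$ can only ever appear as an isolated length-one word fired directly from a model of $\phi * \alpha$, so disabledness at the initial configuration is all that is required and no reasoning about whether $\alpha$ becomes enabled after intermediate firings is needed.
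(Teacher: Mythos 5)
Your argument is correct and is in substance the same as the paper's: the paper proves the lemma by \emph{deriving} (\iidisr) from the existing rules, applying ($\cup$) to split the language $\interof{\alpha} \cup \are$, discharging the $\interof{\alpha}$ branch with (\disr) followed by (\conseq) (exactly your disabledness case, with the same implicit transfer of $\disabled{\phi}{\alpha}$ to the precondition $\phi * \alpha$) and the $\are$ branch by applying (\ii) to the premise (exactly your appeal to that rule's soundness). Your proof is the semantic shadow of that three-rule derivation, so nothing further is needed.
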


\begin{proof}
	We assume that $\phi$ and $\psi$ are two symbolic
        configurations, $\aenv$ is an environment and $\alpha =
        \interac{x_1}{p_1}{x_2}{p_2}$ an interaction
        atom. Furthermore, $\interof {\alpha} \in \aenv\setminus
        \supp{\are}$ and $\disabled{\phi}{\alpha}$. Then we can apply
        the rule ($\cup$) first and the rules (\conseq), (\inter) and
        (\ii) on the subtrees and obtain: { \tiny
		\begin{prooftree}
			\AxiomC{}
			\LeftLabel{(\disr)}
		    \havocline{\aenv}{\phi * \alpha}{\interof{\alpha}}{\predfalse}
		    \LeftLabel{(\conseq)}
		    \havocline{\aenv}{\phi * \alpha}{\interof{\alpha}}{\psi * \alpha}

		    \AxiomC{$\havoctriple{\aenv \setminus \set{\interof{\alpha}}}{\phi}{\are}{\psi}$}
		    \LeftLabel{(\ii)}
		    \havocline{\aenv}{\phi * \alpha}{\are}{\psi * \alpha}
		    \LeftLabel{($\cup$)}
		    \BinaryInfC{$\havoctriple{\aenv}{\phi * \alpha}{\interof{\alpha} \cup \are}
		    	{\psi * \alpha}$.}
		\end{prooftree}
	}
	\noindent Hence the rule can by derived from the rules in Fig.~\ref{fig:havoc-rules}.
\end{proof}

\subsection{Havoc Invariance of the Predicate Atom $\tree(x)$}

The invariance of the predicate $\tree(x)$ is proven via the rules in
Fig.~\ref{fig:havoc-rules}. The proof is divided into subtrees labeled by
letters. Backlinks are indicated by numbers and in each cycle in the proof tree
the rule (\lu) is applied at least once.

{ \tiny
	\begin{prooftree}
		\AxiomC{}
		\LeftLabel{(\epsilonr)}
		\havoctwolines{\emptyset}
			{\compin{x}{\stateleafidle}}
			{\epsilon}
			{\tree(x)}

		\AxiomC{}
		\LeftLabel{(\epsilonr)}
		\havoctwolines{\emptyset}
			{\compin{x}{\stateleafbusy}}
			{\epsilon}
			{\tree(x)}

		\AxiomC{\textbf{(A)}}
		\havocfourlines{\set{\interof{\interac{y}{\lrecv}{x}{\send}}, \interof{\interac{z}{\rrecv}{x}{\send}},
				\interoftree{y}, \interoftree{z}}}
			{\company{x} \ast \interac{y}{\lrecv}{x}{\send} \ast \interac{z}{\rrecv}{x}{\send}
				\ast \tree(y) \ast \tree(z)}
			{\interof{\interac{y}{\lrecv}{x}{\send}} \cup \interof{\interac{z}{\rrecv}{x}{\send}} \cup
				\interoftree{y} \cup \interoftree{z}}
			{\tree(x)}
		\LeftLabel{(\lu)}
		\trihavocline{\text{\textbf{(1)} }\set{\interoftree{x}}}
			{\tree(x)}
			{\interoftree{x}}
			{\tree(x)}
		\LeftLabel{($\ast$)}
		\havocline{\set{\interoftree{x}}}
			{\tree(x)}
			{\interoftree{x}^*}
			{\tree(x)}
	\end{prooftree}
}

\hspace{3mm}

{ \tiny
	\begin{prooftree}
    \def\defaultHypSeparation{\hskip .05in}
		\AxiomC{}
		\havocfourlines{\textbf{(B) }\set{\interof{\interac{y}{\lrecv}{x}{\send}}, \interof{\interac{z}{\rrecv}{x}{\send}},
				\interoftree{y}, \interoftree{z}}}
			{\company{x} \ast \interac{y}{\lrecv}{x}{\send} \ast \interac{z}{\rrecv}{x}{\send}
				\ast \tree(y) \ast \tree(z)}
			{\interof{\interac{y}{\lrecv}{x}{\send}}}
			{\company{x} \ast \interac{y}{\lrecv}{x}{\send} \ast \interac{z}{\rrecv}{x}{\send}
				\ast \tree(y) \ast \tree(z)}

		\AxiomC{similar to \textbf{(B)}}

		\AxiomC{}
		\havocfourlines{\textbf{(C) }\set{\interof{\interac{y}{\lrecv}{x}{\send}}, \interof{\interac{z}{\rrecv}{x}{\send}},
				\interoftree{y}, \interoftree{z}}}
			{\company{x} \ast \interac{y}{\lrecv}{x}{\send} \ast \interac{z}{\rrecv}{x}{\send}
				\ast \tree(y) \ast \tree(z)}
			{\interoftree{y}}
			{\company{x} \ast \interac{y}{\lrecv}{x}{\send} \ast \interac{z}{\rrecv}{x}{\send}
				\ast \tree(y) \ast \tree(z)}

		\AxiomC{similar to \textbf{(C)}}

		\LeftLabel{($\cup$)}
		\quahavoctwolines{\set{\interof{\interac{y}{\lrecv}{x}{\send}}, \interof{\interac{z}{\rrecv}{x}{\send}},
				\interoftree{y}, \interoftree{z}}}
			{\company{x} \ast \interac{y}{\lrecv}{x}{\send} \ast \interac{z}{\rrecv}{x}{\send}
				\ast \tree(y) \ast \tree(z)}
			{\interof{\interac{y}{\lrecv}{x}{\send}} \cup \interof{\interac{z}{\rrecv}{x}{\send}} \cup
				\interoftree{y} \cup \interoftree{z}}
			{\company{x} \ast \interac{y}{\lrecv}{x}{\send} \ast \interac{z}{\rrecv}{x}{\send}
				\ast \tree(y) \ast \tree(z)}
		\LeftLabel{(\conseq)}
		\havoctwolines{\textbf{(A) }\set{\interof{\interac{y}{\lrecv}{x}{\send}}, \interof{\interac{z}{\rrecv}{x}{\send}},
				\interoftree{y}, \interoftree{z}}}
			{\company{x} \ast \interac{y}{\lrecv}{x}{\send} \ast \interac{z}{\rrecv}{x}{\send}
				\ast \tree(y) \ast \tree(z)}
			{\interof{\interac{y}{\lrecv}{x}{\send}} \cup \interof{\interac{z}{\rrecv}{x}{\send}} \cup
				\interoftree{y} \cup \interoftree{z}}
			{\tree(x)}
	\end{prooftree}
}

\hspace{3mm}

{ \tiny
	\begin{prooftree}
		\AxiomC{\textbf{(D)}}
		\AxiomC{\textbf{(E)}}
		\AxiomC{\textbf{(F)}}
		\LeftLabel{(\lu)}
		\trihavoctwolines{\set{\interof{\interac{y}{\lrecv}{x}{\send}},
				\interoftree{y}}}
			{\company{x} \ast \interac{y}{\lrecv}{x}{\send} \ast \tree(y)}
			{\interof{\interac{y}{\lrecv}{x}{\send}}}
			{\company{x} \ast \interac{y}{\lrecv}{x}{\send} \ast \tree(y)}

		\AxiomC{}
		\LeftLabel{(\epsilonr)}
		\havoctwolines{\set{\interoftree{z}}}
			{\tree(z)}
			{\epsilon}
			{\tree(z)}
		\LeftLabel{(\parr)}
		\binhavocline{\set{\interof{\interac{y}{\lrecv}{x}{\send}},
				\interoftree{y}, \interoftree{z}}}
			{\company{x} \ast \interac{y}{\lrecv}{x}{\send} \ast \tree(y) \ast \tree(z)}
			{\interof{\interac{y}{\lrecv}{x}{\send}}}
			{\company{x} \ast \interac{y}{\lrecv}{x}{\send} \ast \tree(y) \ast \tree(z)}
		\LeftLabel{(\ii)}
		\havoctwolines{\textbf{(B) }\set{\interof{\interac{y}{\lrecv}{x}{\send}}, \interof{\interac{z}{\rrecv}{x}{\send}},
				\interoftree{y}, \interoftree{z}}}
			{\company{x} \ast \interac{y}{\lrecv}{x}{\send} \ast \interac{z}{\rrecv}{x}{\send}
				\ast \tree(y) \ast \tree(z)}
			{\interof{\interac{y}{\lrecv}{x}{\send}}}
			{\company{x} \ast \interac{y}{\lrecv}{x}{\send} \ast \interac{z}{\rrecv}{x}{\send}
				\ast \tree(y) \ast \tree(z)}
	\end{prooftree}
}

\hspace{3mm}

{ \tiny
	\begin{prooftree}
		\AxiomC{backlink to \textbf{(1)}}
		\havocline{\set{\interoftree{y}}}
			{\tree(y)}
			{\interoftree{y}}
			{\tree(y)}

		\AxiomC{}
		\LeftLabel{(\epsilonr)}
		\havocline{\emptyset}
			{\company{x} \ast \tree(z)}
			{\epsilon}
			{\company{x} \ast \tree(z)}
		\LeftLabel{(\parr)}
		\binhavocline{\set{\interoftree{y}}}
			{\company{x} \ast \tree(y) \ast \tree(z)}
			{\interoftree{y}}
			{\company{x} \ast \tree(y) \ast \tree(z)}
		\LeftLabel{(\ii)}
		\havoctwolines{\textbf{(C) }\set{\interof{\interac{y}{\lrecv}{x}{\send}}, \interof{\interac{z}{\rrecv}{x}{\send}},
				\interoftree{y}, \interoftree{z}}}
			{\company{x} \ast \interac{y}{\lrecv}{x}{\send} \ast \interac{z}{\rrecv}{x}{\send}
				\ast \tree(y) \ast \tree(z)}
			{\interoftree{y}}
			{\company{x} \ast \interac{y}{\lrecv}{x}{\send} \ast \interac{z}{\rrecv}{x}{\send}
				\ast \tree(y) \ast \tree(z)}
	\end{prooftree}
}

\hspace{3mm}

{ \tiny
	\begin{prooftree}
		\AxiomC{}
		\LeftLabel{(\disr)}
		\havocline{\set{\interof{\interac{y}{\lrecv}{x}{\send}}}}
			{\company{x} \ast \interac{y}{\lrecv}{x}{\send} \ast \compin{y}{\stateleafidle}}
			{\interof{\interac{y}{\lrecv}{x}{\send}}}
			{\predfalse}
		\LeftLabel{(\conseq)}
		\havocline{\textbf{(D) }\set{\interof{\interac{y}{\lrecv}{x}{\send}}}}
			{\company{x} \ast \interac{y}{\lrecv}{x}{\send} \ast \compin{y}{\stateleafidle}}
			{\interof{\interac{y}{\lrecv}{x}{\send}}}
			{\company{x} \ast \interac{y}{\lrecv}{x}{\send} \ast \tree(y)}
	\end{prooftree}
}

\hspace{3mm}

{ \tiny
	\begin{prooftree}
		\AxiomC{}
		\LeftLabel{(\inter)}
		\havoctwolines{\set{\interof{\interac{y}{\lrecv}{x}{\send}}}}
			{\compin{x}{\stateidle} \ast \interac{y}{\lrecv}{x}{\send} \ast \compin{y}{\stateleafbusy}}
			{\interof{\interac{y}{\lrecv}{x}{\send}}}
			{\compin{x}{\stateleft} \ast \interac{y}{\lrecv}{x}{\send} \ast \compin{y}{\stateleafidle}}
		\LeftLabel{(\conseq)}
		\havoctwolines{\set{\interof{\interac{y}{\lrecv}{x}{\send}}}}
			{\compin{x}{\stateidle} \ast \interac{y}{\lrecv}{x}{\send} \ast \compin{y}{\stateleafbusy}}
			{\interof{\interac{y}{\lrecv}{x}{\send}}}
			{\company{x} \ast \interac{y}{\lrecv}{x}{\send} \ast \tree(y)}

		\AxiomC{}
		\LeftLabel{(\disr)}
		\havoctwolines{\set{\interof{\interac{y}{\lrecv}{x}{\send}}}}
			{\compin{x}{q} \ast \interac{y}{\lrecv}{x}{\send} \ast \compin{y}{\stateleafbusy}}
			{\interof{\interac{y}{\lrecv}{x}{\send}}}
			{\predfalse}
		\LeftLabel{(\conseq)}
		\RightLabel{for $q\neq \stateidle$}
		\havoctwolines{\set{\interof{\interac{y}{\lrecv}{x}{\send}}}}
			{\compin{x}{q} \ast \interac{y}{\lrecv}{x}{\send} \ast \compin{y}{\stateleafbusy}}
			{\interof{\interac{y}{\lrecv}{x}{\send}}}
			{\company{x} \ast \interac{y}{\lrecv}{x}{\send} \ast \tree(y)}
		\LeftLabel{($\vee$)}
		\binhavocline{\textbf{(E) }\set{\interof{\interac{y}{\lrecv}{x}{\send}}}}
			{\company{x} \ast \interac{y}{\lrecv}{x}{\send} \ast \compin{y}{\stateleafbusy}}
			{\interof{\interac{y}{\lrecv}{x}{\send}}}
			{\company{x} \ast \interac{y}{\lrecv}{x}{\send} \ast \tree(y)}
	\end{prooftree}
}

\hspace{3mm}

{ \tiny
	\begin{prooftree}
		\AxiomC{backlink to \textbf{(D)}}
		\AxiomC{backlink to \textbf{(E)}}

		\LeftLabel{($\vee$)}
		\binhavoctwolines{\set{\interof{\interac{y}{\lrecv}{x}{\send}}}}
			{\company{x} \ast \interac{y}{\lrecv}{x}{\send} \ast \company{y}}
			{\interof{\interac{y}{\lrecv}{x}{\send}}}
			{\company{x} \ast \interac{y}{\lrecv}{x}{\send} \ast \company{y}}

		\AxiomC{}
		\LeftLabel{(\epsilonr)}
		\havoctwolines{\set{\interof{\tree(v)}, \interof{\tree(w)}}}
			{\tree(v) \ast \tree(w)}
			{\epsilon}
			{\tree(v) \ast \tree(w)}

		\LeftLabel{(\parr)}
		\binhavoctwolines{\set{\interof{\interac{y}{\lrecv}{x}{\send}}, \interof{\tree(v)},
				\interof{\tree(w)}}}
			{\company{x} \ast \interac{y}{\lrecv}{x}{\send} \ast \company{y}
				\ast \tree(v) \ast \tree(w)}
			{\interof{\interac{y}{\lrecv}{x}{\send}}}
			{\company{x} \ast \interac{y}{\lrecv}{x}{\send} \ast \company{y} \ast \interac{v}{\lrecv}{y}{\send}
				\ast \interac{w}{\rrecv}{y}{\send} \ast \tree(v) \ast \tree(w)}
		\LeftLabel{(\ii)}
		\havoctwolines{\set{\interof{\interac{y}{\lrecv}{x}{\send}}, \interof{\interac{v}{\lrecv}{y}{\send}},
			\interof{\interac{w}{\rrecv}{y}{\send}}, \interof{\tree(v)}, \interof{\tree(w)}}}
			{\company{x} \ast \interac{y}{\lrecv}{x}{\send} \ast \company{y} \ast \interac{v}{\lrecv}{y}{\send}
				\ast \interac{w}{\rrecv}{y}{\send} \ast \tree(v) \ast \tree(w)}
			{\interof{\interac{y}{\lrecv}{x}{\send}}}
			{\company{x} \ast \interac{y}{\lrecv}{x}{\send} \ast \company{y} \ast \interac{v}{\lrecv}{y}{\send}
				\ast \interac{w}{\rrecv}{y}{\send} \ast \tree(v) \ast \tree(w)}
		\LeftLabel{(\conseq)}
		\havoctwolines{\textbf{(F) }\set{\interof{\interac{y}{\lrecv}{x}{\send}}, \interof{\interac{v}{\lrecv}{y}{\send}},
			\interof{\interac{w}{\rrecv}{y}{\send}}, \interof{\tree(v)}, \interof{\tree(w)}}}
			{\company{x} \ast \interac{y}{\lrecv}{x}{\send} \ast \company{y} \ast \interac{v}{\lrecv}{y}{\send}
				\ast \interac{w}{\rrecv}{y}{\send} \ast \tree(v) \ast \tree(w)}
			{\interof{\interac{y}{\lrecv}{x}{\send}}}
			{\company{x} \ast \interac{y}{\lrecv}{x}{\send} \ast \tree(y)}
	\end{prooftree}
}

\hspace{3mm}

\subsection{Havoc Invariance of the Predicate Atom $\treeidle(x)$}

The invariance of the predicate $\treeidle(x)$ is proven via the rules in
Fig.~\ref{fig:havoc-rules} and the proof is structured similar to the previous
invariance proof.

{ \tiny
	\begin{prooftree}
		\AxiomC{}
		\LeftLabel{(\epsilonr)}
		\havoctwolines{\emptyset}
			{\compin{x}{\stateleafidle}}
			{\epsilon}
			{\treeidle(x)}

		\AxiomC{\textbf{(A)}}
		\havoctwolines{\set{\interof{\interac{y}{\lrecv}{x}{\send}}, \interof{\interac{z}{\rrecv}{x}{\send}},
			\interof{\treeidle(y)}, \interof{\treeidle(z)}}}
			{\compin{x}{\stateidle} \ast \interac{y}{\lrecv}{x}{\send} \ast \interac{z}{\rrecv}{x}{\send}
				\ast \treeidle(y) \ast \treeidle(z)}
			{\interof{\interac{y}{\lrecv}{x}{\send}} \cup \interof{\interac{z}{\rrecv}{x}{\send}} \cup
				\interof{\treeidle(y)} \cup \interof{\treeidle(z)}}
			{\treeidle(x)}

		\LeftLabel{(\lu)}
		\binhavocline{\textbf{(2) }\set{\interof{\treeidle(x)}}}
			{\treeidle(x)}
			{\interof{\treeidle(x)}}
			{\treeidle(x)}
		\LeftLabel{($\ast$)}
		\havocline{\set{\interof{\treeidle(x)}}}
			{\treeidle(x)}
			{\interof{\treeidle(x)}^*}
			{\treeidle(x)}
	\end{prooftree}
}

\hspace{3mm}

{ \tiny
	\begin{prooftree}
		\AxiomC{}
		\havocfourlines{\textbf{(B) }\set{\interof{\interac{y}{\lrecv}{x}{\send}},
			\interof{\treeidle(y)}, \interof{\treeidle(z)}}}
			{\compin{x}{\stateidle} \ast \interac{y}{\lrecv}{x}{\send} \ast \treeidle(y) \ast \treeidle(z)}
			{\interof{\interac{y}{\lrecv}{x}{\send}}}
			{\compin{x}{\stateidle} \ast \interac{y}{\lrecv}{x}{\send} \ast \treeidle(y) \ast \treeidle(z)}

		\AxiomC{}
		\havocfourlines{\textbf{(C) }\set{\interof{\interac{y}{\lrecv}{x}{\send}},
			\interof{\treeidle(y)}, \interof{\treeidle(z)}}}
			{\compin{x}{\stateidle} \ast \interac{y}{\lrecv}{x}{\send} \ast \treeidle(y) \ast \treeidle(z)}
			{\interof{\treeidle(y)}}
			{\compin{x}{\stateidle} \ast \interac{y}{\lrecv}{x}{\send} \ast \treeidle(y) \ast \treeidle(z)}

		\AxiomC{similar to \textbf{(C)}}
		\LeftLabel{($\cup$)}
		\trihavoctwolines{\set{\interof{\interac{y}{\lrecv}{x}{\send}},
			\interof{\treeidle(y)}, \interof{\treeidle(z)}}}
			{\compin{x}{\stateidle} \ast \interac{y}{\lrecv}{x}{\send} \ast \treeidle(y) \ast \treeidle(z)}
			{\interof{\interac{y}{\lrecv}{x}{\send}} \cup
				\interof{\treeidle(y)} \cup \interof{\treeidle(z)}}
			{\compin{x}{\stateidle} \ast \interac{y}{\lrecv}{x}{\send} \ast \treeidle(y) \ast \treeidle(z)}
		\LeftLabel{(\iidisr)}
		\havoctwolines{\set{\interof{\interac{y}{\lrecv}{x}{\send}}, \interof{\interac{z}{\rrecv}{x}{\send}},
			\interof{\treeidle(y)}, \interof{\treeidle(z)}}}
			{\compin{x}{\stateidle} \ast \interac{y}{\lrecv}{x}{\send} \ast \interac{z}{\rrecv}{x}{\send}
				\ast \treeidle(y) \ast \treeidle(z)}
			{\interof{\interac{y}{\lrecv}{x}{\send}} \cup \interof{\interac{z}{\rrecv}{x}{\send}} \cup
				\interof{\treeidle(y)} \cup \interof{\treeidle(z)}}
			{\compin{x}{\stateidle} \ast \interac{y}{\lrecv}{x}{\send} \ast \interac{z}{\rrecv}{x}{\send}
				\ast \treeidle(y) \ast \treeidle(z)}
			\LeftLabel{(\conseq)}
		\havoctwolines{\textbf{(A) }\set{\interof{\interac{y}{\lrecv}{x}{\send}}, \interof{\interac{z}{\rrecv}{x}{\send}},
			\interof{\treeidle(y)}, \interof{\treeidle(z)}}}
			{\compin{x}{\stateidle} \ast \interac{y}{\lrecv}{x}{\send} \ast \interac{z}{\rrecv}{x}{\send}
				\ast \treeidle(y) \ast \treeidle(z)}
			{\interof{\interac{y}{\lrecv}{x}{\send}} \cup \interof{\interac{z}{\rrecv}{x}{\send}} \cup
				\interof{\treeidle(y)} \cup \interof{\treeidle(z)}}
			{\treeidle(x)}
	\end{prooftree}
}

\hspace{3mm}

{ \tiny
	\begin{prooftree}
		\AxiomC{\textbf{(D)\qquad (E)}}
		\LeftLabel{(\lu)}
		\havoctwolines{\set{\interof{\interac{y}{\lrecv}{x}{\send}},
			\interof{\treeidle(y)}}}
			{\compin{x}{\stateidle} \ast \interac{y}{\lrecv}{x}{\send} \ast \treeidle(y)}
			{\interof{\interac{y}{\lrecv}{x}{\send}}}
			{\compin{x}{\stateidle} \ast \interac{y}{\lrecv}{x}{\send} \ast \treeidle(y)}

		\AxiomC{}
		\LeftLabel{(\epsilonr)}
		\havocline{\set{\interof{\treeidle(z)}}}
			{\treeidle(z)}
			{\epsilon}
			{\treeidle(z)}
		\LeftLabel{(\parr)}
		\binhavoctwolines{\textbf{(B) }\set{\interof{\interac{y}{\lrecv}{x}{\send}},
			\interof{\treeidle(y)}, \interof{\treeidle(z)}}}
			{\compin{x}{\stateidle} \ast \interac{y}{\lrecv}{x}{\send} \ast \treeidle(y) \ast \treeidle(z)}
			{\interof{\interac{y}{\lrecv}{x}{\send}}}
			{\compin{x}{\stateidle} \ast \interac{y}{\lrecv}{x}{\send} \ast \treeidle(y) \ast \treeidle(z)}
	\end{prooftree}
}

\hspace{3mm}

{ \tiny
	\begin{prooftree}
		\AxiomC{}
		\LeftLabel{(\epsilonr)}
		\havocline{\set{\interof{\treeidle(z)}}}
			{\compin{x}{\stateidle} \ast \treeidle(z)}
			{\epsilon}
			{\compin{x}{\stateidle} \ast \treeidle(z)}

		\AxiomC{backlink to \textbf{(2)}}
		\havocline{\set{\interof{\treeidle(y)}}}
			{\treeidle(y)}
			{\interof{\treeidle(y)}}
			{\treeidle(y)}
		\LeftLabel{(\parr)}
		\binhavocline{\set{\interof{\treeidle(y)}, \interof{\treeidle(z)}}}
			{\compin{x}{\stateidle} \ast \treeidle(y) \ast \treeidle(z)}
			{\interof{\treeidle(y)}}
			{\compin{x}{\stateidle} \ast \treeidle(y) \ast \treeidle(z)}
		\LeftLabel{(\ii)}
		\havoctwolines{\textbf{(C) }\set{\interof{\interac{y}{\lrecv}{x}{\send}},
			\interof{\treeidle(y)}, \interof{\treeidle(z)}}}
			{\compin{x}{\stateidle} \ast \interac{y}{\lrecv}{x}{\send} \ast \treeidle(y) \ast \treeidle(z)}
			{\interof{\treeidle(y)}}
			{\compin{x}{\stateidle} \ast \interac{y}{\lrecv}{x}{\send} \ast \treeidle(y) \ast \treeidle(z)}
	\end{prooftree}
}

\hspace{3mm}

{ \tiny
	\begin{prooftree}
		\AxiomC{}
		\LeftLabel{(\disr)}
		\havocline{\set{\interof{\interac{y}{\lrecv}{x}{\send}}}}
			{\compin{x}{\stateidle} \ast \interac{y}{\lrecv}{x}{\send} \ast \compin{y}{\stateleafidle}}
			{\interof{\interac{y}{\lrecv}{x}{\send}}}
			{\predfalse}
		\LeftLabel{(\conseq)}
		\havocline{\textbf{(D) }\set{\interof{\interac{y}{\lrecv}{x}{\send}}}}
			{\compin{x}{\stateidle} \ast \interac{y}{\lrecv}{x}{\send} \ast \compin{y}{\stateleafidle}}
			{\interof{\interac{y}{\lrecv}{x}{\send}}}
			{\compin{x}{\stateidle} \ast \interac{y}{\lrecv}{x}{\send} \ast \treeidle(y)}
	\end{prooftree}
}

\hspace{3mm}

{ \tiny
	\begin{prooftree}
		\AxiomC{}
		\LeftLabel{(\disr)}
		\havoctwolines{\set{\interof{\interac{y}{\lrecv}{x}{\send}}}}
			{\compin{x}{\stateidle} \ast \interac{y}{\lrecv}{x}{\send} \ast \compin{y}{\stateidle}}
			{\interof{\interac{y}{\lrecv}{x}{\send}}}
			{\compin{x}{\stateidle} \ast \interac{y}{\lrecv}{x}{\send} \ast \compin{y}{\stateidle}}

		\AxiomC{}
		\LeftLabel{(\epsilonr)}
		\havoctwolines{\set{\interof{\treeidle(v)}, \interof{\treeidle(w)}}}
			{\treeidle(v) \ast \treeidle(w)}
			{\epsilon}
			{\treeidle(v) \ast \treeidle(w)}
		\LeftLabel{(\parr)}
		\binhavoctwolines{\set{\interof{\interac{y}{\lrecv}{x}{\send}}, \interof{\treeidle(v)},
				\interof{\treeidle(w)}}}
			{\compin{x}{\stateidle} \ast \interac{y}{\lrecv}{x}{\send} \ast \compin{y}{\stateidle}
				\ast \treeidle(v) \ast \treeidle(w)}
			{\interof{\interac{y}{\lrecv}{x}{\send}}}
			{\compin{x}{\stateidle} \ast \interac{y}{\lrecv}{x}{\send} \ast \compin{y}{\stateidle}
				\ast \treeidle(v) \ast \treeidle(w)}
		\LeftLabel{(\ii)}
		\havocfourlines{\set{\interof{\interac{y}{\lrecv}{x}{\send}}, \interof{\interac{v}{\lrecv}{y}{\send}},
			\interof{\interac{w}{\rrecv}{y}{\send}}, \interof{\treeidle(v)}, \interof{\treeidle(w)}}}
			{\compin{x}{\stateidle} \ast \interac{y}{\lrecv}{x}{\send} \ast \compin{y}{\stateidle}
				\ast \interac{v}{\lrecv}{y}{\send} \ast \interac{w}{\rrecv}{y}{\send} \ast \treeidle(v) \ast \treeidle(w)}
			{\interof{\interac{y}{\lrecv}{x}{\send}}}
			{\compin{x}{\stateidle} \ast \interac{y}{\lrecv}{x}{\send} \ast \compin{y}{\stateidle}
				\ast \interac{v}{\lrecv}{y}{\send} \ast \interac{w}{\rrecv}{y}{\send} \ast \treeidle(v) \ast \treeidle(w)}
		\LeftLabel{(\conseq)}
		\havocfourlines{\textbf{(E) }\set{\interof{\interac{y}{\lrecv}{x}{\send}}, \interof{\interac{v}{\lrecv}{y}{\send}},
			\interof{\interac{w}{\rrecv}{y}{\send}}, \interof{\treeidle(v)}, \interof{\treeidle(w)}}}
			{\compin{x}{\stateidle} \ast \interac{y}{\lrecv}{x}{\send} \ast \compin{y}{\stateidle}
				\ast \interac{v}{\lrecv}{y}{\send} \ast \interac{w}{\rrecv}{y}{\send} \ast \treeidle(v) \ast \treeidle(w)}
			{\interof{\interac{y}{\lrecv}{x}{\send}}}
			{\compin{x}{\stateidle} \ast \interac{y}{\lrecv}{x}{\send} \ast \treeidle(y)}
	\end{prooftree}
}

\hspace{3mm}

\subsection{Havoc Invariance of the Predicate Atom $\treenotidle(x)$}

The proof of the invariance of the predicate $\treenotidle(x)$ is
similar to the previous proofs.

{ \tiny
	\begin{prooftree}
		\AxiomC{}
		\LeftLabel{(\epsilonr)}
		\havocline{\emptyset}
			{\compin{x}{\stateleafbusy}}
			{\epsilon}
			{\treenotidle(x)}

		\AxiomC{\textbf{(A)\qquad(B)}\qquad similar to \textbf{(A)}}

		\LeftLabel{(\lu)}
		\binhavocline{\textbf{(3) }\set{\interof{\treenotidle(x)}}}
			{\treenotidle(x)}
			{\interof{\treenotidle(x)}}
			{\treenotidle(x)}
		\LeftLabel{($\ast$)}
		\havocline{\set{\interof{\treenotidle(x)}}}
			{\treenotidle(x)}
			{\interof{\treenotidle(x)}^*}
			{\treenotidle(x)}
	\end{prooftree}
}

\hspace{3mm}

{ \tiny
	\begin{prooftree}
    \def\defaultHypSeparation{\hskip .03in}
    \def\ScoreOverhang{0pt}
    \def\labelSpacing{2pt}
		\AxiomC{\textbf{(C)}}
		\havocfourlines{\set{\interof{\interac{z}{\rrecv}{x}{\send}},
				\interof{\treeidle(y)}, \interof{\treenotidle(z)}}}
			{\compin{x}{\stateleft} \ast \interac{z}{\rrecv}{x}{\send}
				\ast \treeidle(y) \ast \treenotidle(z)}
			{\interof{\interac{z}{\rrecv}{x}{\send}}}
			{\compin{x}{\stateleft} \ast \interac{z}{\rrecv}{x}{\send}
				\ast \treeidle(y) \ast \treenotidle(z)}

		\AxiomC{\textbf{(D)}}
		\havocfourlines{\set{\interof{\interac{z}{\rrecv}{x}{\send}},
				\interof{\treeidle(y)}, \interof{\treenotidle(z)}}}
			{\compin{x}{\stateleft} \ast \interac{z}{\rrecv}{x}{\send}
				\ast \treeidle(y) \ast \treenotidle(z)}
			{\interof{\treeidle(y)}}
			{\compin{x}{\stateleft} \ast \interac{z}{\rrecv}{x}{\send}
				\ast \treeidle(y) \ast \treenotidle(z)}

		\AxiomC{\textbf{(E)}}
		\havocfourlines{\set{\interof{\interac{z}{\rrecv}{x}{\send}},
				\interof{\treeidle(y)}, \interof{\treenotidle(z)}}}
			{\compin{x}{\stateleft} \ast \interac{z}{\rrecv}{x}{\send}
				\ast \treeidle(y) \ast \treenotidle(z)}
			{\interof{\treenotidle(z)}}
			{\compin{x}{\stateleft} \ast \interac{z}{\rrecv}{x}{\send}
				\ast \treeidle(y) \ast \treenotidle(z)}

		\LeftLabel{($\cup$)}
		\trihavoctwolines{\set{\interof{\interac{z}{\rrecv}{x}{\send}},
				\interof{\treeidle(y)}, \interof{\treenotidle(z)}}}
			{\compin{x}{\stateleft} \ast \interac{z}{\rrecv}{x}{\send}
				\ast \treeidle(y) \ast \treenotidle(z)}
			{\interof{\interac{z}{\rrecv}{x}{\send}} \cup
				\interof{\treeidle(y)} \cup \interof{\treenotidle(z)}}
			{\compin{x}{\stateleft} \ast \interac{z}{\rrecv}{x}{\send}
				\ast \treeidle(y) \ast \treenotidle(z)}
		\LeftLabel{(\iidisr)}
		\havoctwolines{\set{\interof{\interac{y}{\lrecv}{x}{\send}}, \interof{\interac{z}{\rrecv}{x}{\send}},
				\interof{\treeidle(y)}, \interof{\treenotidle(z)}}}
			{\compin{x}{\stateleft} \ast \interac{y}{\lrecv}{x}{\send} \ast \interac{z}{\rrecv}{x}{\send}
				\ast \treeidle(y) \ast \treenotidle(z)}
			{\interof{\interac{y}{\lrecv}{x}{\send}} \cup \interof{\interac{z}{\rrecv}{x}{\send}} \cup
				\interof{\treeidle(y)} \cup \interof{\treenotidle(z)}}
			{\compin{x}{\stateleft} \ast \interac{y}{\lrecv}{x}{\send} \ast \interac{z}{\rrecv}{x}{\send}
				\ast \treeidle(y) \ast \treenotidle(z)}
		\LeftLabel{(\conseq)}
		\havoctwolines{\textbf{(A) }\set{\interof{\interac{y}{\lrecv}{x}{\send}}, \interof{\interac{z}{\rrecv}{x}{\send}},
				\interof{\treeidle(y)}, \interof{\treenotidle(z)}}}
			{\compin{x}{\stateleft} \ast \interac{y}{\lrecv}{x}{\send} \ast \interac{z}{\rrecv}{x}{\send}
				\ast \treeidle(y) \ast \treenotidle(z)}
			{\interof{\interac{y}{\lrecv}{x}{\send}} \cup \interof{\interac{z}{\rrecv}{x}{\send}} \cup
				\interof{\treeidle(y)} \cup \interof{\treenotidle(z)}}
			{\treenotidle(x)}
	\end{prooftree}
}

\hspace{3mm}

{ \tiny
	\begin{prooftree}
		\AxiomC{}
		\havoctwolines{\emptyset}
			{\compin{x}{\stateright}}
			{\epsilon}
			{\compin{x}{\stateright}}

		\AxiomC{backlink to \textbf{(2)}}
		\havoctwolines{\set{\interof{\treeidle(y)}}}
			{\treeidle(y)}
			{\interof{\treeidle(y)}}
			{\treeidle(y)}

		\AxiomC{backlink to \textbf{(2)}}
		\havoctwolines{\set{\interof{\treeidle(z)}}}
			{\treeidle(z)}
			{\interof{\treeidle(z)}}
			{\treeidle(z)}
		\LeftLabel{(\parr)}
		\trihavocline{\set{\interof{\treeidle(y)}, \interof{\treeidle(z)}}}
			{\compin{x}{\stateright} \ast \treeidle(y) \ast \treeidle(z)}
			{\interof{\treeidle(y)} \cup \interof{\treeidle(z)}}
			{\compin{x}{\stateright} \ast \treeidle(y) \ast \treeidle(z)}
		\LeftLabel{(\iidisr)}
		\havoctwolines{\set{\interof{\interac{y}{\lrecv}{x}{\send}}, \interof{\interac{z}{\rrecv}{x}{\send}},
				\interof{\treeidle(y)}, \interof{\treeidle(z)}}}
			{\compin{x}{\stateright} \ast \interac{y}{\lrecv}{x}{\send} \ast \interac{z}{\rrecv}{x}{\send}
				\ast \treeidle(y) \ast \treeidle(z)}
			{\interof{\interac{y}{\lrecv}{x}{\send}} \cup \interof{\interac{z}{\rrecv}{x}{\send}} \cup
				\interof{\treeidle(y)} \cup \interof{\treeidle(z)}}
			{\compin{x}{\stateright} \ast \interac{y}{\lrecv}{x}{\send} \ast \interac{z}{\rrecv}{x}{\send}
				\ast \treeidle(y) \ast \treeidle(z)}
		\LeftLabel{(\conseq)}
		\havoctwolines{\textbf{(B) }\set{\interof{\interac{y}{\lrecv}{x}{\send}}, \interof{\interac{z}{\rrecv}{x}{\send}},
				\interof{\treeidle(y)}, \interof{\treeidle(z)}}}
			{\compin{x}{\stateright} \ast \interac{y}{\lrecv}{x}{\send} \ast \interac{z}{\rrecv}{x}{\send}
				\ast \treeidle(y) \ast \treeidle(z)}
			{\interof{\interac{y}{\lrecv}{x}{\send}} \cup \interof{\interac{z}{\rrecv}{x}{\send}} \cup
				\interof{\treeidle(y)} \cup \interof{\treeidle(z)}}
			{\treenotidle(x)}
	\end{prooftree}
}

\hspace{3mm}

{ \tiny
	\begin{prooftree}
		\AxiomC{\textbf{(F)\qquad(G)\qquad(H)\qquad(I)}}
		\LeftLabel{(\lu)}
		\havoctwolines{\set{\interof{\interac{z}{\rrecv}{x}{\send}},
				\interof{\treenotidle(z)}}}
			{\compin{x}{\stateleft} \ast \interac{z}{\rrecv}{x}{\send}
				\ast \treenotidle(z)}
			{\interof{\interac{z}{\rrecv}{x}{\send}}}
			{\compin{x}{\stateleft} \ast \interac{z}{\rrecv}{x}{\send}
				\ast \treenotidle(z)}

		\AxiomC{}
		\LeftLabel{(\epsilonr)}
		\havoctwolines{\set{\interof{\treeidle(y)}}}
			{\treeidle(y)}
			{\epsilon}
			{\treeidle(y)}
		\LeftLabel{(\parr)}
		\binhavoctwolines{\textbf{(C) }\set{\interof{\interac{z}{\rrecv}{x}{\send}},
				\interof{\treeidle(y)}, \interof{\treenotidle(z)}}}
			{\compin{x}{\stateleft} \ast \interac{z}{\rrecv}{x}{\send}
				\ast \treeidle(y) \ast \treenotidle(z)}
			{\interof{\interac{z}{\rrecv}{x}{\send}}}
			{\compin{x}{\stateleft} \ast \interac{z}{\rrecv}{x}{\send}
				\ast \treeidle(y) \ast \treenotidle(z)}
	\end{prooftree}
}

\hspace{3mm}

{ \tiny
	\begin{prooftree}
		\AxiomC{}
		\LeftLabel{(\epsilonr)}
		\havoctwolines{\set{\interof{\interac{z}{\rrecv}{x}{\send}}, \interof{\treenotidle(z)}}}
			{\compin{x}{\stateleft} \ast \interac{z}{\rrecv}{x}{\send} \ast \treenotidle(z)}
			{\epsilon}
			{\compin{x}{\stateleft} \ast \interac{z}{\rrecv}{x}{\send} \ast \treenotidle(z)}

		\AxiomC{backlink to \textbf{(2)}}
		\havoctwolines{\set{\interof{\treeidle(y)}}}
			{\treeidle(y)}
			{\interof{\treeidle(y)}}
			{\treeidle(y)}
		\LeftLabel{(\parr)}
		\binhavoctwolines{\textbf{(D) }\set{\interof{\interac{z}{\rrecv}{x}{\send}},
				\interof{\treeidle(y)}, \interof{\treenotidle(z)}}}
			{\compin{x}{\stateleft} \ast \interac{z}{\rrecv}{x}{\send}
				\ast \treeidle(y) \ast \treenotidle(z)}
			{\interof{\treeidle(y)}}
			{\compin{x}{\stateleft} \ast \interac{z}{\rrecv}{x}{\send}
				\ast \treeidle(y) \ast \treenotidle(z)}
	\end{prooftree}
}

\hspace{3mm}

{ \tiny
	\begin{prooftree}
		\AxiomC{}
		\LeftLabel{(\epsilonr)}
		\havocline{\set{\interof{\treeidle(y)}}}
			{\compin{x}{\stateleft} \ast \treeidle(y)}
			{\epsilon}
			{\compin{x}{\stateleft} \ast \treeidle(y)}

		\AxiomC{backlink to \textbf{(3)}}
		\havocline{\set{\interof{\treenotidle(z)}}}
			{\treenotidle(z)}
			{\interof{\treenotidle(z)}}
			{\treenotidle(z)}
		\LeftLabel{(\parr)}
		\binhavocline{\set{\interof{\treeidle(y)}, \interof{\treenotidle(z)}}}
			{\compin{x}{\stateleft} \ast \treeidle(y) \ast \treenotidle(z)}
			{\interof{\treenotidle(z)}}
			{\compin{x}{\stateleft} \ast \treeidle(y) \ast \treenotidle(z)}
		\LeftLabel{(\ii)}
		\havoctwolines{\textbf{(E) }\set{\interof{\interac{z}{\rrecv}{x}{\send}},
				\interof{\treeidle(y)}, \interof{\treenotidle(z)}}}
			{\compin{x}{\stateleft} \ast \interac{z}{\rrecv}{x}{\send}
				\ast \treeidle(y) \ast \treenotidle(z)}
			{\interof{\treenotidle(z)}}
			{\compin{x}{\stateleft} \ast \interac{z}{\rrecv}{x}{\send}
				\ast \treeidle(y) \ast \treenotidle(z)}
	\end{prooftree}
}

\hspace{3mm}

{ \tiny
	\begin{prooftree}
		\AxiomC{}
		\LeftLabel{(\inter)}
		\havocline{\set{\interof{\interac{z}{\rrecv}{x}{\send}}}}
			{\compin{x}{\stateleft} \ast \interac{z}{\rrecv}{x}{\send}
				\ast \compin{z}{\stateleafbusy}}
			{\interof{\interac{z}{\rrecv}{x}{\send}}}
			{\compin{x}{\stateright} \ast \interac{z}{\rrecv}{x}{\send}
				\ast \compin{z}{\stateleafidle}}
		\LeftLabel{(\conseq)}
		\havocline{\textbf{(F) }\set{\interof{\interac{z}{\rrecv}{x}{\send}}}}
			{\compin{x}{\stateleft} \ast \interac{z}{\rrecv}{x}{\send}
				\ast \compin{z}{\stateleafbusy}}
			{\interof{\interac{z}{\rrecv}{x}{\send}}}
			{\compin{x}{\stateleft} \ast \interac{z}{\rrecv}{x}{\send}
				\ast \treenotidle(z)}
	\end{prooftree}
}

\hspace{3mm}

{ \tiny
	\begin{prooftree}
		\AxiomC{}
		\LeftLabel{(\disr)}
		\havoctwolines{\set{\interof{\interac{z}{\rrecv}{x}{\send}}}}
			{\compin{x}{\stateleft} \ast \interac{z}{\rrecv}{x}{\send}
				\ast \compin{z}{\stateleft}}
			{\interof{\interac{z}{\rrecv}{x}{\send}}}
			{\predfalse}
		\LeftLabel{(\conseq)}
		\havoctwolines{\set{\interof{\interac{z}{\rrecv}{x}{\send}}}}
			{\compin{x}{\stateleft} \ast \interac{z}{\rrecv}{x}{\send}
				\ast \compin{z}{\stateleft}}
			{\interof{\interac{z}{\rrecv}{x}{\send}}}
			{\compin{x}{\stateleft} \ast \interac{z}{\rrecv}{x}{\send}
				\ast \compin{z}{\stateleft}}

		\AxiomC{}
		\LeftLabel{(\epsilonr)}
		\havoctwolines{\set{\interof{\treeidle(v)}, \interof{\treenotidle(w)}}}
			{\treeidle(v) \ast \treenotidle(w)}
			{\epsilon}
			{\treeidle(v) \ast \treenotidle(w)}
		\LeftLabel{(\parr)}
		\binhavoctwolines{\set{\interof{\interac{z}{\rrecv}{x}{\send}},
				\interof{\treeidle(v)}, \interof{\treenotidle(w)}}}
			{\compin{x}{\stateleft} \ast \interac{z}{\rrecv}{x}{\send}
				\ast \compin{z}{\stateleft} \ast \treeidle(v) \ast \treenotidle(w)}
			{\interof{\interac{z}{\rrecv}{x}{\send}}}
			{\compin{x}{\stateleft} \ast \interac{z}{\rrecv}{x}{\send}
				\ast \compin{z}{\stateleft} \ast \treeidle(v) \ast \treenotidle(w)}
		\LeftLabel{(\ii)}
		\havocfourlines{\set{\interof{\interac{z}{\rrecv}{x}{\send}},
				\interof{\interac{v}{\lrecv}{z}{\send}}, \interof{\interac{w}{\rrecv}{z}{\send}},
				\interof{\treeidle(v)}, \interof{\treenotidle(w)}}}
			{\compin{x}{\stateleft} \ast \interac{z}{\rrecv}{x}{\send}
				\ast \compin{z}{\stateleft} \ast \interac{v}{\lrecv}{z}{\send}
				\ast \interac{w}{\rrecv}{z}{\send} \ast \treeidle(v) \ast \treenotidle(w)}
			{\interof{\interac{z}{\rrecv}{x}{\send}}}
			{\compin{x}{\stateleft} \ast \interac{z}{\rrecv}{x}{\send}
				\ast \compin{z}{\stateleft} \ast \interac{v}{\lrecv}{z}{\send}
				\ast \interac{w}{\rrecv}{z}{\send} \ast \treeidle(v) \ast \treenotidle(w)}
		\LeftLabel{(\conseq)}
		\havocfourlines{\textbf{(G) }\set{\interof{\interac{z}{\rrecv}{x}{\send}},
				\interof{\interac{v}{\lrecv}{z}{\send}}, \interof{\interac{w}{\rrecv}{z}{\send}},
				\interof{\treeidle(v)}, \interof{\treenotidle(w)}}}
			{\compin{x}{\stateleft} \ast \interac{z}{\rrecv}{x}{\send}
				\ast \compin{z}{\stateleft} \ast \interac{v}{\lrecv}{z}{\send}
				\ast \interac{w}{\rrecv}{z}{\send} \ast \treeidle(v) \ast \treenotidle(w)}
			{\interof{\interac{z}{\rrecv}{x}{\send}}}
			{\compin{x}{\stateleft} \ast \interac{z}{\rrecv}{x}{\send}
				\ast \treenotidle(z)}
	\end{prooftree}
}

\hspace{3mm}

{ \tiny
	\begin{prooftree}
		\AxiomC{}
		\LeftLabel{(\inter)}
		\havoctwolines{\set{\interof{\interac{z}{\rrecv}{x}{\send}}}}
			{\compin{x}{\stateleft} \ast \interac{z}{\rrecv}{x}{\send}
				\ast \compin{z}{\stateright}}
			{\interof{\interac{z}{\rrecv}{x}{\send}}}
			{\compin{x}{\stateright} \ast \interac{z}{\rrecv}{x}{\send}
				\ast \compin{z}{\stateidle}}
		\LeftLabel{(\conseq)}
		\havoctwolines{\set{\interof{\interac{z}{\rrecv}{x}{\send}}}}
			{\compin{x}{\stateleft} \ast \interac{z}{\rrecv}{x}{\send}
				\ast \compin{z}{\stateright}}
			{\interof{\interac{z}{\rrecv}{x}{\send}}}
			{\compin{x}{\stateright} \ast \interac{z}{\rrecv}{x}{\send}
				\ast \compin{z}{\stateright}}

		\AxiomC{}
		\LeftLabel{(\epsilonr)}
		\havoctwolines{\set{\interof{\treeidle(v)}, \interof{\treeidle(w)}}}
			{\treeidle(v) \ast \treeidle(w)}
			{\epsilon}
			{\treeidle(v) \ast \treeidle(w)}
		\LeftLabel{(\parr)}
		\binhavoctwolines{\set{\interof{\interac{z}{\rrecv}{x}{\send}},
				\interof{\treeidle(v)}, \interof{\treeidle(w)}}}
			{\compin{x}{\stateleft} \ast \interac{z}{\rrecv}{x}{\send}
				\ast \compin{z}{\stateright} \ast \treeidle(v) \ast \treeidle(w)}
			{\interof{\interac{z}{\rrecv}{x}{\send}}}
			{\compin{x}{\stateleft} \ast \interac{z}{\rrecv}{x}{\send}
				\ast \compin{z}{\stateright} \ast \treeidle(v) \ast \treeidle(w)}
		\LeftLabel{(\ii)}
		\havocfourlines{\set{\interof{\interac{z}{\rrecv}{x}{\send}},
				\interof{\interac{v}{\lrecv}{z}{\send}}, \interof{\interac{w}{\rrecv}{z}{\send}},
				\interof{\treeidle(v)}, \interof{\treeidle(w)}}}
			{\compin{x}{\stateleft} \ast \interac{z}{\rrecv}{x}{\send}
				\ast \compin{z}{\stateright} \ast \interac{v}{\lrecv}{z}{\send}
				\ast \interac{w}{\rrecv}{z}{\send} \ast \treeidle(v) \ast \treeidle(w)}
			{\interof{\interac{z}{\rrecv}{x}{\send}}}
			{\compin{x}{\stateleft} \ast \interac{z}{\rrecv}{x}{\send}
				\ast \compin{z}{\stateright} \ast \interac{v}{\lrecv}{z}{\send}
				\ast \interac{w}{\rrecv}{z}{\send} \ast \treeidle(v) \ast \treeidle(w)}
		\LeftLabel{(\conseq)}
		\havocfourlines{\textbf{(H) }\set{\interof{\interac{z}{\rrecv}{x}{\send}},
				\interof{\interac{v}{\lrecv}{z}{\send}}, \interof{\interac{w}{\rrecv}{z}{\send}},
				\interof{\treeidle(v)}, \interof{\treeidle(w)}}}
			{\compin{x}{\stateleft} \ast \interac{z}{\rrecv}{x}{\send}
				\ast \compin{z}{\stateright} \ast \interac{v}{\lrecv}{z}{\send}
				\ast \interac{w}{\rrecv}{z}{\send} \ast \treeidle(v) \ast \treeidle(w)}
			{\interof{\interac{z}{\rrecv}{x}{\send}}}
			{\compin{x}{\stateleft} \ast \interac{z}{\rrecv}{x}{\send}
				\ast \treeidle(z)}
	\end{prooftree}
}

\hspace{3mm}

{ \tiny
	\begin{prooftree}
		\AxiomC{}
		\LeftLabel{(\disr)}
		\havoctwolines{\set{\interof{\interac{z}{\rrecv}{x}{\send}}}}
			{\compin{x}{\stateleft} \ast \interac{z}{\rrecv}{x}{\send}
				\ast \compin{z}{\stateidle}}
			{\interof{\interac{z}{\rrecv}{x}{\send}}}
			{\predfalse}
		\LeftLabel{(\conseq)}
		\havoctwolines{\set{\interof{\interac{z}{\rrecv}{x}{\send}}}}
			{\compin{x}{\stateleft} \ast \interac{z}{\rrecv}{x}{\send}
				\ast \compin{z}{\stateidle}}
			{\interof{\interac{z}{\rrecv}{x}{\send}}}
			{\compin{x}{\stateleft} \ast \interac{z}{\rrecv}{x}{\send}
				\ast \compin{z}{\stateidle}}

		\AxiomC{}
		\LeftLabel{(\epsilonr)}
		\havoctwolines{\set{\interof{\treenotidle(v)}, \interof{\treenotidle(w)}}}
			{\treenotidle(v) \ast \treenotidle(w)}
			{\epsilon}
			{\treenotidle(v) \ast \treenotidle(w)}
		\LeftLabel{(\parr)}
		\binhavoctwolines{\set{\interof{\interac{z}{\rrecv}{x}{\send}},
				\interof{\treenotidle(v)}, \interof{\treenotidle(w)}}}
			{\compin{x}{\stateleft} \ast \interac{z}{\rrecv}{x}{\send}
				\ast \compin{z}{\stateidle} \ast \treenotidle(v) \ast \treenotidle(w)}
			{\interof{\interac{z}{\rrecv}{x}{\send}}}
			{\compin{x}{\stateleft} \ast \interac{z}{\rrecv}{x}{\send}
				\ast \compin{z}{\stateidle} \ast \treenotidle(v) \ast \treenotidle(w)}
		\LeftLabel{(\ii)}
		\havocfourlines{\set{\interof{\interac{z}{\rrecv}{x}{\send}},
				\interof{\interac{v}{\lrecv}{z}{\send}}, \interof{\interac{w}{\rrecv}{z}{\send}},
				\interof{\treenotidle(v)}, \interof{\treenotidle(w)}}}
			{\compin{x}{\stateleft} \ast \interac{z}{\rrecv}{x}{\send}
				\ast \compin{z}{\stateidle} \ast \interac{v}{\lrecv}{z}{\send}
				\ast \interac{w}{\rrecv}{z}{\send} \ast \treenotidle(v) \ast \treenotidle(w)}
			{\interof{\interac{z}{\rrecv}{x}{\send}}}
			{\compin{x}{\stateleft} \ast \interac{z}{\rrecv}{x}{\send}
				\ast \compin{z}{\stateidle} \ast \interac{v}{\lrecv}{z}{\send}
				\ast \interac{w}{\rrecv}{z}{\send} \ast \treenotidle(v) \ast \treenotidle(w)}
		\LeftLabel{(\conseq)}
		\havocfourlines{\textbf{(I) }\set{\interof{\interac{z}{\rrecv}{x}{\send}},
				\interof{\interac{v}{\lrecv}{z}{\send}}, \interof{\interac{w}{\rrecv}{z}{\send}},
				\interof{\treenotidle(v)}, \interof{\treenotidle(w)}}}
			{\compin{x}{\stateleft} \ast \interac{z}{\rrecv}{x}{\send}
				\ast \compin{z}{\stateidle} \ast \interac{v}{\lrecv}{z}{\send}
				\ast \interac{w}{\rrecv}{z}{\send} \ast \treenotidle(v) \ast \treenotidle(w)}
			{\interof{\interac{z}{\rrecv}{x}{\send}}}
			{\compin{x}{\stateleft} \ast \interac{z}{\rrecv}{x}{\send}
				\ast \treenotidle(z)}
	\end{prooftree}
}

\hspace{3mm}

\subsection{Havoc Invariance of the Predicate Atom $\treeseg(x,u)$}

Lastly, we prove the invariance of the predicate $\treeseg(x,u)$
via the rules in Fig.~\ref{fig:havoc-rules}.

{ \tiny
	\begin{prooftree}
		\AxiomC{}
		\LeftLabel{(\epsilonr)}
		\havoctwolines{\emptyset}
			{\company{x}}
			{\epsilon}
			{\treeseg(x,u)}

		\AxiomC{\textbf{(A)}}
		\havocfourlines{\set{\interof{\interac{y}{\lrecv}{x}{\send}}, \interof{\interac{z}{\rrecv}{x}{\send}},
			\interof{\treeseg(y,u)}, \interof{\tree(z)}}}
			{\company{x} \ast \interac{y}{\lrecv}{x}{\send} \ast \interac{z}{\rrecv}{x}{\send} \ast
				\treeseg(y,u) \ast \tree(z)}
			{\interof{\interac{y}{\lrecv}{x}{\send}} \cup \interof{\interac{z}{\rrecv}{x}{\send}} \cup
				\interof{\treeseg(y,u)} \cup \interof{\tree(z)}}
			{\treeseg(x,u)}

		\AxiomC{similar to \textbf{(A)}}
		\havocfourlines{\set{\interof{\interac{y}{\lrecv}{x}{\send}}, \interof{\interac{z}{\rrecv}{x}{\send}},
			\interof{\tree(y)}, \interof{\treeseg(z,u)}}}
			{\company{x} \ast \interac{y}{\lrecv}{x}{\send} \ast \interac{z}{\rrecv}{x}{\send} \ast
				\tree(y) \ast \treeseg(z,u)}
			{\interof{\interac{y}{\lrecv}{x}{\send}} \cup \interof{\interac{z}{\rrecv}{x}{\send}} \cup
				\interof{\tree(y)} \cup \interof{\treeseg(z,u)}}
			{\treeseg(x,u)}

		\LeftLabel{(\lu)}
		\trihavocline{\textbf{(4) }\set{\interof{\treeseg(x,u)}}}
			{\treeseg(x,u)}
			{\interof{\treeseg(x,u)}}
			{\treeseg(x,u)}
		\LeftLabel{($\ast$)}
		\havocline{\set{\interof{\treeseg(x,u)}}}
			{\treeseg(x,u)}
			{\interof{\treeseg(x,u)}^*}
			{\treeseg(x,u)}
	\end{prooftree}
}

\hspace{3mm}

{ \tiny
	\begin{prooftree}
		\AxiomC{\textbf{(B)}\qquad similar to \textbf{(B)\qquad (C)\qquad (D)}}
		\LeftLabel{($\cup$)}
		\havoctwolines{\set{\interof{\interac{y}{\lrecv}{x}{\send}}, \interof{\interac{z}{\rrecv}{x}{\send}},
			\interof{\treeseg(y,u)}, \interof{\tree(z)}}}
			{\company{x} \ast \interac{y}{\lrecv}{x}{\send} \ast \interac{z}{\rrecv}{x}{\send} \ast
				\treeseg(y,u) \ast \tree(z)}
			{\interof{\interac{y}{\lrecv}{x}{\send}} \cup \interof{\interac{z}{\rrecv}{x}{\send}} \cup
				\interof{\treeseg(y,u)} \cup \interof{\tree(z)}}
			{\company{x} \ast \interac{y}{\lrecv}{x}{\send} \ast \interac{z}{\rrecv}{x}{\send} \ast
				\treeseg(y,u) \ast \tree(z)}
		\LeftLabel{(\conseq)}
		\havoctwolines{\textbf{(A) }\set{\interof{\interac{y}{\lrecv}{x}{\send}}, \interof{\interac{z}{\rrecv}{x}{\send}},
			\interof{\treeseg(y,u)}, \interof{\tree(z)}}}
			{\company{x} \ast \interac{y}{\lrecv}{x}{\send} \ast \interac{z}{\rrecv}{x}{\send} \ast
				\treeseg(y,u) \ast \tree(z)}
			{\interof{\interac{y}{\lrecv}{x}{\send}} \cup \interof{\interac{z}{\rrecv}{x}{\send}} \cup
				\interof{\treeseg(y,u)} \cup \interof{\tree(z)}}
			{\treeseg(x,u)}
	\end{prooftree}
}

\hspace{3mm}

{ \tiny
	\begin{prooftree}
		\AxiomC{\textbf{(E)\qquad(F)}\qquad similar to \textbf{(F)}}
		\LeftLabel{(\lu)}
		\havoctwolines{\set{\interof{\interac{y}{\lrecv}{x}{\send}},
			\interof{\treeseg(y,u)}}}
			{\company{x} \ast \interac{y}{\lrecv}{x}{\send} \ast \treeseg(y,u)}
			{\interof{\interac{y}{\lrecv}{x}{\send}}}
			{\company{x} \ast \interac{y}{\lrecv}{x}{\send} \ast \treeseg(y,u)}

		\AxiomC{}
		\LeftLabel{(\epsilonr)}
		\havocline{\set{\interof{\tree(z)}}}
			{\tree(z)}
			{\epsilon}
			{\tree(z)}
		\LeftLabel{(\parr)}
		\binhavocline{\set{\interof{\interac{y}{\lrecv}{x}{\send}},
			\interof{\treeseg(y,u)}, \interof{\tree(z)}}}
			{\company{x} \ast \interac{y}{\lrecv}{x}{\send} \ast \treeseg(y,u) \ast \tree(z)}
			{\interof{\interac{y}{\lrecv}{x}{\send}}}
			{\company{x} \ast \interac{y}{\lrecv}{x}{\send} \ast \treeseg(y,u) \ast \tree(z)}
		\LeftLabel{(\ii)}
		\havoctwolines{\textbf{(B) }\set{\interof{\interac{y}{\lrecv}{x}{\send}}, \interof{\interac{z}{\rrecv}{x}{\send}},
			\interof{\treeseg(y,u)}, \interof{\tree(z)}}}
			{\company{x} \ast \interac{y}{\lrecv}{x}{\send} \ast \interac{z}{\rrecv}{x}{\send} \ast
				\treeseg(y,u) \ast \tree(z)}
			{\interof{\interac{y}{\lrecv}{x}{\send}}}
			{\company{x} \ast \interac{y}{\lrecv}{x}{\send} \ast \interac{z}{\rrecv}{x}{\send} \ast
				\treeseg(y,u) \ast \tree(z)}
	\end{prooftree}
}

\hspace{3mm}

{ \tiny
	\begin{prooftree}
		\AxiomC{}
		\LeftLabel{(\epsilonr)}
		\havoctwolines{\set{\interof{\tree(z)}}}
			{\company{x} \ast \tree(z)}
			{\epsilon}
			{\company{x} \ast \tree(z)}

		\AxiomC{backlink to \textbf{(4)}}
		\havoctwolines{\set{\interof{\treeseg(y,u)}}}
			{\treeseg(y,u)}
			{\interof{\treeseg(y,u)}}
			{\treeseg(y,u)}
		\LeftLabel{(\parr)}
		\binhavoctwolines{\set{\interof{\treeseg(y,u)}, \interof{\tree(z)}}}
			{\company{x} \ast \treeseg(y,u) \ast \tree(z)}
			{\interof{\treeseg(y,u)}}
			{\company{x} \ast \treeseg(y,u) \ast \tree(z)}
		\LeftLabel{(\ii)}
		\havoctwolines{\textbf{(C) }\set{\interof{\interac{y}{\lrecv}{x}{\send}}, \interof{\interac{z}{\rrecv}{x}{\send}},
			\interof{\treeseg(y,u)}, \interof{\tree(z)}}}
			{\company{x} \ast \interac{y}{\lrecv}{x}{\send} \ast \interac{z}{\rrecv}{x}{\send} \ast
				\treeseg(y,u) \ast \tree(z)}
			{\interof{\treeseg(y,u)}}
			{\company{x} \ast \interac{y}{\lrecv}{x}{\send} \ast \interac{z}{\rrecv}{x}{\send} \ast
				\treeseg(y,u) \ast \tree(z)}
	\end{prooftree}
}

\hspace{3mm}

{ \tiny
	\begin{prooftree}
		\AxiomC{}
		\LeftLabel{(\epsilonr)}
		\havocline{\set{\interof{\treeseg(y,u)}}}
			{\company{x} \ast \treeseg(y,u)}
			{\epsilon}
			{\company{x} \ast \treeseg(y,u)}

		\AxiomC{backlink to \textbf{(1)}}
		\havocline{\set{\interof{\tree(z)}}}
			{\tree(z)}
			{\interof{\tree(z)}}
			{\tree(z)}
		\LeftLabel{(\parr)}
		\binhavocline{\set{\interof{\treeseg(y,u)}, \interof{\tree(z)}}}
			{\company{x} \ast \treeseg(y,u) \ast \tree(z)}
			{\interof{\tree(z)}}
			{\company{x} \ast \treeseg(y,u) \ast \tree(z)}
		\LeftLabel{(\ii)}
		\havoctwolines{\textbf{(D) }\set{\interof{\interac{y}{\lrecv}{x}{\send}}, \interof{\interac{z}{\rrecv}{x}{\send}},
			\interof{\treeseg(y,u)}, \interof{\tree(z)}}}
			{\company{x} \ast \interac{y}{\lrecv}{x}{\send} \ast \interac{z}{\rrecv}{x}{\send} \ast
				\treeseg(y,u) \ast \tree(z)}
			{\interof{\tree(z)}}
			{\company{x} \ast \interac{y}{\lrecv}{x}{\send} \ast \interac{z}{\rrecv}{x}{\send} \ast
				\treeseg(y,u) \ast \tree(z)}
	\end{prooftree}
}

\hspace{3mm}

{ \tiny
	\begin{prooftree}
    \def\defaultHypSeparation{\hskip .1in}
    \def\ScoreOverhang{0pt}
    \def\labelSpacing{4pt}

		\AxiomC{}
		\LeftLabel{(\inter)}
		\havoctwolines{\set{\interof{\interac{y}{\lrecv}{x}{\send}}}}
			{\compin{x}{\stateidle} \ast \interac{y}{\lrecv}{x}{\send} \ast \compin{y}{p}}
			{\interof{\interac{y}{\lrecv}{x}{\send}}}
			{\compin{x}{\stateleft} \ast \interac{y}{\lrecv}{x}{\send} \ast \compin{y}{p'}}
		\LeftLabel{(\conseq)}
		\RightLabel{\shortstack[l]{for ($p = \stateright$, \\ $p' = \stateidle$) \\
			or ($p = \stateleafbusy$,\\ $p' = \stateleafidle$)}}
		\havoctwolines{\set{\interof{\interac{y}{\lrecv}{x}{\send}}}}
			{\compin{x}{\stateidle} \ast \interac{y}{\lrecv}{x}{\send} \ast \compin{y}{p}}
			{\interof{\interac{y}{\lrecv}{x}{\send}}}
			{\company{x} \ast \interac{y}{\lrecv}{x}{\send} \ast \treeseg(y,u)}

		\AxiomC{}
		\LeftLabel{(\disr)}
		\havoctwolines{\set{\interof{\interac{y}{\lrecv}{x}{\send}}}}
			{\compin{x}{q} \ast \interac{y}{\lrecv}{x}{\send} \ast \compin{y}{p}}
			{\interof{\interac{y}{\lrecv}{x}{\send}}}
			{\predfalse}
		\LeftLabel{(\conseq)}
		\RightLabel{\shortstack[l]{for $(q,p) \neq (\stateidle,\stateright),$ \\
			$ (\stateidle,\stateleafbusy)$}}
		\havoctwolines{\set{\interof{\interac{y}{\lrecv}{x}{\send}}}}
			{\company{x} \ast \interac{y}{\lrecv}{x}{\send} \ast \company{y}}
			{\interof{\interac{y}{\lrecv}{x}{\send}}}
			{\company{x} \ast \interac{y}{\lrecv}{x}{\send} \ast \treeseg(y,u)}

		\LeftLabel{($\vee$)}
		\binhavoctwolines{\textbf{(E) }\set{\interof{\interac{y}{\lrecv}{x}{\send}}}}
			{\company{x} \ast \interac{y}{\lrecv}{x}{\send} \ast \company{y}}
			{\interof{\interac{y}{\lrecv}{x}{\send}}}
			{\company{x} \ast \interac{y}{\lrecv}{x}{\send} \ast \treeseg(y,u)}
	\end{prooftree}
}

\hspace{3mm}

{ \tiny
	\begin{prooftree}
		\AxiomC{backlink to \textbf{(E)}}
		\havoctwolines{\set{\interof{\interac{y}{\lrecv}{x}{\send}}}}
			{\company{x} \ast \interac{y}{\lrecv}{x}{\send} \ast \company{y}}
			{\interof{\interac{y}{\lrecv}{x}{\send}}}
			{\company{x} \ast \interac{y}{\lrecv}{x}{\send} \ast \company{y}}

		\AxiomC{}
		\LeftLabel{(\epsilonr)}
		\havoctwolines{\set{\interof{\treeseg(v,u)}, \interof{\tree(w)}}}
			{\treeseg(v,u) \ast \tree(w)}
			{\epsilon}
			{\treeseg(v,u) \ast \tree(w)}
		\LeftLabel{(\parr)}
		\binhavoctwolines{\set{\interof{\interac{y}{\lrecv}{x}{\send}},
				\interof{\treeseg(v,u)}, \interof{\tree(w)}}}
			{\company{x} \ast \interac{y}{\lrecv}{x}{\send} \ast \company{y}
				\ast \treeseg(v,u) \ast \tree(w)}
			{\interof{\interac{y}{\lrecv}{x}{\send}}}
			{\company{x} \ast \interac{y}{\lrecv}{x}{\send} \ast \company{y}
				\ast \treeseg(v,u) \ast \tree(w)}
		\LeftLabel{(\ii)}
		\havoctwolines{\set{\interof{\interac{y}{\lrecv}{x}{\send}},
				\interof{\interac{v}{\lrecv}{y}{\send}}, \interof{\interac{w}{\rrecv}{y}{\send}}, \interof{\treeseg(v,u)},
				\interof{\tree(w)}}}
			{\company{x} \ast \interac{y}{\lrecv}{x}{\send} \ast \company{y}
				\ast \interac{v}{\lrecv}{y}{\send} \ast \interac{w}{\rrecv}{y}{\send} \ast \treeseg(v,u) \ast \tree(w)}
			{\interof{\interac{y}{\lrecv}{x}{\send}}}
			{\company{x} \ast \interac{y}{\lrecv}{x}{\send} \ast \company{y}
				\ast \interac{v}{\lrecv}{y}{\send} \ast \interac{w}{\rrecv}{y}{\send} \ast \treeseg(v,u) \ast \tree(w)}
		\LeftLabel{(\conseq)}
		\havoctwolines{\textbf{(F) }\set{\interof{\interac{y}{\lrecv}{x}{\send}},
				\interof{\interac{v}{\lrecv}{y}{\send}}, \interof{\interac{w}{\rrecv}{y}{\send}}, \interof{\treeseg(v,u)},
				\interof{\tree(w)}}}
			{\company{x} \ast \interac{y}{\lrecv}{x}{\send} \ast \company{y}
				\ast \interac{v}{\lrecv}{y}{\send} \ast \interac{w}{\rrecv}{y}{\send} \ast \treeseg(v,u) \ast \tree(w)}
			{\interof{\interac{y}{\lrecv}{x}{\send}}}
			{\company{x} \ast \interac{y}{\lrecv}{x}{\send} \ast \treeseg(y,u)}
	\end{prooftree}
}

\hspace{3mm}

{ \tiny
	\begin{prooftree}
		\AxiomC{}
	\end{prooftree}
}

\hspace{3mm}

{ \tiny
	\begin{prooftree}
		\AxiomC{}
	\end{prooftree}
}

\hspace{3mm}

\end{document}